\newtheorem{bigthm}{Theorem}
\theoremstyle{plain}               
\newtheorem{thm}{Theorem}[section]
\newtheorem{fact}{Fact}[section]
\newtheorem{lem}{Lemma}[section]
\newtheorem{cor}{Corollary}[section]
\newtheorem{prop}{Proposition}[section]
\newtheorem{defi}{Definition}[section]
\theoremstyle{remark}
\newcommand*{\fancyrefthmlabelprefix}{thm}
\newcommand*{\fancyreflemlabelprefix}{lem}
\newcommand*{\fancyrefcorlabelprefix}{cor}
\newcommand*{\fancyrefdefilabelprefix}{defi}
\newcommand*{\fancyrefalglabelprefix}{alg}
\newcommand*{\frefalgname}{algorithm}
\newcommand*{\Frefalgname}{Algorithm}
\newcommand*{\fancyrefapplabelprefix}{app}
\newcommand*{\frefappname}{appendix}
\newcommand*{\Frefappname}{Appendix}
\definecolor{Green}{HTML}{00AD69}  %
\def\beq{\begin{equation}}
\def\eeq{\end{equation}}
\def\bq{\begin{quote}}
\def\eq{\end{quote}}
\def\ben{\begin{enumerate}}
\def\een{\end{enumerate}}
\def\bit{\begin{itemize}}
\def\eit{\end{itemize}}
\def\lb{\left(}
\def\rb{\right)}
\def\l|{\left|}
\def\r|{\right|}
\newcommand\C{\mathbbm{C}}
\newcommand\R{\mathbbm{R}}
\newcommand{\ketbra}[1]{|#1\rangle\!\langle#1|}
\newcommand{\tr}[1]{\operatorname{tr}\lb#1\rb}
\newcommand{\rl}[2]{S\lb#1\|#2\rb}
\newcommand{\setC}{\mathbb{C}}
\newcommand{\cO}{\mathcal{O}}
\newcommand{\tcO}{\tilde{\mathcal{O}}}
\newcommand{\sign}{\operatorname{sign}}
\begin{document}

\title{Faster quantum and classical SDP approximations for quadratic binary optimization}

\author[1, 2,3]{Fernando G.S L. Brand\~{a}o,}
\author[1,2,4]{Richard Kueng}
\author[5,6]{and Daniel Stilck Fran\c{c}a}

\affil[1]{\small Institute for Quantum Information and Matter, California Institute of Technology, Pasadena,~CA,~USA}

\affil[2] {\small Department of Computing and Mathematical Sciences, California Institute of Technology, Pasadena,~CA,~USA}

\affil[3]{\small AWS Center for Quantum Computing, Pasadena,~CA,~USA}

\affil[4]{\small Institute for Integrated Circuits, Johannes Kepler University Linz, Austria}

\affil[5]{\small QMATH, Department of Mathematical Sciences, University of Copenhagen, Denmark}

\affil[6]{\small Department of Mathematics, Technische Universit\"at M\"unchen, Germany}

\normalsize

\date{}

\maketitle

\begin{abstract}

We give a quantum speedup for solving the canonical semidefinite programming relaxation for binary quadratic optimization. 
This class of relaxations for combinatorial optimization has so far eluded quantum speedups.
Our methods combine ideas from quantum Gibbs sampling and matrix exponent updates. 
A de-quantization of the algorithm also leads to a faster classical solver. 
For generic instances, our quantum solver gives a nearly quadratic speedup over state-of-the-art algorithms. Such instances include approximating the ground state of spin glasses and \textsc{MaxCut} on Erd\"{o}s-R\'enyi graphs.
We also provide an efficient randomized rounding procedure that converts approximately optimal SDP solutions into approximations of the original quadratic optimization problem.
\end{abstract}

\section{Introduction}

Quadratic optimization problems with binary constraints are an important class of optimization problems. Given a (real-valued) symmetric $n \times n$ matrix $A$ the task is to compute
\begin{align}
\textrm{maximize} \quad &  \langle x| A |x \rangle  \quad \textrm{subject to} \quad x \in \left\{ \pm 1 \right\}^n && (\textsc{MaxQP}). \label{eq:maxqp}
\end{align}
This problem arises naturally in many applications across various scientific disciplines, e.g.\   image compression \cite{Leary1983}, latent semantic indexing \cite{Kolda1998}, community detection~\cite{montanari_semidefinite_2016}, correlation clustering \cite{Charikar2004,Mei2017} and structured principal component analysis, see e.g.\ \cite{Kueng2019a,Kueng2019b} and references therein.
Mathematically, \textsc{MaxQP}s \eqref{eq:maxqp} are closely related to computing the $\infty \to 1$ norm of matrices. This norm, in turn, closely relates to the \emph{cut norm} (replace $x \in \left\{\pm 1 \right\}^n$ by $x \in \left\{0,1 \right\}^n$), as both norms can only differ by a constant factor. These norms are an important concept in theoretical computer science \cite{Frieze1999,Alon2003,Alon2006},
since problems such as identifying the largest cut in a graph (\textsc{MaxCut}) can be naturally formulated as instances of these norms.
This connection highlights that optimal solutions of \eqref{eq:maxqp} are \textsc{NP}-hard to compute in the worst case. 
Despite their intrinsic hardness, quadratic optimization problems do admit a canonical \emph{semidefinite programming} (SDP) relaxation\footnote{Rewrite the objective function in \eqref{eq:maxqp} as $\tr{A |x \rangle \! \langle x|}$ and note that every matrix $X = |x \rangle \! \langle x|$ with $x \in \left\{ \pm 1 \right\}^n$ has diagonal entries equal to one and is psd with unit rank. Dropping the (non-convex) rank constraint produces a convex relaxation.}
 \cite{Goemans1995}:
\begin{align}
\textrm{maximize} \quad& \tr{A X} \quad \textrm{subject to} \quad \mathrm{diag}(X)=\mathbf{1},\; X \geq 0 & (\textsc{MaxQP SDP}) \label{eq:maxqpsdpnormal}
\end{align}
Here, $X \geq 0$ indicates that the $n \times n$ matrix $X$ is positive semidefinite (psd), i.e.\ $\langle y|X|y \rangle \geq 0$ for all $y \in \mathbb{R}^n$. 
SDPs comprise a rich class of convex optimization problems that can be solved efficiently under mild assumptions, e.g.\ by using interior-point methods \cite{Boyd2004}. 

Perhaps surprisingly, the optimal value of the \textsc{MaxQP} relaxation often provides a constant factor approximation to the optimal value of the original quadratic problem. However, the associated optimal matrix $X^\sharp$ is typically \emph{not} in one-to-one correspondence with an optimal feasible point $x^\sharp \in \left\{ \pm 1 \right\}^n$ of the original problem \eqref{eq:maxqp}.
Several randomized rounding procedures have been devised to overcome this drawback since the pioneering work of~\cite{Goemans1995}. These transform $X^\sharp$ into a random binary vector $\tilde{x} \in \left\{ \pm 1 \right\}^n$ that achieves $\langle \tilde{x}|A | \tilde{x} \rangle \geq \gamma \max_{x \in \left\{ \pm 1 \right\}^n}\langle x|A|x \rangle$ in expectation for some constant $\gamma$. Explicit  values of $\gamma$ are known for instance for the case of $A$ being the adjacency matrix of a graph~\cite{Goemans1995} or positive semidefinite~\cite{Alon2006}.

Although tractable in a theoretical sense, the runtime associated with general-purpose SDP solvers quickly becomes prohibitively expensive in both memory and time. This practical bottleneck has spurred considerable attention in the theoretical computer science community over the past decades \cite{Arora2005,Burer2005,Boumal2016,Tropp2017}.
(Meta) algorithms, like \emph{matrix multiplicative weights} (MMW) \cite{Arora2005} solve the \textsc{MaxQP SDP} \eqref{eq:maxqpsdpnormal} up to multiplicative error $\epsilon\|A\|_{\ell_1}$ in runtime $\cO ((n/\epsilon)^{2.5} s )$, where $s$ denotes the column sparsity of $A$. 
Further improvements are possible if the problem description $A$ has additional structure, such as $A$ being the adjacency matrix of a graph~\cite{Arora2007}.

Very recently, a line of works pointed out that quantum computers can solve certain SDPs even faster \cite{Brandao2017a,Apeldoorn2017, Apeldoorn2018,Brandao2017b,Kerenidis2018}. From a high-level perspective, the existing approaches fall into two categories: those that obtain quantum speedups by capitalizing on the fact that quantum computers can prepare certain quantum Gibbs states faster~\cite{Brandao2017a,Apeldoorn2017, Apeldoorn2018,Brandao2017b} and those that capitalize on the fact that quantum computers can solve certain linear algebra problems faster~\cite{Kerenidis2018}. For the works that focus on quantum Gibbs states, we further distinguish those that follow a primal-dual approach~\cite{Brandao2017a,Apeldoorn2017, Apeldoorn2018,Brandao2017b} and those that follow a primal-only approach~\cite{Apeldoorn2018}\footnote{Note that \cite{Apeldoorn2018} proposes both a primal-only and a primal-dual algorithm.}.
For both the linear algebra and the primal-dual approach the current runtime guarantees depend on problem-specific parameters. These parameters scale particularly poorly for most combinatorial optimization problems, including the \textsc{MaxQP SDP}, and negate any potential advantage. Moreover, other closely related primal only approaches~\cite{Apeldoorn2018} do not yield a quantum speedup straight away because they treat the constraints in a black-box manner.

In this work, we tackle this challenge and overcome the shortcomings of existing quantum SDP solvers by considering the following further relaxation of problem \eqref{eq:maxqpsdpnormal}:
\begin{align}\label{equ:maxqpsdp}
\textrm{find} &\quad X \quad \textrm{(renormalized, relaxed, feasibility \textsc{MaxQP SDP})}\\
\textrm{subject to} &\quad  \mathrm{tr} \left( \tfrac{1}{\|A\|}AX \right)\geq\lambda-\epsilon \nonumber\\
& \quad \sum_i\left|\langle i| X |i \rangle-\frac{1}{n}\right|\leq \epsilon\nonumber\\
& \quad \mathrm{tr}(X)=1,\;   X \geq 0 \nonumber.
\end{align}

\begin{comment}
\begin{align}\label{equ:maxqpsdp}
\textrm{maximize} & \quad  \mathrm{tr} \left( \tfrac{1}{\|A\|}AX \right)  & \textrm{(renormalized \textsc{MaxQP SDP})}\\
\textrm{subject to} & \quad \langle i| X |i \rangle = \tfrac{1}{n}\quad i \in \left[n \right], \nonumber \\
& \quad \mathrm{tr}(X)=1,\;   X \geq 0 \nonumber.
\end{align}
\end{comment}
Here we introduced two additional parameters $\lambda$ and $\epsilon$. The $\lambda$ parameter comes from a standard trick to reduce the problem in~\eqref{eq:maxqpsdpnormal} to a sequence of feasibility problems, as we will explain later. The $\epsilon$ parameter encodes a further relaxation of the constraints of Eq.~\eqref{eq:maxqpsdpnormal}. Let us first discuss the case where $\epsilon=0$, that is, we have the normalized diagonal constraints $\langle i| X |i \rangle = \tfrac{1}{n}$.

This renormalization of the original problem pinpoints connections to quantum mechanics: Every feasible point $X$ obeys $\tr{X}=1$ and $X \geq 0$, implying that it describes the state $\rho$ of a $n$-dimensional quantum system. 
In turn, such quantum states can be represented approximately by a renormalized matrix exponential $\rho = \exp (-H)/\mathrm{tr}( \exp (-H))$, the \emph{Gibbs state} associated with \emph{Hamiltonian} $H$. We capitalize on this correspondence 
by devising a meta-algorithm -- \emph{Hamiltonian Updates} (HU) -- that is inspired by matrix exponentiated gradient updates \cite{Tsuda2005}, see also \cite{Steurer2015, Brandao2017b,Hazan2006} for  similar approaches.
Another key insight is that the diagonal constraints also have a clear quantum mechanical interpretation: the feasible states are those that are indistinguishable from the uniform or maximally mixed state when measured in the computational basis.  

This interpretation points the way to another key component to obtaining speedups for \textsc{MaxQP SDP}: by setting $\epsilon>0$ we further relax the problem and optimize over all states that are \emph{approximately} indistinguishable from the maximally mixed state when measured in the computational basis. This further relaxation will allow us to overcome shortcomings of previous solvers when dealing with SDPs of this form, as it bundles up the $n$ linear constraints in Eq.~\eqref{equ:maxqpsdp} into one. 
As we ultimately want to solve Eq.~\eqref{eq:maxqpsdpnormal} and not ~\eqref{equ:maxqpsdp}, a significant part of the technical contribution of this work is to show that this further relaxation is mild. Indeed, we will be able to convert a solution to~\eqref{equ:maxqpsdp} into one to~\eqref{eq:maxqpsdpnormal} by only slightly changing the objective value.

This will allow us to solve the relaxed problem up to an $\epsilon$ additive error by only imposing a relaxation parameter $\epsilon^{4}$. As is the case with this and many other related algorithms to solve SDPs, ensuring that we only require a dimension-independent $\epsilon^{4}$ precision for the constraints is essential to guarantee speedups.
Note that to obtain the same level of precision as in the formulation given in~\eqref{equ:maxqpsdp} would require enforcing that each diagonal constraint is satisfied up to an error of order $\cO(n^{-1})$.

Although originally designed to exploit the fact that quantum architectures can sometimes create Gibbs states efficiently and inspired by interpreting the problem from the point of view of quantum mechanics, it turns out that this approach also produces faster classical algorithms.

To state our results, we instantiate standard computer science notation. The symbol $\cO (\cdot)$ describes limiting function behavior, while $\tcO (\cdot)$ hides poly-logarithmic factors in the problem dimension and polynomial dependencies on the inverse accuracy $1/\epsilon$. We are working with the adjacency list oracle model, where individual entries and location of nonzero entries of the problem description $A$ can be queried at unit cost. We refer to Section~\ref{sec:quantum} for a more detailed discussion.

\begin{bigthm}[Hamiltonian Updates: runtime] \label{thm:main_intro}
Let $A$ be a (real-valued), symmetric $n \times n$ matrix with column sparsity $s$.
Then, the associated \textsc{MaxQP SDP} \eqref{eq:maxqpsdpnormal} can be solved up to additive accuracy $n\|A\|\epsilon$  in runtime $\tcO\lb n^{1.5}\lb \sqrt{s}\rb^{1+o(1)}\epsilon^{-28+o(1)}\operatorname{exp}(1.6\sqrt{12\log(\epsilon^{-1})})\rb$ on a quantum computer and $\tcO\left( \min \{n^2s, n^\omega \}\epsilon^{-12}\right)$ on a classical computer.
\end{bigthm}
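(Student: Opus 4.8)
The plan is to reduce the renormalized \textsc{MaxQP SDP} in~\eqref{equ:maxqpsdp} to a sequence of feasibility tests and to solve each of these with a matrix-exponentiated-gradient meta-algorithm (Hamiltonian Updates). First I would replace the optimization by a binary search over candidate objective values $\gamma$; each value of $\gamma$ defines a convex feasibility problem, namely to find a density matrix $\rho$ with $\tr{A\rho}/\|A\| \ge \gamma$ that is $\epsilon$-close to diagonal-feasible in the sense $\sum_i|\langle i|\rho|i\rangle - 1/n| \le \epsilon$. The binary search contributes only an $\cO(\log(1/\epsilon))$ overhead, so it suffices to bound the cost of a single feasibility test.

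For the feasibility test I would run Hamiltonian Updates: initialize $H_0 = 0$ so that $\rho_0 = \one/n$ is the maximally mixed state, and at each step form the Gibbs state $\rho_t = e^{-H_t}/\tr{e^{-H_t}}$, check whether it violates the objective constraint or the relaxed diagonal constraint, and, if so, add the corresponding (sub)gradient to $H_t$ with a fixed step size $\eta$. The number of iterations is controlled by the standard regret bound for matrix exponentiated gradient descent, using the quantum relative entropy to a target feasible state (equivalently, the log-partition function) as a potential: each violated-constraint step decreases the potential by $\Omega(\eta^2)$, so that $T = \cO(\epsilon^{-2}\log n)$ steps suffice. Crucially, every update direction is either $A/\|A\|$ or a diagonal sign matrix, so $H_t$ stays $s$-sparse and $\|H_t\| = \cO(\mathrm{poly}(1/\epsilon))$ throughout; the latter bound keeps the partition function $\tr{e^{-H_t}} = \Omega(1)$ and hence controls the cost of Gibbs-state preparation.

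It then remains to bound the cost of one iteration, which is dominated by testing the relaxed diagonal constraint, i.e.\ by estimating all diagonal entries $\langle i|\rho_t|i\rangle$ well enough to evaluate $\sum_i|\langle i|\rho_t|i\rangle-1/n|$ up to $\cO(\epsilon)$ and to extract the sign pattern used in the update. Classically I would expand $e^{-H_t}$ in a truncated Chebyshev series and apply it to each computational-basis vector by sparse matrix--vector multiplication, giving $\tcO(n^2 s)$ per iteration, or compute the full matrix exponential by fast matrix multiplication in $\tcO(n^\omega)$, whichever is cheaper. Quantumly I would prepare a purified Gibbs state of the $s$-sparse Hamiltonian $H_t$ using quantum-walk-based block encoding of $e^{-H_t/2}$ together with amplitude amplification; because the partition function is $\Omega(1)$, the amplification costs $\cO(\sqrt{n})$ rounds and the quantum walk contributes the near-optimal factor $s^{0.5+o(1)}$, so a single Gibbs sample costs $\tcO(\sqrt{n}\, s^{0.5+o(1)}\,\mathrm{poly}(1/\epsilon))$. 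Measuring such samples in the computational basis and estimating the $\ell_1$ deviation of an $n$-outcome distribution to additive $\epsilon$ requires $\tcO(n/\epsilon^2)$ samples, yielding $\tcO(n^{1.5} s^{0.5+o(1)}\mathrm{poly}(1/\epsilon))$ per iteration. Multiplying the per-iteration costs by $T$ and absorbing all $\epsilon$- and $\log n$-dependence into the $\tcO$ and $\mathrm{poly}(1/\epsilon)$ factors gives the two claimed bounds.

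The main obstacle I anticipate is robustness: the Gibbs states are only prepared approximately and every expectation value is estimated from finitely many samples, so I would have to show that these errors do not accumulate across the $\tcO(1/\epsilon^2)$ iterations and spoil convergence. The cleanest route is to prove that the regret analysis tolerates an additive $\cO(\epsilon)$ error in each constraint evaluation and in each subgradient, and then to allocate the sampling budget and the Gibbs-preparation precision so that every quantity is estimated to that tolerance. Keeping $\|H_t\|$ bounded is precisely what makes this compatible with the $\Omega(1)$ lower bound on $\tr{e^{-H_t}}$ on which the $\sqrt{n}$ quantum speedup relies, so the error analysis and the norm control have to be carried out together rather than separately.
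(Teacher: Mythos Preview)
Your proposal follows the same architecture as the paper: binary search over candidate values, Hamiltonian Updates with a relative-entropy potential giving $T=\cO(\epsilon^{-2}\log n)$ iterations, truncated matrix exponentials (Taylor in the paper, Chebyshev in your sketch---either works) for the classical runtime, and quantum Gibbs-state preparation plus computational-basis sampling for the quantum runtime.

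One technical slip to fix: you assert $\|H_t\|=\cO(\mathrm{poly}(1/\epsilon))$ and from this conclude $\tr{e^{-H_t}}=\Omega(1)$. In fact each update has norm $\cO(\epsilon)$ and there are $T=\cO(\epsilon^{-2}\log n)$ of them, so $\|H_t\|=\cO(\epsilon^{-1}\log n)$, and the raw partition function can be as small as $n^{-\cO(1/\epsilon)}$. This does not break the bound---after shifting $H_t$ so that $\lambda_{\min}(H_t)=0$ the partition function is at least $1$, the amplitude amplification still costs $\cO(\sqrt{n})$, and the extra $\log n/\epsilon$ in the spectral range is absorbed by the Hamiltonian-simulation cost and hence by $\tcO$---but your stated reasoning for the $\sqrt{n}$ factor is not quite right and should be repaired along these lines. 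The paper simply invokes the Poulin--Wocjan Gibbs sampler together with sparse Hamiltonian simulation and does not make an $\Omega(1)$ partition-function claim.

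A second point you should make explicit: you replace the exact diagonal constraint by $\sum_i|\langle i|\rho|i\rangle-1/n|\le\epsilon$ without arguing that this relaxation changes the optimal value by at most $\cO(\epsilon)$. The paper devotes a separate stability proposition to this (constructing a strictly feasible state from an approximately feasible one with only $\cO(\epsilon)$ loss in value); without it you have solved a relaxed problem but not yet the renormalized \textsc{MaxQP SDP} as stated.
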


Here $\omega$ is the matrix multiplication exponent. With some abuse of terminology, the word ``solves" is used with slightly different meanings for the classical and quantum algorithms in the statement above. For the classical algorithm we can indeed output a feasible solution of  \textsc{MaxQP SDP} that is $n\epsilon$ close to the optimal target value. In the quantum case, the output is in the form of a quantum state $\rho$ such that $n\rho$ is $\cO(n\epsilon)$ close in trace distance to a feasible point and with value that is $n\|A\|\epsilon$ close, what we will call approximately feasible.
We emphasize that the quantum algorithm also outputs a classical description of a solution that is approximately feasible in a sense that will be made precise below. 
The polynomial dependency on inverse accuracy is rather high (e.g.\ $(1/\epsilon)^{12}$ for the classical algorithm). We expect future work to be able to
improve this.

Already the classical runtime improves upon the best known existing results and we refer to Section~\ref{sub:related} for a detailed comparison. Access to a quantum computer would increase this gap further.
However, it is important to point out that Theorem~\ref{thm:main_intro} has an approximation error of order $n \| A \| \epsilon$. 
In contrast, MMW \cite{Arora2005} -- the fastest existing algorithm --  incurs an error proportional to $\epsilon\| A \|_{\ell_1}$, where $\| A \|_{\ell_1}=\sum_{i,j}|A_{i,j}|$, making a straightforward comparison more difficult. 
Importantly, the scaling of our algorithm is favorable for generic problem instances and spin glass models, see Section~\ref{sub:related}.

The quantum algorithm outputs a classical description of a Hamiltonian $H^\sharp$ that encodes an approximately optimal, approximately feasible solution $\rho^\sharp = \exp (-H^\sharp)/\tr{\exp (-H^\sharp)}$ of the renormalized \textsc{MaxQP SDP} \eqref{equ:maxqpsdp}. 
This classical output can subsequently be used for randomized rounding for the $\infty \to 1$ norm of a matrix $A$, $\|A\|_{\infty\to1}=\max\limits_{x,y\in \{\pm1\}^n}\langle x|A|y\rangle$.

\begin{bigthm}[Rounding]
Suppose that $H^\sharp$ encodes an approximately optimal solution of the renormalized \textsc{MaxQP SDP} \eqref{equ:maxqpsdp} with accuracy $\epsilon^4$  for the target matrix
\begin{align*}
    A'=\left( \begin{array}{cc}
0& A \\
A^T &0
\end{array}
\right),
\end{align*}
where $A$ is a $n\times n$ real matrix with at most $s$ nonzero entries per column (column sparsity).
Then, there is a classical $\tcO(ns)$-time randomized rounding procedure that converts $H^\sharp$ 
 into binary vectors $\tilde{x},\tilde{y} \in \left\{ \pm 1\right\}^n$ that obey
\begin{equation*}
\gamma\left( \|A\|_{\infty\to1} - \cO (n \| A \|\epsilon) \right) \leq \mathbb{E} \left[ \langle \tilde{x} | A |\tilde{y} \rangle \right] \leq \|A\|_{\infty\to1}  \label{eq:rounding_main},
\end{equation*}
where $\gamma=\frac{2}{\pi}$ if $A$ is positive semidefinite and $\frac{4}{\pi}-1$ else.
\end{bigthm}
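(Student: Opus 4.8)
The plan is to interpret the approximately optimal solution $\rho^\sharp=\exp(-H^\sharp)/\tr{\exp(-H^\sharp)}$ of the renormalized \textsc{MaxQP SDP} for the $2n\times 2n$ block matrix $A'$ as a collection of near-unit feature vectors and then apply Gaussian (random hyperplane) rounding. Writing $\sqrt{\rho^\sharp}=\exp(-H^\sharp/2)/\sqrt{\tr{\exp(-H^\sharp)}}$ and letting $m_a=\sqrt{\rho^\sharp}\,e_a$ be its columns, one has $\rho^\sharp_{ab}=\langle m_a|m_b\rangle$ and $\|m_a\|^2=\langle a|\rho^\sharp|a\rangle\approx\tfrac{1}{2n}$, the diagonal target. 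Normalising, $u_i=m_i/\|m_i\|$ for $i\in[n]$ and $v_j=m_{n+j}/\|m_{n+j}\|$ for $j\in[n]$ are genuine unit vectors, and the off-diagonal block of their Gram matrix, $C_{ij}=\langle u_i|v_j\rangle$, is a valid cross-Gram matrix. Using the block structure of $A'$ one checks that the (un-renormalised) objective is $n\,\tr{A'\rho^\sharp}=2n\sum_{ij}A_{ij}\rho^\sharp_{i,n+j}$, which after normalisation is close to $\langle A,C\rangle=\sum_{ij}A_{ij}C_{ij}$; since the SDP is a relaxation of $\|A\|_{\infty\to1}$, approximate optimality and feasibility yield $\langle A,C\rangle\ge\|A\|_{\infty\to1}-\cO(n\|A\|\epsilon)$.

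The rounding samples a standard Gaussian $g\in\R^{2n}$ and sets $\tilde{x}_i=\sign\langle g|u_i\rangle$, $\tilde{y}_j=\sign\langle g|v_j\rangle$. Crucially $\sign\langle g|m_a/\|m_a\|\rangle=\sign\langle\exp(-H^\sharp/2)g\,|\,e_a\rangle$, so all $2n$ signs are read off from a single matrix-exponential--vector product $\exp(-H^\sharp/2)g$, which for the $s$-sparse Hamiltonian $H^\sharp$ can be approximated in time $\tcO(ns)$; this establishes the claimed runtime. The upper bound $\mathbb{E}\left[\langle\tilde{x}|A|\tilde{y}\rangle\right]\le\|A\|_{\infty\to1}$ is immediate since every realisation lies in $\{\pm1\}^n\times\{\pm1\}^n$. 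For the lower bound I invoke the Grothendieck identity $\mathbb{E}\left[\sign\langle g|u_i\rangle\,\sign\langle g|v_j\rangle\right]=\tfrac{2}{\pi}\arcsin(C_{ij})$, giving $\mathbb{E}\left[\langle\tilde{x}|A|\tilde{y}\rangle\right]=\tfrac{2}{\pi}\langle A,\arcsin(C)\rangle$ with $\arcsin$ acting entrywise.

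The lower bound then follows from the expansion $\arcsin(t)=\sum_{k\ge0}c_k t^{2k+1}$ with all $c_k\ge0$ and $\sum_{k\ge1}c_k=\tfrac{\pi}{2}-1$, so that $\arcsin(C)=C+\sum_{k\ge1}c_k C^{\circ(2k+1)}$ where $C^{\circ m}$ is the $m$-fold Hadamard power. The key observation is the tensor-power trick: $C^{\circ m}_{ij}=\langle u_i|v_j\rangle^m=\langle u_i^{\otimes m}|v_j^{\otimes m}\rangle$ is again a cross-Gram matrix of unit vectors, whence $|\langle A,C^{\circ m}\rangle|\le\mathrm{SDP}(A)$, the optimal SDP value. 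Summing the tail and using near-optimality ($\mathrm{SDP}(A)\le\langle A,C\rangle+\cO(n\|A\|\epsilon)$) yields
\[
\tfrac{2}{\pi}\langle A,\arcsin(C)\rangle\ge\left(\tfrac{4}{\pi}-1\right)\langle A,C\rangle-\cO(n\|A\|\epsilon),
\]
which with $\langle A,C\rangle\ge\|A\|_{\infty\to1}-\cO(n\|A\|\epsilon)$ gives the general constant $\gamma=\tfrac{4}{\pi}-1$. When $A$ is psd I first argue that the bilinear (Grothendieck) and quadratic (Nesterov) SDP values coincide -- a Cauchy--Schwarz argument shows the relaxation is maximised by a symmetric solution $u_i=v_i$ -- so that $C$ may be taken psd; then each Hadamard power $C^{\circ(2k+1)}\succeq0$ by the Schur product theorem, hence $\langle A,C^{\circ(2k+1)}\rangle\ge0$, the entire tail is nonnegative, and the constant improves to $\gamma=\tfrac{2}{\pi}$.

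The routine parts are the Grothendieck identity and the Hadamard/Schur positivity bookkeeping. The main obstacle I anticipate lies in the first paragraph: translating approximate feasibility -- the diagonal of $\rho^\sharp$ deviating from $\tfrac{1}{2n}\mathbf{1}$ in $\ell_1$ -- into control of the \emph{normalised} cross-Gram matrix $C$ and ultimately of $\langle A,C\rangle$. Because dividing each entry by $\|m_i\|\|m_{n+j}\|$ rescales it, one must bound $|\langle A,C_{\mathrm{norm}}-C_{\mathrm{unnorm}}\rangle|$ against the feasibility violation via Hölder and Cauchy--Schwarz, and this is precisely where the input accuracy $\epsilon^4$ is consumed to produce the final additive error $\cO(n\|A\|\epsilon)$ (together with the $n\|A\|$ scaling coming from converting the renormalised objective back to standard form). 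Making this chain of estimates tight is the delicate step; a secondary subtlety is justifying the psd symmetrisation at the level of the returned, only approximately optimal, solution rather than the exact optimum.
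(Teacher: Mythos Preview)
Your arcsin/Hadamard-power derivation of the constant $4/\pi-1$ is correct and is essentially a term-by-term unpacking of the paper's argument. The paper follows Alon--Naor's Rietz method: it writes $\tfrac{\pi}{2}\mathbb{E}[\sign\langle g,u_i\rangle\sign\langle g,v_j\rangle]=\langle u_i,v_j\rangle+\tau_{ij}$ and observes that the ``error'' matrix $T=(\tau_{ij})$ is itself a Gram matrix with constant diagonal $\tfrac{\pi}{2}-1$, hence (after normalisation) a feasible point of the SDP, which bounds $|\tr{TA'}|$ in one shot. Since $\tau_{ij}=\arcsin(C_{ij})-C_{ij}$, your Taylor tail $\sum_{k\ge1}c_k C^{\circ(2k+1)}$ \emph{is} exactly $T$; you bound it term-by-term via tensor powers, the paper bounds it all at once via feasibility. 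Both yield the same constant with the same $\cO(n\|A\|\epsilon)$ loss. The $\epsilon^4\to\epsilon$ conversion you flag as the main obstacle is handled in the paper by a dedicated continuity lemma showing that normalising $\rho_{ij}\mapsto\rho_{ij}/(n\sqrt{\rho_{ii}\rho_{jj}})$ perturbs the restriction to ``good'' indices by $\cO(\epsilon)$ in trace norm, combined with the block-norm estimate already used in the stability proposition; your Cauchy--Schwarz/H\"older sketch is on the right track.

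There is a genuine gap in your psd case. You propose to symmetrise the approximate $A'$-solution so that $C$ becomes psd and then invoke Schur. But the algorithm hands you an approximately optimal $\rho^\sharp$ for the $2n\times2n$ problem, and there is no reason its off-diagonal block should be (close to) symmetric; your Cauchy--Schwarz argument only shows the \emph{optimum} is attained symmetrically. The paper sidesteps this entirely: for psd $A$ it runs the $n\times n$ SDP on $A$ itself (not on $A'$), so the Gram matrix is automatically a genuine $n\times n$ psd matrix, the tail $T=\arcsin(C)-C$ is psd, and $\tr{TA}\ge0$ follows immediately. The combined theorem statement is slightly loose on this point, but the proof makes the bifurcation explicit. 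A smaller omission: your $\tcO(ns)$ runtime claim requires arguing that a truncated Taylor polynomial of $\exp(-H^\sharp/2)$ produces the \emph{same signs} as the true exponential; the paper does this via Gaussian anti-concentration (showing the coordinates of $\exp(-H^\sharp/2)g$ on good indices are bounded away from zero with high probability), which you should incorporate.
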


This result recovers the randomized rounding guarantees of \cite{Alon2006} in the limit of perfect accuracy ($\epsilon=0$). 
However, for $\epsilon >0$ the error scales with $n \| A \|$. 
In turn, randomized rounding only provides a multiplicative approximation if $ \|A\|_{\infty\to1}$ is of the same order.  This result on the randomized rounding also relies on a detailed analysis of the stability of the rounding procedure w.r.t. to approximate solutions to the problem.

\section{Detailed summary of results}\label{sec:algofeas}

We present \emph{Hamiltonian Updates} --  a meta-algorithm for solving convex optimization problems over the set of quantum states based on quantum Gibbs sampling -- in a more general setting, as we expect it to find applications to other problems.
Throughout this work, $\| \cdot \|_{tr}$ and $\| \cdot \|$ denote the trace (Schatten-1) and operator (Schatten-$\infty$) norms, respectively.

\subsection{Convex optimization and feasibility problems}

SDPs over the set  of quantum states are a special instance of a more general class of convex optimization problems. For a bounded, concave function $f$ from the set of symmetric matrices to the real numbers and closed convex sets $\mathcal{C}_1,\ldots,\mathcal{C}_n$, solve
\begin{align}
\textrm{maximize} & \quad f(X)& (\textsc{CPopt}) \label{eq:CPopt} \\
\textrm{subject to} & \quad X \in \mathcal{C}_1 \cap \cdots \cap \mathcal{C}_m,& \nonumber\\
& \quad \mathrm{tr}(X)=1,\;  X \geq 0.& \nonumber
\end{align}
The constraint $\mathrm{tr}(X)=1$ enforces normalization, while $X \geq 0$ is the defining structure constraint of semidefinite programming. Together, they restrict $X$ to the set of $n$-dimensional quantum states $\mathcal{S}_n = \left\{ X:\; \mathrm{tr}(X)=1,\; X \geq 0 \right\}$. 
We will now specialize to the case $f(A)=\tr{AX}$ for a symmetric matrix $A$, as this is our main case of interest, but remark that it is simple to generalize the discussion that follows for more general classes. 
This trace normalization constraint implies fundamental bounds on the optimal value:
 $| \mathrm{tr}(A X^\sharp)| \leq \| A \| \| X^\sharp \|_{tr}=\| A \|$, according to Matrix H\"older \cite[Ex.~IV.2.12]{Bhatia1997}.
Binary search over potential optimal values $\lambda \in \left[-\| A \|, \| A \| \right]$ 
allows for reducing the convex optimization problem into a sequence of feasibility problems:
\begin{align}
\textrm{find} & \quad X \in \mathcal{S}_n & (\textsc{CPfeas}(\lambda)) \label{eq:CPfeas}\\
\textrm{subject to} & \quad \tr{A X} \geq \lambda, \nonumber&\\
& \quad X \in \mathcal{C}_1 \cap \cdots \cap \mathcal{C}_m. \nonumber&
\end{align}
The convergence of binary search is exponential.
This ensures that the overhead is benign: a total of $\log (\| A \|/\epsilon)$ queries of $\textsc{CPfeas}(\lambda)$ suffices to determine the optimal solution of \textsc{CPopt} \eqref{eq:CPopt} up to accuracy $\epsilon$. In summary: 

\begin{fact} \label{fact:binsearch}
Binary search reduces the task of solving convex optimization problems \eqref{eq:CPopt} to the task of solving convex feasibility problems \eqref{eq:CPfeas}. 
\end{fact}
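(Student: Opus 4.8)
The plan is to exploit the fact that the feasibility problem \eqref{eq:CPfeas} depends monotonically on its threshold parameter $\lambda$, together with the a priori bound on the optimal value supplied by Matrix H\"older. Specializing to $f(X)=\tr{AX}$, write $\lambda^\sharp = \min\{\tr{AX} : X \in \mathcal{C}_1 \cap \cdots \cap \mathcal{C}_m,\ X \in \mathcal{S}_n\}$ for the optimal value of \eqref{eq:CPopt}. The inequality $|\tr{AX}| \leq \|A\|\,\|X\|_{tr} = \|A\|$, valid for every feasible $X$, confines this optimum to the interval $[-\|A\|,\|A\|]$, which is where the binary search will be carried out.

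First I would observe that solvability of \eqref{eq:CPfeas} depends monotonically on $\lambda$: the parameter enters only through the half-space constraint $\tr{AX}\leq\lambda$, so increasing $\lambda$ merely relaxes this constraint while leaving the convex sets $\mathcal{C}_i$ and $\mathcal{S}_n$ untouched. Hence any $X$ feasible for $\textsc{CPfeas}(\lambda)$ stays feasible for $\textsc{CPfeas}(\lambda')$ whenever $\lambda'\geq\lambda$, and $\textsc{CPfeas}(\lambda)$ is solvable precisely when $\lambda\geq\lambda^\sharp$. The feasibility oracle therefore reveals on which side of the unknown threshold $\lambda^\sharp$ any queried value lies.

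Next I would run ordinary bisection on $[-\|A\|,\|A\|]$: maintain a bracket $[\ell,u]$ known to contain $\lambda^\sharp$ (initialized with $\ell=-\|A\|$, $u=\|A\|$), query $\textsc{CPfeas}$ at the midpoint, and replace $u$ or $\ell$ by that midpoint according to whether the oracle reports feasibility. Each query halves the bracket, so after $k$ iterations its width is $2\|A\|/2^k$; choosing $k=\lceil \log_2(2\|A\|/\epsilon)\rceil = \cO(\log(\|A\|/\epsilon))$ forces the uncertainty below $\epsilon$ and pins down $\lambda^\sharp$ — and, from the last feasible query, an $\epsilon$-optimal $X$ — to the claimed accuracy.

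Because the statement is a reduction rather than a hard inequality, I do not expect a genuine technical obstacle; the only point requiring care is the interface between exact and approximate oracles. If $\textsc{CPfeas}(\lambda)$ is itself solved only up to additive slack (as will be the case for the Hamiltonian Updates routine), I would track how that slack propagates through the $\cO(\log(\|A\|/\epsilon))$ bisection steps and check that it does not degrade the final additive accuracy beyond a constant factor. This is immediate, since binary search issues only logarithmically many calls and the per-call errors enter additively rather than compounding multiplicatively.
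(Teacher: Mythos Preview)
Your proposal is correct and matches the paper's approach; in fact the paper does not give a formal proof at all, merely asserting before the Fact that binary search converges exponentially and that $\log(\|A\|/\epsilon)$ queries of $\textsc{CPfeas}(\lambda)$ suffice. Your write-up simply fills in the standard details (monotonicity in $\lambda$, the Matrix H\"older bound on the search interval, and halving the bracket) that the paper leaves implicit.
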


\subsection{Meta-algorithm for approximately solving convex feasibility problems}

We adapt a meta-algorithm developed by Tsuda, R\"atsch and Warmuth \cite{Tsuda2005}, see also \cite{Steurer2015,Arora2007,Hazan2006,Brandao2017b} for similar ideas and~\cite{Bubeck2015} for an overview of these techniques. All these algorithms, including the variation presented here, can be seen as instances of mirror descent with the mirror map given by the von Neumann entropy with adaptations tailored to the problem at hand. We believe our variation provides a path for also obtaining quantum speedups for nonlinear convex optimizations, so we state it in more detail.

For our algorithm, we require subroutines that allow for testing $\epsilon$-closeness (in trace norm) to each convex set $\mathcal{C}_i$. 

\begin{defi}[$\epsilon$-separation oracle] \label{def:oracle}
Let $\mathcal{C}\subset\mathcal{S}_n$ be a closed, convex subset of quantum states and $\mathcal{C}^*\subset\{X=X^\dagger\in\C^{n\times n}: \|X\|\leq 1\} $ be a closed, convex subset of observables of operator norm at most $1$.
For $\epsilon>0$ an $\epsilon$-separation oracle (with respect to $\mathcal{C}^*$) is a subroutine that 
either accepts a state $\rho$ (in the sense that observables from $\mathcal{C}^*$ cannot distinguish $\rho$ from elements of $\mathcal{C}$), or provides a hyperplane $P$ that separates $\rho$ from the convex set using a test from $\mathcal{C}^*$:
\begin{equation*}
O_{\mathcal{C}, \epsilon}(\rho)
= \begin{cases}
\textrm{accept } \rho \textrm{ if }\min_{Y \in \mathcal{C}} \max_{P\in \mathcal{C}^*}\mathrm{tr}(P (\rho-Y)) \leq  \epsilon, \\
\textrm{else: output } P\in \mathcal{C}^* \textrm{s.t.}\; \mathrm{tr}(P (\rho-Y)) \geq \frac{\epsilon}{2} \; \textrm{for all}\; Y \in \mathcal{C}.
\end{cases}
\end{equation*}
\end{defi}
We note that the Oracle is well-defined in the sense that if $\min_{Y \in \mathcal{C}} \max_{P\in \mathcal{C}^*}\mathrm{tr}(P (\rho-Y)) > \epsilon$, then there exists $P$ such that $P\in \mathcal{C}^*$ and for all $Y\in \mathcal{C}$
\begin{align*}
    \mathrm{tr}(P (\rho-Y)) \geq \frac{\epsilon}{2}.
\end{align*}
Indeed, by Sion's min-max theorem~\cite{sion_general_1958} we have
\begin{align*}
    &\max\limits_{P\in \mathcal{C}^*}\min\limits_{Y\in\mathcal{C}}\tr{P(\rho-Y)}= \min\limits_{Y\in\mathcal{C}}\max\limits_{P\in \mathcal{C}^*}\tr{P(\rho-Y)}>\epsilon.
\end{align*}
This implies that there even exists a $P$ that separates the state $\rho$ from the set $\mathcal{C}$ by $\epsilon$.
Nonetheless, we instantiate the weaker requirement with only $\epsilon/2$ separation. This will be vital to ensure that the algorithm can tolerate errors and/or approximations in the samples from $\rho$.

By allowing for fine-tuning of $\mathcal{C}^*$ we are able to reduce the number of closeness conditions we need to test.
\emph{Hamiltonian Updates} (\textsc{HU}) is a general meta-algorithm for approximately solving convex feasibility problems~\eqref{eq:CPfeas} (\textsc{CPfeas}). The task is to find a state $\rho$ that is $\epsilon$-close to each convex set $\mathcal{C}_i$ with respect to observables in some $\mathcal{C}^*_i$ ($\max_{P_i\in\mathcal{C}_i^*}\min_{Y_i \in \mathcal{C}_i} \tr{P_i(\rho-Y_i)} \leq \epsilon$) and also obeys %
$\rho \in \mathcal{S}_n$ ($\rho \geq 0$ and $\mathrm{tr}(\rho)=1$).
A change of variables takes care 
of positive semidefiniteness and normalization: replace $X$ in problem~\eqref{eq:CPfeas} by a Gibbs state $ \rho_H= \exp \left( - H \right)/\mathrm{tr}(\mathrm{exp}(-H))$. At each iteration, we query $\epsilon$-separation oracles.
If they all accept, the current iterate is $\epsilon$-close to feasible in the sense that there is a matrix in each $C_i$ whose expectation value with respecto to observables in $C_i^*$ is $\epsilon$ close with respect to the accepted state, and we are done. Otherwise, we update the matrix exponent to penalize infeasible directions: $H \to H+\tfrac{\epsilon}{16} P$, where $P$ is a separating hyperplane that witnesses infeasibility. 
This process is visualized in Figure~\ref{fig:HU} and we refer to Algorithm~\ref{alg:HU} for a detailed description.

\begin{algorithm}[H]
\caption{\textit{Meta-Algorithm for approximately solving convex feasibility problems \eqref{eq:CPfeas}.}
}
\label{alg:HU}
\begin{algorithmic}[1]
\Require{Query access to $m$ $\epsilon$-separation oracles $O_{1,\epsilon}(\cdot),\ldots,O_{m,\epsilon}(\cdot)$}
\Function{HamiltonianUpdates}{$T,\epsilon$}
\State{$\rho=n^{-1} I$ and $H = 0$}
\Comment initialize the maximally mixed state
\For{$t=1,\ldots,T$}
\For{$i=1,\ldots,m$} \Comment{Query oracles and check feasibility}
\If{$O_{i,\epsilon}(\rho)=P$}
\State{$H \leftarrow H + \tfrac{\epsilon}{16}P$} \Comment{Penalize infeasible direction}
\State{$\rho \leftarrow \exp \left( - H \right) / \mathrm{tr}(\exp (-H) )$} \Comment{Update quantum state}
\State{\textbf{break loop}}
\EndIf
\EndFor
\State{return $(\rho,H)$ and \textbf{exit function}} \Comment{Current iterate is $\epsilon$-feasible}
\EndFor
\EndFunction
\end{algorithmic}
\end{algorithm}

\begin{thm}[\textsc{HU}: convergence] \label{thm:main}
Algorithm~\ref{alg:HU} requires at most $T = \lceil 64 \log (n) /\epsilon^2 \rceil+1$ iterations to either certify that \eqref{eq:CPfeas} is infeasible or output a state $\rho$ satisfying: 
\begin{align}\label{equ:whatfeasibility}
\textrm{for all } 1\leq i\leq m: \max_{P_i\in\mathcal{C}_i^*}\min_{Y_i \in \mathcal{C}_i} \tr{P_i(\rho-Y_i)} \leq \epsilon
\end{align}
\end{thm}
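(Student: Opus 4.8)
The plan is to run a potential-function (mirror-descent) argument using the quantum relative entropy to the feasible set as the Lyapunov function. First I would assume that \eqref{eq:CPfeas} \emph{is} feasible and fix a genuine feasible state $\sigma \in \mathcal{C}_1 \cap \cdots \cap \mathcal{C}_m$, then track the potential $\Phi_t = \rl{\sigma}{\rho_t}$ along the iterates $\rho_t = \exp(-H_t)/\tr{\exp(-H_t)}$ generated by Algorithm~\ref{alg:HU}. Two easy bookends frame the argument: at initialization $\rho_0 = I/n$, so $\Phi_0 = \log n - S(\sigma) \leq \log n$ because the von Neumann entropy $S(\sigma)$ is nonnegative; and $\Phi_t \geq 0$ at every step by nonnegativity of the relative entropy.

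The heart of the proof is a per-iteration progress bound showing that each update $H_{t+1} = H_t + \tfrac{\epsilon}{8}P_t$ lowers the potential by at least a fixed multiple of $\epsilon^2$. Writing $\log \rho_t = -H_t - \log\tr{\exp(-H_t)}$ and using $\tr{\sigma}=1$, the change telescopes to
\begin{equation*}
\Phi_t - \Phi_{t+1} = -\tfrac{\epsilon}{8}\tr{\sigma P_t} - \log\frac{\tr{\exp(-H_{t+1})}}{\tr{\exp(-H_t)}}.
\end{equation*}
I would bound the partition-function ratio via Golden--Thompson, $\tr{\exp(-H_{t+1})} \leq \tr{\exp(-H_t)\exp(-\tfrac{\epsilon}{8}P_t)}$, together with the operator inequality $\exp(-\tfrac{\epsilon}{8}P_t) \preceq I - \tfrac{\epsilon}{8}P_t + \big(\tfrac{\epsilon}{8}\big)^2 I$, which holds because $\|P_t\|\leq 1$ and $\epsilon \leq 1$ force the spectrum of $-\tfrac{\epsilon}{8}P_t$ into a range where $e^x \leq 1+x+x^2$, and $P_t^2 \preceq I$. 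Feeding this and $\log(1+y)\leq y$ into the identity yields
\begin{equation*}
\Phi_t - \Phi_{t+1} \geq \tfrac{\epsilon}{8}\tr{(\rho_t-\sigma)P_t} - \big(\tfrac{\epsilon}{8}\big)^2.
\end{equation*}

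The separation-oracle guarantee then finishes the progress bound. Whenever the algorithm updates, $P_t$ is a hyperplane witnessing $\tr{P_t(\rho_t-Y)} \geq \epsilon$ for all $Y$ in the violated set $\mathcal{C}_i$ (Definition~\ref{def:oracle}); since $\sigma \in \mathcal{C}_i$ this gives $\tr{(\rho_t-\sigma)P_t} \geq \epsilon$, so each update costs at least $\tfrac{\epsilon^2}{8} - \tfrac{\epsilon^2}{64} > \tfrac{\epsilon^2}{16}$ of potential. Combining with $0 \leq \Phi_t \leq \Phi_0 \leq \log n$ caps the number of updates at strictly fewer than $16\log(n)/\epsilon^2 < T$. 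Hence, under the feasibility assumption, Algorithm~\ref{alg:HU} must reach an iteration in which every oracle accepts before the loop is exhausted, and unwinding Definition~\ref{def:oracle} shows that acceptance by each $O_{i,\epsilon}$ is exactly $\min_{Y\in\mathcal{C}_i}\|\rho - Y\|_{tr}\leq \epsilon$, i.e.\ \eqref{equ:whatfeasibility}. The contrapositive supplies the certification branch: if the loop instead runs for all $T$ iterations while updating every time, feasibility is impossible, so reporting infeasibility is correct.

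The step I expect to be the main obstacle is nailing the per-iteration decrease with clean constants --- in particular justifying the operator inequality $\exp(-\tfrac{\epsilon}{8}P_t)\preceq I - \tfrac{\epsilon}{8}P_t + \big(\tfrac{\epsilon}{8}\big)^2 I$ on the correct spectral window and threading the Golden--Thompson estimate through the logarithm so that the $+\tfrac{\epsilon}{8}\tr{\rho_t P_t}$ term survives with the right sign. Everything else (the two bookend bounds, the telescoping identity, and the contrapositive logic) is routine once this progress lemma is in hand.
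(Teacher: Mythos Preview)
Your proposal is correct and follows the same overall mirror-descent/potential-function skeleton as the paper: track $\rl{\sigma}{\rho_t}$, bound it initially by $\log n$, show a per-step drop of at least $\epsilon^2/16$, and read off the iteration bound by contradiction. The difference lies in how the progress lemma is obtained. You bound the partition-function ratio via Golden--Thompson plus the scalar inequality $e^x \leq 1+x+x^2$ applied spectrally, which directly produces the $\tfrac{\epsilon}{8}\tr{\rho_t P_t}$ term. The paper instead applies the Peierls--Bogoliubov inequality to control the log-partition ratio, which yields $\tfrac{\epsilon}{8}\tr{\rho_{t+1}P_t}$ (note the shifted index), and then invokes a separate continuity estimate $\|\rho_{t+1}-\rho_t\|_{tr}\leq \epsilon/2$ to convert this back to $\rho_t$. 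Your route is the more standard MMW-style argument and is arguably cleaner, avoiding the auxiliary trace-distance step; the paper's route trades Golden--Thompson for Peierls--Bogoliubov plus a Gibbs-state perturbation bound. Both land on the same $\epsilon^2/16$ decrement (yours actually gives $7\epsilon^2/64$, which you correctly weaken). The ``main obstacle'' you flag is not really one: the operator inequality you need holds on the entire interval $[-\tfrac{\epsilon}{8},\tfrac{\epsilon}{8}]$ since $e^x\leq 1+x+x^2$ for all $|x|\leq 1$, and tracing against the psd matrix $\exp(-H_t)$ preserves it.
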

As it is also the case for the aforementioned variations of the algorithm above, the proof follows from establishing sufficiently large step-wise progress in quantum relative entropy. The quantum relative entropy between \emph{any} feasible state and the initial state $\rho_0=n^{-1}I$ (maximally mixed state) is bounded by $\log (n)$. Therefore, the algorithm must terminate after sufficiently many iterations. Otherwise, the problem is infeasible. We refer to Section ~\ref{sec:proof_main} for details. Note that, unlike related previous quantum solvers~\cite{Brandao2017b,Apeldoorn2017} our algorithm only considers the primal problem and bears a strong resemblance with the primal-only approach of~\cite{Apeldoorn2018}, although we differ in the way we implement the constraints.

Theorem~\ref{thm:main} has important consequences: The runtime of approximately solving quantum feasibility problems is dominated by the cost of implementing $m$ separation oracles $O_{i,\epsilon}$ and the cost associated with matrix exponentiation. This reduces the task of efficiently solving convex feasibility problems to the quest of efficiently identifying separating hyperplanes and developing fast routines for computing Gibbs states. 

The latter point already hints at a genuine quantum advantage: quantum architectures can efficiently prepare (certain) Gibbs states \cite{Chowdhury2016,Franc2017,Kastoryano2014,Poulin2009,Temme2009,Temme2009,Yung2012,Apeldoorn2017}.

It should be stressed that the approximate feasibility guarantee in~\eqref{equ:whatfeasibility} is not very strong and a careful choice of the $C_i,C_i^*$ and a careful analysis of the continuity of the problem is usually required to ensure that it gives a good approximation to \textsc{CPopt} \eqref{eq:CPopt}.

\subsection{Classical and quantum solvers for the renormalized \textsc{MaxQP SDP}}
Let us now formulate the renormalized, relaxed problem in Eq.~\eqref{equ:maxqpsdp} in this framework and discuss the appropriate oracles.
For fixed $\lambda \in [-1,1]$ the (feasibility) \textsc{MaxQP SDP} is equivalent to a quantum feasibility problem:
\begin{align*}
\textrm{find} &\quad \rho \in \mathcal{S}_n \cap \mathcal{A}_\lambda \cap \mathcal{D}_n \\
\textrm{where} & \quad \mathcal{A}_\lambda = \left\{ X:\; \mathrm{tr} \left( A \| A \|^{-1} X \right) \geq \lambda \right\},\quad \mathcal{A}_\lambda^* = \left\{ -A \|A\|^{-1}\right\} \\
& \quad \mathcal{D}_n = \left\{ X:\; \langle i| X |i \rangle =1/n, i \in [n]\right\},\quad \mathcal{D}_n^* = \left\{ X:\; \|X\|\leq1, X \textrm{ is diagonal}\right\}.
\end{align*}
The set $\mathcal{A}_\lambda$ corresponds to a half-space, while $\mathcal{D}_n$ is an affine subspace with codimension $n$. Let us now see that the convergence promises of Thm.~\ref{thm:main} indeed convert to the renormalized, relaxed, feasibility \textsc{MaxQP SDP}, see Eq.~\eqref{equ:maxqpsdp}. Let us start with observing
\begin{align}\label{equ:whatfeasibilityA}
\max\limits_{P\in \mathcal{A}_\lambda^*}\min\limits_{Y\in\mathcal{A}_\lambda}\tr{P(\rho-Y)}\leq \epsilon\Longleftrightarrow -\mathrm{tr} \left( A \| A \|^{-1} (\rho-Y) \right) \leq \epsilon\quad \text{for all $ Y\in\mathcal{A}_\lambda$.}
\end{align}
Combined with the defining halfspace condition for $\mathcal{A}_\lambda$, this display asserts $\mathrm{tr} \left( A \| A \|^{-1} \rho\right) \geq \lambda-\epsilon$.
We can analyze the oracle for $\mathcal{D}_n$ in a similar fashion. Note that,
\begin{align}\label{equ:whatfeasibility_diagonal}
\max\limits_{P\in \mathcal{D}_n^*}\min\limits_{Y\in\mathcal{D}_n}\tr{P(\rho-Y)}\leq \epsilon\Longleftrightarrow\sum\limits_{i=0}^{n-1}\left|\langle i| \rho|i \rangle-1/n\right| \leq\epsilon.
\end{align}
Thus, we indeed obtain Eq.~\eqref{equ:maxqpsdp} from this formulation up to an error $\epsilon$ for the target value.  It will be important to ensure that both quantum and classical algorithms work only having access to approximations of the current iteration.
The simple structure of both sets readily suggests two separation oracles that take this into account:
\begin{enumerate}
\item[$\mathcal{O}_{\mathcal{A}_\lambda}$:] compute an approximation $\tilde{a}\in\R$ up to additive error $\frac{\epsilon}{4}$ of $\mathrm{tr}(A \|A \|^{-1} \rho)$. Check if $\tilde{a}\geq \lambda-\frac{3\epsilon}{4}$ and output $P= -A \| A \|^{-1}$ if this is not the case.
\item[$\mathcal{O}_{\mathcal{D}_n}$:] compute an approximation $\tilde{p}\in\R^n$ of $p(i)=\langle i| \rho |i \rangle$ satisfying $\sum_i|p(i)-\tilde{p}(i)|\leq\frac{\epsilon}{4}$. Check $\sum_i|\tilde{p}(i) -1/n|\leq\frac{3\epsilon}{4}$  and output 
\begin{align}\label{equ:definition_P}
   P= \sum_{i=1}^n (\mathbb{I} \left\{ \tilde{p}(i) >1/n \right\}-\mathbb{I} \left\{ \tilde{p}(i) <1/n \right\}) |i \rangle \! \langle i|  
\end{align}
if this is not the case.
\end{enumerate}
Note that the oracles are only defined for quantum states as inputs. Let us briefly check that it satisfies the definitions in~\ref{def:oracle}. For $\mathcal{O}_{\mathcal{A}_\lambda}$ we have that if $\tilde{a}\geq \lambda-\frac{3\epsilon}{4}$, then $\mathrm{tr}(A \|A \|^{-1} \rho)\geq \lambda-\epsilon$, as desired. The other case is similar.

For the oracle for $\mathcal{O}_{\mathcal{D}_n}$, let us first assume that we are in the case that $\sum_i|\tilde{p}(i) -1/n|\geq\frac{3\epsilon}{4}$. Clearly, we have that $P$ defined in Eq.~\eqref{equ:definition_P} is diagonal and of operator norm at most $1$. For ease of notation let $f:[n]\to\{-1,1\}$ be $-1$ if $\tilde{p}(i) <1/n$ and $1$ else. By construction we have for any $Y\in\mathcal{O}_{\mathcal{D}_n}\cap \mathcal{S}_n$:
\begin{align*}
&\tr{P(\rho-Y)}=\sum\limits_{i}f(i)\left(p(i)-\tfrac{1}{n}\right)
\\&\stackrel{(1)}{\geq} -\sum_i|p(i)-\tilde{p}(i)|+\sum\limits_{i}f(i)\left(\tilde{p}(i)-\tfrac{1}{n}\right)=\sum_i|p(i)-\tilde{p}(i)|+\sum_i\left|\tilde{p}(i)-\tfrac{1}{n}\right|\\
&\geq-\frac{\epsilon}{4}+\sum_i|\tilde{p}(i)-\tfrac{1}{n}|\geq\frac{\epsilon}{2},
\end{align*}
where in (1) we used H\"older's inequality.
On the other hand, if $\sum_i|\tilde{p}(i) -1/n|\leq\frac{3\epsilon}{4}$, then an application of the triangle inequality shows that $\sum_i|p(i)-1/n|\leq \epsilon$. Thus, we conclude that both oracles are correct.

The key insight to later obtain quantum speedups for the \textsc{MaxQP SDP} is that the second oracle can be interpreted as trying to distinguish the current state from the maximally mixed through computational basis measurements. This view is similar in spirit to~\cite[Lemma 4.6]{Steurer2015}, although here we focus on using this approach to construct solutions and to show that this notion of approximate feasibility is good enough for the \textsc{MaxQP SDP}.

\subsubsection{Classical runtime}

For fixed $\rho_H = \exp (-H)/\mathrm{tr}\left( \exp (-H) \right)$ both separation oracles are easy to implement on a classical computer given access to $\rho_H$. Hence, matrix exponentiation is the only remaining bottleneck. 
This can be mitigated by truncating the Taylor series for $\exp (-H)$ after $l' = \mathcal{O}(\|H\|+ 1/\epsilon)$ many steps. 
Approximating $\rho$ in this fashion only requires
\begin{align*}
\mathcal{O}(\min \left\{ n^2 s, n^\omega \right\} \log (n) \epsilon^{-1})
\end{align*} 
steps and only incurs an error of $\epsilon$ in trace distance. Moreover, it is then possible to convert an approximately feasible to point to a strictly feasible one with a similar value, see Section~\ref{sec:classical}.
The following result becomes an immediate consequence of Fact~\ref{fact:binsearch} and Theorem~\ref{thm:main}.

\begin{cor}[Classical runtime for the \textsc{MaxQP SDP}] \label{cor:classical}
Suppose that $A$ has row-sparsity $s$. Then, the 
classical cost of solving the associated (renormalized) \textsc{MaxQP SDP} up to additive error $\epsilon$ is $\cO( \min \{n^2s, n^\omega \} \log (n)\epsilon^{-12})$.
\end{cor}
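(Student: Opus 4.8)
The plan is to assemble the runtime from three nested layers and then track how the internal accuracy parameter must be chosen to guarantee the claimed additive error $\epsilon$ on the SDP value. By Fact~\ref{fact:binsearch}, solving the optimization problem reduces to $\cO(\log(1/\epsilon))$ calls to a feasibility problem $\textsc{CPfeas}(\lambda)$, since the renormalized objective $\tr{A\norm{A}^{-1}\rho}$ lies in $[-1,1]$ and binary search converges exponentially. For each fixed $\lambda$, Theorem~\ref{thm:main} bounds the number of Hamiltonian Update iterations by $T=\cO(\log(n)/\delta^2)$, where $\delta$ denotes the separation-oracle precision actually used inside Algorithm~\ref{alg:HU}. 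Thus the total number of inner iterations across the whole procedure is $\tcO(1/\delta^2)$.

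First I would bound the per-iteration cost. Both separation oracles $\mathcal{O}_{\mathcal{A}_\lambda}$ and $\mathcal{O}_{\mathcal{D}_n}$ are cheap once the current iterate $\rho_H$ is in hand: the former computes $\tr{A\norm{A}^{-1}\rho_H}$ and the latter reads off the diagonal of $\rho_H$ and sums $|\langle i|\rho_H|i\rangle-1/n|$, both dominated by the matrix work below. The genuine bottleneck is producing $\rho_H=\exp(-H)/\tr{\exp(-H)}$. Since each update adds $\tfrac{\delta}{8}P$ with $\norm{P}\le 1$, after $T$ steps $\norm{H}=\cO(\delta T)=\cO(\log(n)/\delta)$, so truncating the Taylor series of $\exp(-H)$ at order $l'=\cO(\norm{H}+\log(n/\delta))=\cO(\log(n)/\delta)$ approximates $\rho_H$ to trace distance $\delta$. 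Because $H$ is a sum of a scaled copy of the row-$s$-sparse matrix $A$ and a diagonal matrix, it has row-sparsity $\cO(s)$; computing the truncated series by repeated (sparse $\times$ dense) multiplication costs $\cO(\min\{n^2 s,n^\omega\})$ per term, hence $\cO(\min\{n^2 s,n^\omega\}\log(n)/\delta)$ per iteration.

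Multiplying the three factors gives total cost $\tcO\!\left(\min\{n^2 s,n^\omega\}/\delta^3\right)$, so the crux is relating $\delta$ to the target SDP accuracy $\epsilon$. The approximate-feasibility guarantee of Theorem~\ref{thm:main} only certifies that $\rho_H$ is $\delta$-close in trace norm to a point of each $\mathcal{C}_i$; it does not directly produce a strictly feasible matrix nor control the objective gap. The main obstacle --- deferred to the stability analysis of Section~\ref{sec:classical} --- is to show that a $\delta$-approximately feasible, $\delta$-approximately optimal state can be post-processed into a genuinely feasible $X$ (exact diagonal constraints, $X\ge 0$, $\tr{X}=1$) whose objective value is within $\epsilon$ of optimal, and that this forces only a polynomial tightening $\delta=\Theta(\epsilon^4)$. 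Substituting $\delta^{-3}=\epsilon^{-12}$ then yields the stated bound $\cO(\min\{n^2 s,n^\omega\}\log(n)\epsilon^{-12})$.

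I expect the delicate part to be precisely this continuity argument rather than the iteration count, which follows mechanically from Theorem~\ref{thm:main}. One must bound how much both the diagonal-correction step and the truncation error can degrade the objective, since a naive projection of $\rho_H$ onto $\mathcal{D}_n$ could in principle move $X$ by an amount that scales badly with $n$; controlling this will rely on exploiting the renormalization $\tr{X}=1$ together with the simple half-space structure of $\mathcal{A}_\lambda$, so that a $\delta$-violation of the bundled diagonal constraint $\sum_i|\langle i|X|i\rangle-1/n|\le\delta$ translates into only an $\cO(\delta^{1/4})$-order loss in the renormalized value. Verifying that this is the exact exponent governing the passage from $\delta$ to $\epsilon$ is what pins down the final $\epsilon^{-12}$ scaling.
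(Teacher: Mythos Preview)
Your proposal is correct and follows essentially the same route as the paper: binary search via Fact~\ref{fact:binsearch}, iteration count $\cO(\log(n)/\delta^2)$ from Theorem~\ref{thm:main}, bounding $\|H_t\|=\cO(\log(n)/\delta)$ so that a Taylor truncation of order $\cO(\log(n)/\delta)$ suffices (Lemma~\ref{lem:errortruncation}), exploiting that $H_t$ is a diagonal plus a scaled copy of the $s$-sparse $A$ to get per-iteration cost $\cO(\min\{n^2s,n^\omega\}\log(n)/\delta)$, and finally invoking the stability result (Proposition~\ref{prop:stability}) to justify $\delta=\Theta(\epsilon^4)$. The paper's proof packages these same ingredients in the same order, and likewise defers the $\epsilon^4$ continuity step to Proposition~\ref{prop:stability}, so your anticipation of that being the delicate part is exactly right.
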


The comparatively poor accuracy scaling with $\epsilon^{-12}$ stems largely from the fact that we need to convert an approximately feasible optimal solution into a strictly feasible optimal solution. This conversion is contingent on running Algorithm~\ref{alg:HU} with accuracy $\tilde{\epsilon}=\epsilon^4 \ll \epsilon$ (see Proposition~\ref{prop:stability} below). The total accuracy scaling $\epsilon^{-12} = \tilde{\epsilon}^{-3}$ results from combining the $\cO (\log (n)/\tilde{\epsilon})$-cost for approximating the matrix exponential within a single iteration with the iteration bound $T=O(\log (n)/\tilde{\epsilon}^2)$ from Theorem~\ref{thm:algocorrect}.

\subsubsection{Quantum runtime}

Quantum architectures can efficiently prepare (certain) Gibbs states and are therefore well suited to overcome the main classical bottleneck. In contrast, checking feasibility becomes more challenging, because information about $\rho$ is not accessible directly. Instead, we must prepare multiple copies of $\rho$ and perform quantum mechanical measurements to test feasibility:
\begin{itemize}
\item $\cO(\epsilon^{-2})$ copies of $\rho$ suffice to $\epsilon$-approximate $\mathrm{tr}(A \|A \|^{-1} \rho)$ via phase estimation.
\item $\cO(n\epsilon^{-2})$ copies suffice to estimate the diagonal entries of $\rho$ (up to accuracy $\epsilon$ in trace norm) with a high probability of success via repeated computational basis measurements.
\end{itemize}
Combining this with the overall cost of preparing a single Gibbs state implies the following runtime for executing Algorithm~\ref{alg:HU} on a quantum computer. This result is based on the \emph{sparse oracle input model} and we refer to Sec.~\ref{sec:quantum} for details.

\begin{cor}[Quantum runtime for the \textsc{MaxQP SDP}] \label{cor:quantum}
Suppose that $A$ has row-sparsity $s$. 
Then, the quantum cost of solving the \textsc{MaxQP SDP}  up to additive error
$\epsilon n\|A\|$ is
\begin{align*}
    \tcO(n^{1.5} s^{0.5+o(1)}\mathrm{poly}(1/\epsilon)).
\end{align*}
\end{cor}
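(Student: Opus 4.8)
The plan is to combine the convergence guarantee of Theorem~\ref{thm:main} with quantum subroutines for the two ingredients that dominate the per-iteration cost: preparing the Gibbs state $\rho_H$ and implementing the separation oracles $\mathcal{O}_{\mathcal{A}_\lambda}$, $\mathcal{O}_{\mathcal{D}_n}$ from copies of $\rho_H$. By Fact~\ref{fact:binsearch} the optimization reduces to $\cO(\log(1/\epsilon))$ feasibility problems (the renormalization fixes the objective range to $[-1,1]$), and by Theorem~\ref{thm:main} each feasibility problem runs Algorithm~\ref{alg:HU} for $T=\tcO(\epsilon^{-2})$ iterations. Hence it suffices to bound the cost of a single iteration by $\tcO(n^{1.5}s^{0.5+o(1)}\mathrm{poly}(1/\epsilon))$ and absorb the two outer loops into the $\tcO$ and $\mathrm{poly}(1/\epsilon)$ factors.

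First I would count copies. Each iterate is a Gibbs state $\rho_H$, and the two oracles are statistical tests on $\rho_H$. Estimating $\mathrm{tr}(A\|A\|^{-1}\rho_H)$ to accuracy $\epsilon$ by phase estimation on the block-encoded observable needs $\cO(\epsilon^{-2})$ copies, while estimating the diagonal $\{\langle i|\rho_H|i\rangle\}_i$ to $\ell_1$-accuracy $\epsilon$ amounts to learning an $n$-outcome distribution from computational-basis measurements, which a standard concentration argument shows needs $\cO(n\epsilon^{-2})$ samples. The diagonal test dominates, so one iteration consumes $\cO(n\epsilon^{-2})$ copies of $\rho_H$. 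I would also note that, since $\rho_H$ and the estimates are only approximate, the tests should be run at slightly sharpened accuracy so that they still realise valid $\epsilon$-separation oracles in the sense of Definition~\ref{def:oracle}, keeping the accumulated error inside $\epsilon$.

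The heart of the argument---and the main obstacle---is preparing one copy of $\rho_H$. The key structural observation is that throughout Algorithm~\ref{alg:HU} the Hamiltonian has the form $H=c\,A\|A\|^{-1}+D$ with $D$ diagonal, built from $T$ updates of operator norm $\epsilon/8$, so $\|H\|\leq T\epsilon/8=\tcO(1/\epsilon)$ and $H$ inherits the sparsity $s$ of $A$. Throughout, $H$ is stored classically as the scalar $c$ together with the accumulated diagonal $D$, bookkeeping of cost $\tcO(n)$ that feeds the block-encoding oracle. I would prepare $\rho_H$ by the purification route: construct a sparse-access block encoding of $H$ in the oracle model, apply a quantum eigenvalue / signal-processing transformation implementing a degree-$\mathrm{poly}(1/\epsilon)$ polynomial approximation of $e^{-\tilde H/2}$ (with $\tilde H=H-\lambda_{\min}(H)I\geq0$, so $\|e^{-\tilde H/2}\|\leq1$) to one half of a maximally entangled state, and then amplitude-amplify onto the success branch. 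Since $\tilde H$ has a zero eigenvalue, $\mathrm{tr}(e^{-\tilde H})\geq1$ and the success probability is $\geq 1/n$, so amplitude amplification costs $\cO(\sqrt n)$ rounds. Sparse-access block encoding of an $s$-sparse matrix (using the two sparsity oracles together with near-optimal Hamiltonian simulation) contributes the $s^{0.5+o(1)}$ factor, while the polynomial degree and simulation error contribute only $\mathrm{poly}(1/\epsilon)$ and polylogarithmic factors, giving a per-copy cost of $\tcO(\sqrt n\,s^{0.5+o(1)}\mathrm{poly}(1/\epsilon))$.

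Multiplying the three bounds gives
$\cO(\log(1/\epsilon))\cdot\tcO(\epsilon^{-2})\cdot\cO(n\epsilon^{-2})\cdot\tcO(\sqrt n\,s^{0.5+o(1)}\mathrm{poly}(1/\epsilon))=\tcO(n^{1.5}s^{0.5+o(1)}\mathrm{poly}(1/\epsilon))$,
as claimed, where the factor $n$ from the copy count and the $\sqrt n$ from amplitude amplification combine into $n^{1.5}$. The technical work I expect to be hardest is the Gibbs preparation: pinning down the $s^{0.5+o(1)}$ scaling of the sparse block encoding, choosing a controlled-degree polynomial approximation of $e^{-x/2}$ on $[0,\tcO(1/\epsilon)]$, and---most delicately---propagating the preparation and finite-sampling errors through the oracle tests so that the approximate algorithm still meets the $\epsilon$-feasibility guarantee of Theorem~\ref{thm:main} rather than merely a heuristic version of it.
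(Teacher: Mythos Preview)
Your proposal is correct and matches the paper's skeleton: binary search (Fact~\ref{fact:binsearch}) on top of $T=\tcO(\epsilon^{-2})$ Hamiltonian-Update iterations (Theorem~\ref{thm:main}), each consuming $\cO(n\epsilon^{-2})$ Gibbs-state copies to realise the two oracles (this is exactly the content of Lemma~\ref{thm:quantumoracle}). The genuine difference is in how a single Gibbs copy is produced. The paper prepares $\rho_H$ via phase-estimation-based Gibbs sampling \`a la Poulin--Wocjan, then splits the required time evolution $e^{itH}$ with Childs--Wiebe product formulas into the sparse piece $A\|A\|^{-1}$ (handled by Low's near-optimal sparse simulator, which supplies the $s^{0.5+o(1)}$) and the diagonal piece $D$ (handled through QRAM). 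You instead block-encode $H$ directly, apply a QSVT polynomial for $e^{-\tilde H/2}$ to half of a maximally entangled state, and amplitude-amplify the $\Omega(1/n)$ success branch. Both land on $\tcO(\sqrt{n}\,s^{0.5+o(1)}\,\mathrm{poly}(1/\epsilon))$ per copy; your route is the more modern packaging and collapses the product-formula layer, while the paper's decomposition keeps the roles of $A$ and the accumulated diagonal explicit and ties each to a specific literature result. One small point to tighten: your $\Omega(1/n)$ success probability hinges on shifting by the exact $\lambda_{\min}(H)$, which is not known a priori; you should either estimate it to $\cO(1)$ accuracy first (a $\tcO(\sqrt{n})$ preprocessing that is absorbed into the bound) or replace the exact shift by the known bound $\|H\|=\tcO(\log(n)/\epsilon)$ and account for the resulting $\mathrm{poly}(n^{1/\epsilon})$-type factor more carefully, since a naive shift by $-\|H\|I$ can make the success amplitude exponentially small in $\|H\|$.
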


The quantum algorithm also outputs a classical description of the Hamiltonian $H^\sharp$ corresponding to an approximately optimal, approximately feasible Gibbs state  and its value. More precisely, it outputs a real number $a$ and a diagonal matrix $D$ such that $H^\sharp=aA+D$ and $n\rho_{H^\sharp}$ is $\cO(n\epsilon)$ close in trace distance to a feasible point of \textsc{MaxQP SDP}.
Moreover, we have the potential to produce samples from the associated approximately optimal Gibbs state $\rho^\sharp = \exp (-H^\sharp)/\tr{\exp (-H^\sharp)}$ in sub-linear runtime $\tcO (\sqrt{n})$  on a quantum computer. 
In the next section we show that the output of the algorithm is enough to give rise to good randomized roundings.

\subsection{Randomized rounding} \label{sec:results_rounding}

The renormalized \textsc{MaxQP SDP} \eqref{equ:maxqpsdp} arises as a convex relaxation of an important quadratic optimization problem \eqref{eq:maxqp}. However, the optimal solution $X^\sharp$ is typically not of the form $|x \rangle \! \langle x|$, with $x \in \left\{ \pm 1 \right\}^n$. 
Goemans and Williamson \cite{Goemans1995} pioneered randomized rounding techniques that allow for converting $X^\sharp$ into a cut $x^\sharp$ that is close to optimal. However, their rounding techniques rely on the underlying matrix being entrywise positive and a more delicate analysis is required to derive analogous results for broader classes of matrices. We will now follow the analysis of~\citep{Alon2006} to do the randomized rounding for the $\infty\to 1$ norm. First, let us make the connection between this norm and the \textsc{MaxQP SDP} clearer. Let $A$ be a real matrix and define 
\begin{align*}
    A'=\left( \begin{array}{cc}
0& A \\
A^T &0
\end{array}
\right).
\end{align*}
It is easy to see that for two binary vectors $x,y\in \{\pm1\}^n$ we have $\langle x\oplus y|A'| x\oplus y\rangle=2\langle x|A|y\rangle$
(with a slight abuse of notation, we also use the bra-ket notation for inner products of unnormalized vectors). This immediately shows that $2\|A\|_{1\to\infty}=\max_{z\in \{\pm 1\}^{2n}}\langle z|A'|z\rangle$, which is an instance of \textsc{MaxQP}.
We will now show that the rounding procedure is stable, i.e.\ randomized rounding of an approximately feasible, approximately optimal point, such as the ones outputted by the quantum algorithm, still results in a good binary vector for approximating this norm. We strengthen the stability of the rounding even further by showing that rounding with a truncated Taylor expansion of the solution is still good enough, saving runtime. The rounding procedure is described in Algorithm~\ref{alg:rounding}.

\begin{algorithm}[H]
\begin{algorithmic}[1] 
\Function{RandomizedRounding}{$H^\sharp,\epsilon$}
\State Draw a random vector $g\in\R^n$ with i.i.d. $\mathcal{N}(0,1)$ entries.
\State Compute $z=\sum_{k=0}^l\frac{(-H^\sharp)^k}{2^kk!}g$ for $l=\cO (\|H^\sharp \|+\log (1/\epsilon))$.%
\State \textbf{output} $x_i=\sign(z_i)$.
\EndFunction
\end{algorithmic}
\caption{\textit{Randomized rounding based on optimal Hamiltonian $H^\sharp$}}\label{alg:rounding}
\end{algorithm}

\begin{prop} \label{prop:rounding_main}
Let $A$ be a real matrix and $H^\sharp$ be such that $\rho^\sharp = \exp (-H^\sharp)/\mathrm{tr}(\exp (-H^\sharp))$ is an $\epsilon$-approximate solution to the renormalized \textsc{MaxQPSDP} for $A'$ \eqref{equ:maxqpsdp} with value $\alpha^\sharp = \mathrm{tr} \left( A' \| A' \|^{-1} \rho^\sharp \right)$. 
Then, the (random) output $x=(x_1\oplus x_2) \in \left\{ \pm 1 \right\}^{2n}$ of Algorithm~\ref{alg:rounding} can be computed in $\tcO (ns)$-time and obeys
\begin{equation*}
\gamma n \| A \| \left(\alpha^\sharp - \mathcal{O}(\epsilon)\right) \leq \mathbb{E} \langle x_1| A |x_2 \rangle \leq n \| A \| (\alpha^\sharp+\mathcal{O}(\epsilon)),
\end{equation*}
where $\gamma=2/\pi$ for $A$ p.s.d. and $4/\pi-1$ else.
\end{prop}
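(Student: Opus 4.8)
The plan is to analyze the Goemans--Williamson-style random hyperplane rounding applied to the (truncated) Gibbs state, following the strategy of~\citep{Alon2006} but carefully tracking two sources of error: the approximate feasibility of $\rho^\sharp$ and the truncation of the matrix exponential used in Algorithm~\ref{alg:rounding}. Write $\rho^\sharp = \exp(-H^\sharp)/\tr{\exp(-H^\sharp)}$ and recall that the standard random-projection rounding of an (exactly feasible) SDP solution $X$ produces $x_i = \sign(z_i)$ with $z = X^{1/2} g$ for Gaussian $g$, and that the classical identity
\begin{equation*}
\mathbb{E}\left[\sign(z_i)\sign(z_j)\right] = \tfrac{2}{\pi}\arcsin\!\left(\frac{X_{ij}}{\sqrt{X_{ii}X_{jj}}}\right)
\end{equation*}
holds. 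First I would establish the ideal (Grothendieck-type) bound: with $X = \rho^\sharp$ (rescaled so the diagonal is $1/n$), the expected value $\mathbb{E}\langle x_1|A|x_2\rangle$ equals $n\|A\|$ times a $\tfrac{2}{\pi}\arcsin$-transformed version of $\tr{A\|A\|^{-1}\rho^\sharp}$, and the constant $\gamma$ (namely $2/\pi$ in the psd case and $4/\pi-1$ in general) emerges exactly as in~\citep{Alon2006} by comparing the $\arcsin$ kernel against the identity using the Taylor expansion of $\arcsin$ and, in the general case, the positivity/Schur-product structure of the off-diagonal block $A'$.

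Next I would handle the two perturbations. For approximate feasibility, the diagonal of $\rho^\sharp$ satisfies $\sum_i|\langle i|\rho^\sharp|i\rangle - 1/n|\leq\epsilon$ rather than exact equality, so I would show the expected rounded value changes by at most $\cO(n\|A\|\epsilon)$; the key technical input here is a continuity/stability estimate bounding how much the $\arcsin$ kernel and the renormalization $X_{ii}^{-1/2}$ move under an $\epsilon$-perturbation of the diagonal in $\ell_1$, together with the operator-norm bound $|\tr{A\|A\|^{-1}(\rho^\sharp - \tilde\rho)}|\leq \|\rho^\sharp-\tilde\rho\|_{tr}$ from Matrix H\"older. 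For the truncation, note that Algorithm~\ref{alg:rounding} computes $z = \sum_{k=0}^{l}\frac{(-H^\sharp)^k}{2^k k!}g$, which is exactly $\exp(-H^\sharp/2)g$ truncated at order $l=\cO(\epsilon^{-1}\log n)$; I would invoke the same Taylor-truncation trace-distance bound used for Corollary~\ref{cor:classical} to argue that the effective covariance $\tilde X = z z^{\mathsf T}$-inducing matrix is $\cO(\epsilon)$-close to $\rho^\sharp$ in trace norm, hence changes the rounded value by a further $\cO(n\|A\|\epsilon)$. The $\tcO(ns)$ runtime claim then follows because each of the $l$ terms costs one sparse matrix--vector multiply ($\cO(ns)$), giving $\cO(ns\,\epsilon^{-1}\log n)$ total.

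The main obstacle I anticipate is the stability analysis of the $\arcsin$-rounding under diagonal perturbations. The clean Grothendieck identity presumes the normalized correlation $X_{ij}/\sqrt{X_{ii}X_{jj}}$; when $X_{ii}$ deviates from $1/n$ the normalization factors fluctuate, and since there are $n^2$ entries each contributing to the quadratic form, one must ensure these fluctuations aggregate to only $\cO(n\|A\|\epsilon)$ rather than something larger. I would control this by separating the contribution into a ``diagonal-correction'' term, bounded using $\|A\|$ and the $\ell_1$ diagonal-feasibility constraint, and a ``correlation'' term handled by the Lipschitz property of $\arcsin$ on the relevant domain. The psd-versus-general dichotomy in $\gamma$ would require the separate argument of~\citep{Alon2006}: in the psd case one uses that $\arcsin(t)\geq t$ so the whole correction is one-signed, whereas in the general case one splits $\arcsin$ into its linear part (giving the SDP value) and a remainder that is itself the value of a psd rounding, yielding the $4/\pi-1$ constant after optimizing the trade-off.
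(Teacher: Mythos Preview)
Your high-level decomposition (ideal rounding $\to$ approximate feasibility perturbation $\to$ truncation perturbation) matches the paper's, and your description of the Rietz/Alon--Naor mechanism for extracting the constants $2/\pi$ and $4/\pi-1$ is essentially right. However, two of your technical steps are handled differently in the paper, and one of them exposes a genuine gap in your proposal.

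\textbf{Diagonal perturbation.} You propose to control this via ``the Lipschitz property of $\arcsin$ on the relevant domain''. But $\arcsin$ is not Lipschitz near $\pm 1$, and nothing prevents some normalized correlations $\rho^\sharp_{ij}/\sqrt{\rho^\sharp_{ii}\rho^\sharp_{jj}}$ from being exactly $\pm 1$. The paper avoids $\arcsin$ altogether: it uses the Rietz identity to write $(\pi/2)\,\mathbb{E}[\sign\cdot\sign]$ as a \emph{linear} term $\sigma_{ij}=\rho^\sharp_{ij}/(2n\sqrt{\rho^\sharp_{ii}\rho^\sharp_{jj}})$ plus a psd Gram matrix $T$ with constant diagonal $\pi/2-1$. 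The $T$-contribution is bounded uniformly by the SDP optimum (since $T$ is feasible after scaling), while the linear term is handled by showing $\|\rho^\sharp_{\bar B}-\sigma_{\bar B}\|_{tr}=\cO(\epsilon)$ on the good index set $\bar B$, which is a pure trace-norm computation (the paper's Lemma~\ref{lem:multdiagonalnotchange}). Your ``split $\arcsin$ into linear plus psd remainder'' is the same idea said differently, but you should drop the Lipschitz-of-$\arcsin$ language and argue via this decomposition directly.

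\textbf{Truncation.} This is where your argument has a real gap. You want to conclude ``truncated covariance is $\cO(\epsilon)$-close to $\rho^\sharp$ in trace norm, hence the rounded value changes by $\cO(n\|A\|\epsilon)$''. But the rounded value depends on the covariance through $\arcsin$ of the \emph{normalized} entries, and trace-norm closeness of the covariances does not by itself control the normalized versions (small diagonals blow up under normalization) nor the $\arcsin$-transformed matrix. The paper takes a qualitatively different route: it uses Gaussian anti-concentration to show that for every $i\in\bar B$ the random variable $[\exp(-H^\sharp/2)g]_i$ has variance $\Theta(\tr{\exp(-H^\sharp)}/n)$ and is therefore bounded away from zero in absolute value with high probability, while the Taylor truncation error $\|(S_l-\exp(-H^\sharp/2))g\|$ is much smaller. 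Hence $\sign[(S_l g)_i]=\sign[(\exp(-H^\sharp/2)g)_i]$ for all $i\in\bar B$ simultaneously with probability $1-\cO(1/n)$. Finally, since $|\bar B^c|=\cO(n\epsilon^2)$, two sign vectors differing on $\cO(n\epsilon^2)$ coordinates have $\langle x,Ax\rangle$ differing by $\cO(\epsilon n\|A\|)$ via Cauchy--Schwarz. This sign-preservation argument is the missing idea in your proposal; trying to push the stability through the covariance instead requires extra control of the truncated diagonal that you have not provided.
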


This rounding procedure is fully classical and can be executed in runtime $\tcO(ns)$. We refer to Sec.~\ref{sec:rounding} for details. 
What is more, it applies to both quantum and classical solutions of the \textsc{MaxQP SDP}. Even the quantum algorithm provides $H^\sharp$ in classical form, while the associated $\rho^\sharp$ is only available as a quantum state. 
Rounding directly with $\rho^\sharp$ would necessitate a fully quantum rounding technique that, while difficult to implement and analyze, seems to offer no advantages over the classical Algorithm~\ref{alg:rounding}. Thus, it is possible to perform the rounding even with the output of the quantum algorithm. We prove this theorem in two steps. First, we follow the proof technique of~\citep{Alon2006} to show that our relaxed notion of approximately feasible is still good enough to ensure a good rounding in expectation. This shows that our notion of feasibility is strong enough for the problem at hand. The stability of the rounding w.r.t. to truncation of the Taylor series then follows by showing appropriate anti-concentration inequalities for the random vector.

Note that in~\cite{Alon2006} the authors prove that the constant $\tfrac{2}{\pi}$ in Proposition~\ref{prop:rounding_main} is optimal.

\subsection{Comparison to existing work} \label{sub:related}

The \textsc{MaxQP SDP} has already received a lot of attention in the literature. Table~\ref{tab:comparison} contains a runtime comparison between the contributions of this work and the best existing classical results \cite{Arora2005,Arora2007}. It highlights regimes, where we obtain both classical and quantum speedups. 
In a nutshell, Hamiltonian Updates outperforms state of the art algorithms whenever the target matrix $A$ has both positive and negative off-diagonal entries and the optimal value of the SDP scales as $n\|A\|$. 
It is worthwhile to explore the following examples.

\subsubsection{Quadratic quantum speedups and classical speedups for generic instances and spin glasses}

Recall that Hamiltonian Updates can only offer speedups for \textsc{MaxQP SDP} instances where the optimal value scales like $n\|A\|$, as opposed to the $\|A\|_{\ell_1}$ scaling required for MMW. Intuitively speaking, such a scaling should arise whenever $A$ has both positive and negative entries, causing cancellations. In order to formalize this intuition, we show that Hamiltonian Updates offers speedups for generic matrices that have both positive and negative entries, see Appendix~\ref{app:randommatrices} for details. Our main result is as follows. Suppose that $A$ is a random Hermitian matrix with entries
\begin{align}\label{equ:SKmodel}
    A_{ij}=\tau_{ij}(g_{ij}+\lambda),
\end{align}
where $g_{ij}$ are independent centered random variables with a bounded fourth moment, $\tau_{ij}$ is a Bernoulli random variable with parameter $p$ and $\lambda>0$ is some fixed parameter. 
This random generative model covers many relevant \textsc{MaxQP} instances.
Note that if we set $p=\frac{s}{n}$, the matrix $A$ is $\cO(s)$ sparse in expectation. Let us first discuss the (centered) $\lambda=0$-case in more detail.
There, $\mathbb{E} \| A \|_{\ell_1} = \Theta (n s)$, $\|A\|_{\infty\to 1}=\Theta(n\sqrt{s})$, $\mathbb{E}\|A\|=\Theta(\sqrt{s})$ and concrete realizations of $A$ concentrate sharply around these expected values.
These concentration arguments are derived in Appendix~\ref{app:randommatrices} and imply that, indeed, $n\|A\|$, and not $\| A \|_{\ell_1}$, provides the right scaling for such generic instances. The scaling for MMW \cite{Arora2005} is
$\tcO(\min\{(n/\epsilon)^{2.5}s, n^{3}\alpha^{-1}\|A\|_{\ell_1}\epsilon^{-3.5}\})$ to achieve an error of $\epsilon\|A\|_{\ell_1}$. Thus, to obtain a multiplicative error for such instances using MMW we need to divide $\epsilon$ by $s^{-\frac{1}{2}}$, yielding an expected scaling of $\tcO(\min\{(n/\epsilon)^{2.5}s^{4.5}, n^{3}s^{2.25}\epsilon^{-3.5}\})$.
This implies that the runtime of Hamiltonian Updates improves upon MMW \cite{Arora2005}, both classically and quantumly.

To the best of our knowledge, the quantum implementation of Hamiltonian Updates establishes the first quantum speedup for problems of this type. Corollary~\ref{cor:quantum} establishes a nearly quadratic speedup for generic \textsc{MaxQP SDP} instances compared to the current state of the art SDP solvers.

It is worth noting that the random matrix defined in Eq.~\eqref{equ:SKmodel} corresponds to a widely studied model in spin glasses: the \emph{(diluted) Sherrington-Kirkpatrick (SK) model}~\cite{Panchenko2013,talagrand_diluted_2011}. 
This problem has received considerable attention in the statistical physics literature. In particular, recent work~\cite{Montanari2018} shows that, under some conjectures, it is possible to approximately solve the quadratic optimization in~\eqref{eq:maxqp} with high probability in time $\tcO(n^2)$ for the standard, undiluted SK model ($\tau_{ij}=1$). This is the same time complexity as our quantum solver, as the target matrix of these instances is dense ($s=\Omega(n)$). To the best of our knowledge, a variation of~\cite{Montanari2018} for the diluted model ($p< 1$) has not yet been discussed.

Furthermore,  there is an integrality gap for the \textsc{SDP} relaxation of this problem in the Gaussian setting \cite{Kunisky2019,montanari_semidefinite_2016} whenever $\lambda=0$. As we discuss in more detail in Appendix~\ref{app:randominstances}, this implies that the value of the problem in the case $\tau_{ij}=1$ converges to the largest eigenvalue of $A$ in the limit $n \to \infty$. On top of that,~\cite{montanari_semidefinite_2016} gives a construction of an approximately optimal feasible point that can be computed in $\cO(n^\omega)$ time,
where $\omega>2$ is the exponent of matrix multiplication. Correspondingly, we do not obtain a classical speedup for such instances. Once again we refer to Appendix~\ref{app:randominstances} for more details and we are not aware of similar results for the diluted model.

Let us now discuss the undiluted case with $\lambda>1$, as the behaviour of the model is not qualitatively different from the case $\lambda=0$ for $\lambda<1$~\cite{montanari_semidefinite_2016}. To the best of our knowledge, the exact value of the \textsc{MaxQP SDP} is not known for this setting. But, numerical evidence suggests that there is no integrality gap~\cite{montanari_semidefinite_2016,Javanmard_2016}, and no constructions of approximately optimal points are known. Thus, we expect that it is this regime, where we obtain both quantum and classical speedups.

\subsubsection{Speedups for \textsc{MaxCut} and the hidden partition problem:} 
Additional structure can substantially reduce the runtime of existing MMW solvers \cite{Arora2007}. For weighted \textsc{MaxCut}, in particular, $A$ is related to the adjacency matrix of a graph and has exclusively non-negative entries. 
This additional structure facilitates the use of powerful dimensionality reduction and sparsification techniques that outperform our algorithm for general graphs. Recently, it was shown that quantum algorithms can speed up spectral graph sparsification techniques~\cite{1911.07306}. As the sparsification step dominates the complexity of these algorithms, this leads to faster solvers for \textsc{MaxCut}, albeit solving the sparsified SDP on a classical computer. 
However, these sparsification techniques do not readily apply to general problem instances, where the entries of $A$ can have both positive and negative signs (sign problem). 
We refer to Appendix~\ref{app:scalingerror} for a more detailed discussion.

More direct speedups do, however, apply for 
approximating \textsc{MaxCut} in Erd\"os-R\'enyi graphs. 
An Erd\"os-R\'enyi graph $G(n,p)$ with $n$ vertices is a random graph in which each edge is present independently at random with probability $p$. In~\cite{dembo_extremal_2017} the authors show that whenever $pn\to\infty$, the \textsc{MaxCut} of such graphs satisfies~:
\begin{align}\label{equ:rightscalingmaxcut2}
\frac{n^2p}{2}+\left(\frac{n^3p(1-p)}{2}\right)^{\frac{1}{2}}P_{*}+o(n^{\frac{3}{2}}),
\end{align}
where $P_*$ is the so-called Parisi constant. One can show that for such graphs, a random balanced partition of the vertices achieves an expected cut of value $\frac{n^2p}{2}$.
Thus, obtaining a cut of $\frac{n^2p}{2}$ up to an approximation error of order $\epsilon \frac{n^2p}{2}$ for $\epsilon$ of constant order is trivial for random graphs: just sample a random one. Thus, obtaining approximations to the \textsc{MaxCut} of such graphs is only interesting whenever we can  achieve an error scaling as $\cO(n^{\frac{3}{2}}\sqrt{p})$ and the usual Goemans-Wiliamson relaxation is not suitable. In order to address this issue, Montanari et al.~\cite{montanari_semidefinite_2016} showed that is advisable to instead solve \textsc{MaxQP} SDP relaxation for the matrix
\begin{align}\label{equ:randomadjacency}
    B=A-p\mathbf{1}^T\mathbf{1},
\end{align}
where $\mathbf{1}$ is the all ones vector and $A$ the (random) adjacency matrix of the graph.
This then has the value $2n^{\frac{3}{2}}\sqrt{p}+o(\sqrt{p/n})$ with high probability. See~[Theorem 1]\cite{montanari_semidefinite_2016} for more details. Note that the matrix in Eq.~\eqref{equ:randomadjacency} has both negative and positive entries with expected value $0$ and bounded variance.
We conclude that we are in the same setting as in the spin glasses for this dense instance and, thus, we also obtain speedups compared to MMW. However, once again the recent work~\cite{Montanari2018} shows that, under some conjectures, it is possible to approximately solve the underlying \textsc{MaxQP} for $B$ directly in time $\cO(n^2)$ with high probability.

Another relevant random graph model is that of the planted partition, whose distribution we will denote by $G(n,a/n,b/n)$ for parameters $a,b>0$. This distribution over graphs with $n$ vertices is defined as follows. First, we partition the $n$ vertices into two subsets $S_1,S_2$ with $|S_1|=n/2$ uniformly at random. Conditional on this partition we pick the edges independently at random with probabilities:
\begin{align*}
\mathbb{P}((i,j)\in E|S_1,S_2)=
\begin{cases}
\tfrac{a}{n},\quad \textrm{if } \{i,j\}\subset S_1 \textrm{ or }  \{i,j\}\subset S_2,\\
\tfrac{b}{n},\quad \textrm{ else}.
\end{cases}
\end{align*}
Solving the \textsc{MaxQP} SDP for the target matrix described in Eq.~\eqref{equ:randomadjacency} with $p=\frac{a+b}{2}$ is relevant to solving the planted partition problem~\cite{montanari_semidefinite_2016} and closely related to the model in Eq.~\eqref{equ:SKmodel} with $\lambda=\frac{a-b}{\sqrt{2(a-b)}}$. We refer to~\cite{montanari_semidefinite_2016} for details on this, but roughly speaking the problem is to decide if a graph was sampled from Erd\"os-R\'enyi distribution with parameter $p$ or from the planted partition with $p=\frac{a+b}{2}$. Note that also for the planted partition the adjacency matrix in Eq.~\eqref{equ:randomadjacency} satisfies the conditions under which we obtain speedups.

In~\cite[Theorem 3]{montanari_semidefinite_2016} the authors show that for certain parameter ranges of $a,b$, solving the \textsc{MaxQP} SDP in Eq.~\eqref{equ:randomadjacency} and using its value to decide which distribution we sampled from gives rise to a good test for this problem. As for both the planted partition and the Erd\"os-R\'enyi model the \textsc{MaxQP} SDP in Eq.~\eqref{equ:randomadjacency} can be solved faster with our methods, we obtain a speedup for this problem.

\subsubsection{Previous quantum SDP solvers:} Previous quantum SDP solvers based on primal-dual methods~\cite{Brandao2017a,Apeldoorn2017,Apeldoorn2018,Brandao2017b} with inverse polynomial dependence  on the error do not provide speedups for solving the \textsc{MaxQP SDP} in the worst case, as their complexity depends on a problem specific parameter, the width of the SDP. We refer to the aforementioned references for a precise definition of this parameter and for the complexity of the solvers under different input models. 
As shown in~\cite[Theorem 24]{Apeldoorn2017}, the width parameter scales at least linearly in the dimension $n$ for what they call \emph{combinable SDPs}~[Definition 23]\cite{Apeldoorn2017}. In a nutshell, these are SDP classes for which direct sum combinations of two instances and constraints yield another valid SDP in the class and we refer to~\cite{Apeldoorn2017} for a precise definition. For our purposes, it suffices to note that the~\textsc{MaxQP SDP} class of SDPs is combinable, as shown in~\cite{Apeldoorn2017}. Although the authors only observe that their conditions apply to \textsc{MaxCut}, it is easy to see that their results do not require any assumptions on the sign of entries of $A$. Thus, their results show that the~\textsc{MaxQP SDP} also admits instances with linearly growing  width. 
To the best of our knowledge, the solvers mentioned above have a dependence that is at least quadratic in the width and at least a $n^{\frac{1}{2}}$ dependence on the dimension. Thus, the combination of the term stemming from the width and the dimension already gives a higher complexity than our solver. One reason why we bypass these restrictions is that we do not use the primal-dual approach to solve the SDP from the aforementioned references.

However, these are worst-case bounds for MAX QP, and do not necessarily imply that our algorithm outperforms the aforementioned ones on the random instances discussed before on average if we pick them uniformly at random. In Prop.~\ref{prop:nosparsedualsolution} of Appendix~\ref{app:randominstances} we show that for the random model in~\eqref{equ:SKmodel} with $\lambda>1$ it is indeed the case that the width scales linearly with the dimension with high probability, albeit under the assumption that the problem does not have an intregrality gap. The absence of an integrality gap is supported by the numerics of~\cite{montanari_semidefinite_2016,Javanmard_2016}. These results show that previous quantum SDP solvers are likely not to provide a speedup on average for such instances with $\lambda>1$. On the other hand, we also show that for $\lambda<1$, the width does not necessarily scale with system size. These results certainly motivate further studies on the width of such randomized instances, also for the random graph models.

Another, and arguably conceptually more interesting, reason why our algorithm outperforms other solvers is how we enforce the diagonal constraint. This is what also sets us apart from other approaches based only on the primal of the SDP, like that of~\cite{Apeldoorn2018}.

Enforcing that each diagonal constraint of the renormalized~\textsc{MAXQP SDP} in Eq.~\eqref{equ:maxqpsdp} is satisfied up to an additive error, i.e.
\begin{align*}
\left|\langle i|\rho|i\rangle-1/n\right|\leq \epsilon
\end{align*}
would require an error $\epsilon$ of order $n^{-1}$ to ensure a solution with a quality comparable to ours. This would translate to an additional quadratic factor in $n$ in the runtime of other approaches, as they have a dependency on the error that is at least inverse quadratic and negate a speedup. This issue with efficiently implementing the diagonal constraints also arises for other approaches to obtain quantum speedups for SDPs beyond those based on quantum Gibbs states. Let us illustrate this by example. In~\cite{Kerenidis2018}, the authors give a quantum SDP solver whose complexity is $\widetilde{O}\left(\frac{n^{2.5}}{\xi^{2}} \mu \kappa^{3} \log \left(\epsilon^{-1}\right)\right)$. Here $\kappa$ and $\mu$ are again problem-specific parameters. On the other hand, $\xi$ is the precision to which each diagonal constraint is satisfied. As noted before, a straightforward implementation of the~\textsc{MAXQP SDP} requires $\xi$ to be at most of order $n^{-1}$, which establishes a runtime of order at least $n^{4.5}$ using those methods. 
 
Alternatively, one could also enforce the diagonal constraint globally through the variational formulation of the trace distance. That would translate to requiring that
\begin{align*}
    \tr{P(\rho-\frac{I}{n})}\leq \epsilon
\end{align*}
holds for all diagonal projectors $P$. As there are $2^n$ such projectors and previous primal-only quantum solvers have a $\sqrt{m}$ dependency on the number of constraints $m=2^n$, this approach also fails to provide a speedup.
Thus, enforcing the diagonal constraints severely limits the scope of existing quantum SDP solvers -- they do not 
readily apply and have worse runtimes than  available classical algorithms. 
Thus, we conclude that all current quantum SDP solvers do not offer speedups over state of the art classical algorithms, see Table~\ref{tab:comparison} for more details.

This discussion showcases that our technique to relax the diagonal constraints gives rise to a novel way of enforcing constraints that allows for better control of errors in quantum SDP solvers and could be used for other relevant SDPs. Moreover, the fact that the approximate solution can still be used to obtain good roundings highlights the fact that our notion of approximate feasibility does not render the problem artificially easy.

\bgroup
\def\arraystretch{1.5}

\begin{table}[]
    \centering
    \small
    \begin{tabular}{|c|c|c|c|c|}
    \hline
         \textbf{Algorithm} & \textbf{Runtime}  & \textbf{Error} & \textbf{Speedup} \\[2pt]
         \hline
         This work (Classical) & $\tcO( \min \{n^2s, n^\omega \}\epsilon^{-12})$ & $\epsilon n\|A\|$ &  -\\[2pt]
         \hline
          This work (Quantum) & $\tcO(n^{1.5}s^{0.5+o(1)}\epsilon^{-28})$ & $\epsilon n\|A\|$ &  -\\[2pt]
         \hline
         MMW~\cite{Arora2005} & $\tcO(\min\{(n/\epsilon)^{2.5}s, n^{3}\alpha^{-1}\|A\|_{\ell_1}\epsilon^{-3.5}\})$ & $\epsilon\|A\|_{\ell_1}$ &$\|A\|_{\ell_1}\geq n\|A\|$, $\epsilon=\Theta(1)$\\[2pt]
         \hline
         Interior-point~\cite{Lee2015} & $\cO(n^{\omega+1}\log(\epsilon^{-1}))$& $\epsilon$  & $\epsilon=\Theta(1)$\\
         
         \hline
         MMW for \textsc{MaxCut}~\cite{Arora2007} 
 & $\tcO(ns)$ & $\epsilon\|A\|_{\ell_1}$  & Erd\"os-R\'enyi random graphs\\
(non-negative entries only) & & & \\
         \hline

    \end{tabular}
    \caption{comparison of different classical algorithms to solve the original \textsc{MaxQP SDP} \eqref{eq:maxqpsdpnormal}. 
The speedup column clarifies in which regimes we obtain speedups and $\omega$ denotes the exponent of matrix multiplication. Here $\alpha$ corresponds to the value of MAXQP SDP.}
    \label{tab:comparison}
\end{table}
\egroup

Finally, we want to point out that subtleties regarding error scaling do not arise for \textsc{MaxCut}. If $A$ is the adjacency matrix of a $d$-regular graph on $n$ vertices, then $n \| A \| = nd = \| A \|_{\ell_1}$ and the different errors in Table~\ref{tab:comparison} all coincide.

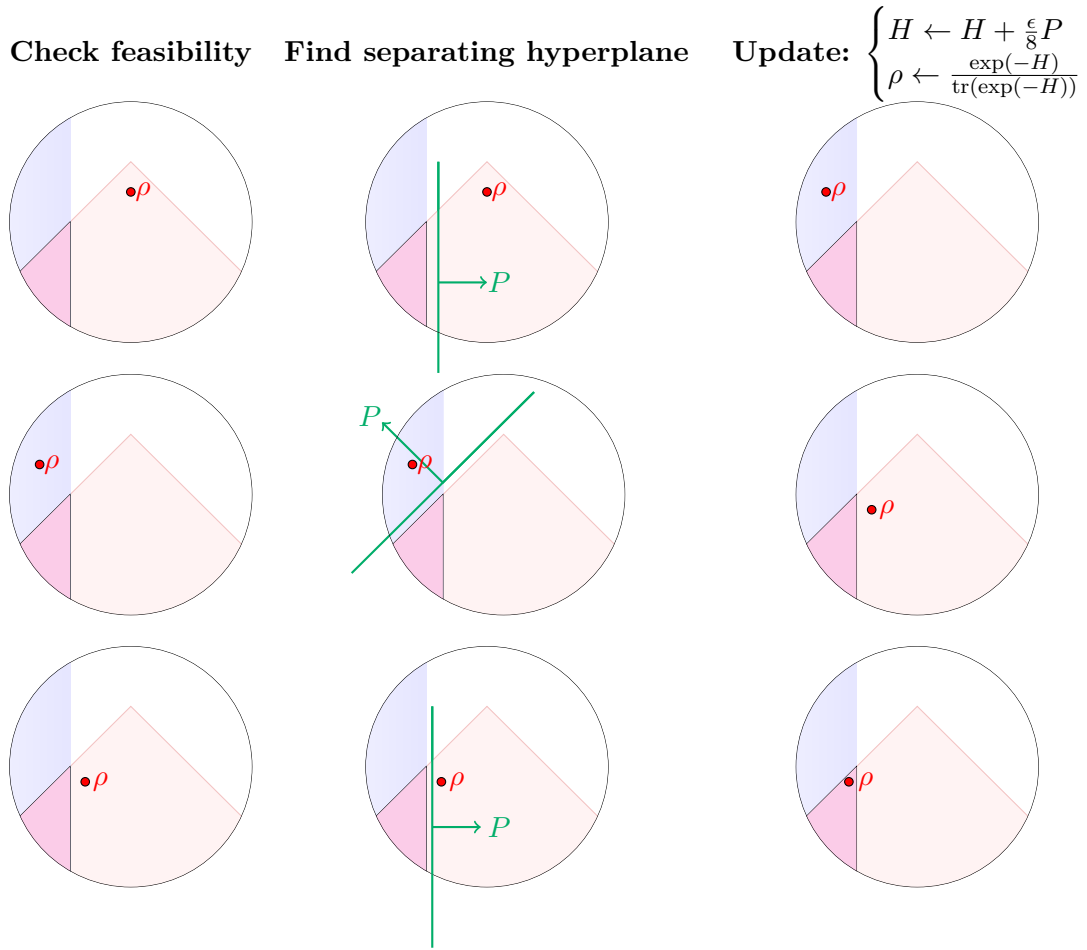
\begin{figure}

\begin{center}
\begin{tabular}{ccc}
\textbf{Check feasibility} & \textbf{Find separating hyperplane} &\textbf{ Update:} 
$\begin{cases}
H \leftarrow  H + \tfrac{\epsilon}{16} P &\\
\rho \leftarrow \frac{\exp (-H)}{\mathrm{tr}(\exp (-H))}&
\end{cases}$ \\
\begin{tikzpicture}[baseline,scale=0.8]
\clip (0,1) circle (2cm);
\fill[white] (2,0) -- (0,2) -- (-2,0) -- (0,-2) -- (2,0);
\draw[red,opacity=0.2] (2,0) -- (0,2) -- (-2,0) -- (0,-2) -- (2,0);
\draw[fill=red, opacity =0.05] (2,0) -- (0,2) -- (-2,0) -- (0,-2) -- (2,0);
\draw[blue,opacity=0.2] (-1,-3) -- (-1,3);
\shade[left color=white, right color=blue!10] (-4,-3) rectangle (-1,3);
\draw (-2,0) -- (-1,1) -- (-1,-1) -- (-2,0);
\fill[white] (-2,0) -- (-1,1) -- (-1,-1) -- (-2,0);
\fill[magenta, opacity=0.2] (-2,0) -- (-1,1) -- (-1,-1) -- (-2,0);
\draw[fill=red] (0,1.5) circle (2pt);
\node at (0.2,1.5) {\textcolor{red}{$\rho$}};
\draw (0,1) circle (2cm);
\end{tikzpicture}
&
\begin{tikzpicture}[baseline,scale=0.8]
\draw[thick,Green] (-0.8,-1.5) -- (-0.8,2);
\clip (0,1) circle (2cm);
\fill[white] (2,0) -- (0,2) -- (-2,0) -- (0,-2) -- (2,0);
\draw[red,opacity=0.2] (2,0) -- (0,2) -- (-2,0) -- (0,-2) -- (2,0);
\draw[fill=red, opacity =0.05] (2,0) -- (0,2) -- (-2,0) -- (0,-2) -- (2,0);
\draw[blue,opacity=0.2] (-1,-3) -- (-1,3);
\shade[left color=white, right color=blue!10] (-4,-3) rectangle (-1,3);
\draw (-2,0) -- (-1,1) -- (-1,-1) -- (-2,0);
\fill[white] (-2,0) -- (-1,1) -- (-1,-1) -- (-2,0);
\fill[magenta, opacity=0.2] (-2,0) -- (-1,1) -- (-1,-1) -- (-2,0);
\draw[fill=red] (0,1.5) circle (2pt);
\node at (0.2,1.5) {\textcolor{red}{$\rho$}};
\draw[thick,Green] (-0.8,-2) -- (-0.8,2);
\draw[thick,Green,->] (-0.8,0) -- (0,0);
\node at (0.2,0) {\textcolor{Green}{$P$}};
\draw (0,1) circle (2cm);
\end{tikzpicture}
&
\begin{tikzpicture}[baseline,scale=0.8]
\clip (0,1) circle (2cm);
\fill[white] (2,0) -- (0,2) -- (-2,0) -- (0,-2) -- (2,0);
\draw[red,opacity=0.2] (2,0) -- (0,2) -- (-2,0) -- (0,-2) -- (2,0);
\draw[fill=red, opacity =0.05] (2,0) -- (0,2) -- (-2,0) -- (0,-2) -- (2,0);
\draw[blue,opacity=0.2] (-1,-3) -- (-1,3);
\shade[left color=white, right color=blue!10] (-4,-3) rectangle (-1,3);
\draw (-2,0) -- (-1,1) -- (-1,-1) -- (-2,0);
\fill[white] (-2,0) -- (-1,1) -- (-1,-1) -- (-2,0);
\fill[magenta, opacity=0.2] (-2,0) -- (-1,1) -- (-1,-1) -- (-2,0);
\draw[fill=red] (-1.5,1.5) circle (2pt);
\node at (-1.3,1.5) {\textcolor{red}{$\rho$}};
\draw (0,1) circle (2cm);
\end{tikzpicture}
 \\
\begin{tikzpicture}[baseline,scale=0.8]
\draw[white] (0,-1.5) -- (1,-1.5); %
\clip (0,1) circle (2cm);
\fill[white] (2,0) -- (0,2) -- (-2,0) -- (0,-2) -- (2,0);
\draw[red,opacity=0.2] (2,0) -- (0,2) -- (-2,0) -- (0,-2) -- (2,0);
\draw[fill=red, opacity =0.05] (2,0) -- (0,2) -- (-2,0) -- (0,-2) -- (2,0);
\draw[blue,opacity=0.2] (-1,-3) -- (-1,3);
\shade[left color=white, right color=blue!10] (-4,-3) rectangle (-1,3);
\draw (-2,0) -- (-1,1) -- (-1,-1) -- (-2,0);
\fill[white] (-2,0) -- (-1,1) -- (-1,-1) -- (-2,0);
\fill[magenta, opacity=0.2] (-2,0) -- (-1,1) -- (-1,-1) -- (-2,0);
\draw[fill=red] (-1.5,1.5) circle (2pt);
\node at (-1.3,1.5) {\textcolor{red}{$\rho$}};
\draw (0,1) circle (2cm);
\end{tikzpicture}
&
\begin{tikzpicture}[baseline,scale=0.8]
\node at (-2.2,2.3) {$\textcolor{Green}{P}$};
\draw[thick,Green] (-2.5,-0.5+0.2) -- (0.5,2.5+0.2);
\draw[thick,Green,->] (-1,1+0.2) -- (-2,2+0.2);
\clip (0,1) circle (2cm);
\fill[white] (2,0) -- (0,2) -- (-2,0) -- (0,-2) -- (2,0);
\draw[red,opacity=0.2] (2,0) -- (0,2) -- (-2,0) -- (0,-2) -- (2,0);
\draw[fill=red, opacity =0.05] (2,0) -- (0,2) -- (-2,0) -- (0,-2) -- (2,0);
\draw[blue,opacity=0.2] (-1,-3) -- (-1,3);
\shade[left color=white, right color=blue!10] (-4,-3) rectangle (-1,3);
\draw (-2,0) -- (-1,1) -- (-1,-1) -- (-2,0);
\fill[white] (-2,0) -- (-1,1) -- (-1,-1) -- (-2,0);
\fill[magenta, opacity=0.2] (-2,0) -- (-1,1) -- (-1,-1) -- (-2,0);
\draw[thick,Green] (-2.5,-0.5+0.2) -- (0.5,2.5+0.2);
\draw[thick,Green,->] (-1,1+0.2) -- (-2,2+0.2);
\node at (-2.1,2.1+0.2) {\textcolor{Green}{$S$}};
\draw[fill=red] (-1.5,1.5) circle (2pt);
\node at (-1.3,1.5) {\textcolor{red}{$\rho$}};
\draw (0,1) circle (2cm);
\end{tikzpicture}
&
\begin{tikzpicture}[baseline,scale=0.8]
\clip (0,1) circle (2cm);
\fill[white] (2,0) -- (0,2) -- (-2,0) -- (0,-2) -- (2,0);
\draw[red,opacity=0.2] (2,0) -- (0,2) -- (-2,0) -- (0,-2) -- (2,0);
\draw[fill=red, opacity =0.05] (2,0) -- (0,2) -- (-2,0) -- (0,-2) -- (2,0);
\shade[left color=white, right color=blue!10] (-4,-3) rectangle (-1,3);
\draw (-2,0) -- (-1,1) -- (-1,-1) -- (-2,0);
\fill[white] (-2,0) -- (-1,1) -- (-1,-1) -- (-2,0);
\fill[magenta, opacity=0.2] (-2,0) -- (-1,1) -- (-1,-1) -- (-2,0);
\draw[fill=red] (-0.75,0.75) circle (2pt);
\node at (-0.5,0.75) {\textcolor{red}{$\rho$}};
\draw (0,1) circle (2cm);
\end{tikzpicture}
\\
\begin{tikzpicture}[baseline,scale=0.8]
\draw[white] (0,-1.5) -- (1,-1.5); %
\clip (0,1) circle (2cm);
\fill[white] (2,0) -- (0,2) -- (-2,0) -- (0,-2) -- (2,0);
\draw[red,opacity=0.2] (2,0) -- (0,2) -- (-2,0) -- (0,-2) -- (2,0);
\draw[fill=red, opacity =0.05] (2,0) -- (0,2) -- (-2,0) -- (0,-2) -- (2,0);
\shade[left color=white, right color=blue!10] (-4,-3) rectangle (-1,3);
\draw (-2,0) -- (-1,1) -- (-1,-1) -- (-2,0);
\fill[white] (-2,0) -- (-1,1) -- (-1,-1) -- (-2,0);
\fill[magenta, opacity=0.2] (-2,0) -- (-1,1) -- (-1,-1) -- (-2,0);
\draw[fill=red] (-0.75,0.75) circle (2pt);
\node at (-0.5,0.75) {\textcolor{red}{$\rho$}};
\draw (0,1) circle (2cm);
\end{tikzpicture}
&
\begin{tikzpicture}[baseline,scale=0.8]
\draw[white] (0,-1.5) -- (1,-1.5); %
\draw[thick,Green] (-0.9,-2) -- (-0.9,2);
\clip (0,1) circle (2cm);
\fill[white] (2,0) -- (0,2) -- (-2,0) -- (0,-2) -- (2,0);
\draw[red,opacity=0.2] (2,0) -- (0,2) -- (-2,0) -- (0,-2) -- (2,0);
\draw[fill=red, opacity =0.05] (2,0) -- (0,2) -- (-2,0) -- (0,-2) -- (2,0);
\shade[left color=white, right color=blue!10] (-4,-3) rectangle (-1,3);
\draw (-2,0) -- (-1,1) -- (-1,-1) -- (-2,0);
\fill[white] (-2,0) -- (-1,1) -- (-1,-1) -- (-2,0);
\fill[magenta, opacity=0.2] (-2,0) -- (-1,1) -- (-1,-1) -- (-2,0);
\draw[fill=red] (-0.75,0.75) circle (2pt);
\node at (-0.5,0.75) {\textcolor{red}{$\rho$}};
\draw (0,1) circle (2cm);
\draw[thick,Green] (-0.9,-2) -- (-0.9,2);
\draw[thick,Green,->] (-0.9,0) -- (-0.1,0);
\node at (0.2,0) {\textcolor{Green}{$P$}};
\end{tikzpicture}
&
\begin{tikzpicture}[baseline,scale=0.8]
\draw[white] (0,-1.5) -- (1,-1.5); %
\clip (0,1) circle (2cm);
\fill[white] (2,0) -- (0,2) -- (-2,0) -- (0,-2) -- (2,0);
\draw[red,opacity=0.2] (2,0) -- (0,2) -- (-2,0) -- (0,-2) -- (2,0);
\draw[fill=red, opacity =0.05] (2,0) -- (0,2) -- (-2,0) -- (0,-2) -- (2,0);
\shade[left color=white, right color=blue!10] (-4,-3) rectangle (-1,3);
\draw (-2,0) -- (-1,1) -- (-1,-1) -- (-2,0);
\fill[white] (-2,0) -- (-1,1) -- (-1,-1) -- (-2,0);
\fill[magenta, opacity=0.2] (-2,0) -- (-1,1) -- (-1,-1) -- (-2,0);
\draw[fill=red] (-1.125,0.75) circle (2pt);
\node at (-0.85,0.75) {\textcolor{red}{$\rho$}};
\draw (0,1) circle (2cm);
\end{tikzpicture}
\end{tabular}
\end{center}
\caption{\textit{Caricature of Hamiltonian Update iterations in Algorithm~\ref{alg:HU}:} Schematic illustration of the intersection of three convex sets (i) a halfspace (blue), (ii) a diamond-shaped convex set (red) and (iii) the set of all quantum states (clipped circle). Algorithm~\ref{alg:HU} (Hamiltonian Updates) approaches a point in the convex intersection (magenta) of all three sets by iteratively checking feasibility (left column), identifying a separating hyperplane (central column) and updating the matrix exponent to penalize infeasible directions (right column).}
\label{fig:HU}
\end{figure}

\section{Technical details and proofs}

\subsection{Proof of Theorem~\ref{thm:main}} \label{sec:proof_main}

By construction, Algorithm~\ref{alg:HU} (Hamiltonian Updates) terminates as soon as it has found a quantum state $\rho$ that is $\epsilon$-close to being feasible. 
Correctly flagging infeasibility is the more interesting aspect of Theorem~\ref{thm:main} (convergence to feasible point). Several variations of the statement and proof below can be found in the literature~\cite{Tsuda2005,Hazan2006,Arora2007,Steurer2015,Bubeck2015,Brandao2017b,Aaronson_2019}, but we present it for completeness.

\begin{lem}\label{thm:algocorrect}
Suppose Algorithm~\ref{alg:HU} does not terminate after $T=\lceil 64 \log (n)/\epsilon^2 \rceil+1$ steps. Then, the feasibility problem \eqref{eq:CPfeas} is infeasible.
\end{lem}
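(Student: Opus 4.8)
The plan is to run a potential-function argument built on the quantum relative entropy, in the matrix-multiplicative-weights tradition. I argue the contrapositive: assume \eqref{eq:CPfeas} admits a feasible point $\sigma \in \mathcal{S}_n \cap \mathcal{C}_1 \cap \cdots \cap \mathcal{C}_m$ and show the algorithm must then terminate within $T$ steps. The potential I track is $\Phi_t = S(\sigma\|\rho_t)$, where $\rho_t = \exp(-H_t)/\tr{\exp(-H_t)}$ is the iterate after $t$ updates. Two easy facts bracket the whole argument: first, relative entropy is non-negative (Klein's inequality), so $\Phi_t \geq 0$ for all $t$; second, the initial potential is small, since $\Phi_0 = S(\sigma\|n^{-1}I) = \log n - S_{\mathrm{vN}}(\sigma) \leq \log n$ because the von Neumann entropy obeys $S_{\mathrm{vN}}(\sigma)\geq 0$.

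The heart of the proof is a constant per-step decrease: whenever the algorithm performs an update $H_{t+1} = H_t + \tfrac{\epsilon}{8}P$, I claim $\Phi_t - \Phi_{t+1} \geq \tfrac{\epsilon^2}{16}$. Expanding the relative entropy via $\log\rho_t = -H_t - \log\tr{\exp(-H_t)}$, the $\tr{\sigma\log\sigma}$ terms cancel and leave
\begin{equation*}
\Phi_t - \Phi_{t+1} = -\tfrac{\epsilon}{8}\tr{\sigma P} + \log\frac{\tr{\exp(-H_t)}}{\tr{\exp(-H_{t+1})}}.
\end{equation*}
To control the log-partition ratio I invoke the Golden--Thompson inequality $\tr{\exp(A+B)}\leq\tr{\exp(A)\exp(B)}$ with $A=-H_t$ and $B=-\tfrac{\epsilon}{8}P$, giving $\tr{\exp(-H_{t+1})}\leq\tr{\exp(-H_t)}\,\tr{\rho_t\exp(-\tfrac{\epsilon}{8}P)}$. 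Because the oracle guarantees $\|P\|\leq 1$, the scalar estimate $e^{-x}\leq 1-x+x^2$ valid on $[-1,1]$ lifts to the Loewner inequality $\exp(-\tfrac{\epsilon}{8}P)\preceq I - \tfrac{\epsilon}{8}P + \tfrac{\epsilon^2}{64}P^2$; combining this with $P^2\preceq I$ and $\log(1+u)\leq u$ yields
\begin{equation*}
\log\frac{\tr{\exp(-H_t)}}{\tr{\exp(-H_{t+1})}} \geq \tfrac{\epsilon}{8}\tr{\rho_t P} - \tfrac{\epsilon^2}{64}.
\end{equation*}
Substituting back, the decrease is at least $\tfrac{\epsilon}{8}\tr{(\rho_t-\sigma)P} - \tfrac{\epsilon^2}{64}$, and the defining separation property $\tr{P(\rho_t-\sigma)}\geq\epsilon$ from Definition~\ref{def:oracle} (applied to the feasible $Y=\sigma$) gives $\Phi_t-\Phi_{t+1}\geq\tfrac{\epsilon^2}{8}-\tfrac{\epsilon^2}{64}\geq\tfrac{\epsilon^2}{16}$.

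With per-step progress in hand, the conclusion is pure accounting: after $T$ updates $0 \leq \Phi_T \leq \log n - T\tfrac{\epsilon^2}{16}$, which is impossible once $T > 16\log(n)/\epsilon^2$. Hence existence of a feasible $\sigma$ forces termination within $\lceil 16\log(n)/\epsilon^2\rceil + 1$ iterations, and contrapositively failure to terminate certifies infeasibility. I expect the only delicate point to be the per-step bound --- specifically justifying the operator inequality for $\exp(-\tfrac{\epsilon}{8}P)$ and the appeal to Golden--Thompson --- since the non-commutativity of $H_t$ and $P$ is exactly what Golden--Thompson is needed to absorb. The entropy budget and the final counting are routine by comparison.
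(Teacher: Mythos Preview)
Your proof is correct and follows the same relative-entropy potential architecture as the paper: both argue the contrapositive, track $S(\sigma\|\rho_t)$, use $\Phi_0\leq\log n$, establish a per-step drop of at least $\epsilon^2/16$, and finish by counting. The only substantive difference is in how the log-partition ratio is controlled. You invoke Golden--Thompson together with the scalar bound $e^{-x}\leq 1-x+x^2$ on $[-1,1]$, which lands directly on $\tfrac{\epsilon}{8}\tr{\rho_t P}-\tfrac{\epsilon^2}{64}$ and gives a self-contained argument with a slightly better constant $7\epsilon^2/64$. The paper instead applies the Peierls--Bogoliubov inequality, which yields $\tfrac{\epsilon}{8}\tr{P\rho_{t+1}}$ (note the shifted index), and then needs an external continuity lemma bounding $\|\rho_{t+1}-\rho_t\|_{tr}\leq\epsilon/2$ to replace $\rho_{t+1}$ by $\rho_t$. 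Both routes are standard in the MMW literature; yours is marginally cleaner here since it avoids the auxiliary trace-distance estimate.
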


\begin{proof}
By contradiction.
Suppose there exists a feasible point $\rho^*$ in the intersection of all $m+1$ sets and we ran the algorithm for $T$ steps. 
Instantiate the short-hand notation $\rho_t = \rho_{H_t} = \exp (-H_t)/\mathrm{tr}(\exp (-H_t))$ for the $t$-th state and Hamiltonian in Algorithm~\ref{alg:HU}. 
Initialization with $H_0 = 0$ and $\rho_0 = I/n$ is crucial, as it implies that the quantum relative entropy between $\rho^*$ and $\rho_0$ is bounded:
\begin{equation*}
\rl{\rho^*}{\rho_0} = \tr{\rho^*\lb \log \rho^* - \log \rho_0 \rb}
\leq \log (n).
\end{equation*}
We will now show that the relative entropy between successive (infeasible) iterates $\rho_{t+1},\rho_{t}$ and the feasible state $\rho^*$ necessarily decreases by a finite amount. 
Let $P_t$ be the hyperplane that separates $\rho_t$ from the feasible set provided by the oracle. The update rule $H_{t+1}=H_{t} + \tfrac{\epsilon}{16}P_t$ then asserts
\begin{align}\label{equ:rel_entropy_diff}
\rl{\rho^*}{\rho_{t+1}} - \rl{\rho^*}{\rho_t}
=& \tr{ \rho^* (H_{t+1}-H_{t})} + \log \left( \frac{\tr{ \exp (-H_{t+1})}}{ \tr{ \exp (-H_t)}} \right)\nonumber \\
=& \tfrac{\epsilon}{16} \tr{ P_t \rho^*} - \log \left( \frac{\tr{ \exp \left(-H_{t+1}+\tfrac{\epsilon}{16}P_t \right)}}{\tr{\exp (-H_{t+1})}}\right).
\end{align}
The logarithmic ratio can be bounded using the Peierls-Bogoliubov inequality \cite[Lemma 1]{Araki1970}: $\log \left( \tr{ \exp (F+G)}\right) \geq \tr{F \exp (G)}$ provided that $\tr{\exp (G)}=1$. This implies
\begin{align}
 \log \left( \frac{\tr{ \exp \left(-H_{t+1}+\tfrac{\epsilon}{16}P_t \right)}}{\tr{\exp (-H_{t+1})}}\right) =&  \log \left( \tr{\exp \left( -H_{t+1} - \log \left( \tr{ \exp (-H_{t+1})} I+\tfrac{\epsilon}{16} P_t \right) \right)}   \right) \nonumber\\
\geq &\tr{ \tfrac{\epsilon}{16} P_t \exp \left( -H_{t+1} - \log ( \tr{\exp (-H_{t+1})})I\right)} \nonumber\\
    =& \tfrac{\epsilon}{16} \tr{ P_t \exp (-H_{t+1})/\tr{\exp (-H_{t+1})}} = \tfrac{\epsilon}{16} \tr{ P_t \rho_{t+1}}. \label{equ:peierls_bogoliubov}
\end{align}
Combining Eq.~\eqref{equ:rel_entropy_diff} with Eq.~\eqref{equ:peierls_bogoliubov} we arrive at
\begin{align*}
\rl{\rho^*}{\rho_{t+1}} - \rl{\rho^*}{\rho_t}\leq \tfrac{\epsilon}{16} \tr{ P_t (\rho^*-\rho_{t+1})}.
\end{align*}
Next, note that the updates are mild in the sense that $\rho_{t+1}$ and $\rho_t$ are close in trace distance. \cite[Lem.~16]{Brandao2017a} implies 
$\| \rho_{t_1} - \rho_t \|_{tr} \leq 2 \left( \exp (\tfrac{\epsilon}{16} \|P_t \|)-1\right) \leq \tfrac{\epsilon}{4}$, because $\| P_t \| \leq 1$ by construction and we can also assume $\tfrac{\epsilon}{16} \leq \log (2)$. 
Combining these insights with Matrix H\"older \cite[Ex.~IV.2.12]{Bhatia1997} ensures
\begin{align*}
\rl{\rho^*}{\rho_{t+1}} - \rl{\rho^*}{\rho_t} \leq & \tfrac{\epsilon}{16} \tr{P_t \rho^*} - \tfrac{\epsilon}{16} \tr{ P_t \rho_{t+1}} \\
=& \tfrac{\epsilon}{16} \left( \tr{ P_t \left( \rho_{t}-\rho_{t+1} \right)} - \tr{P_t \left( \rho_t - \rho^*\right)} \right) \\
\leq & \tfrac{\epsilon}{16} \left( \| P_t \| \| \rho_{t}-\rho_{t+1} \|_{tr} - \tr{P_t \left( \rho_t-\rho^*\right)} \right).
\end{align*}
The first contribution is bounded by $\tfrac{\epsilon}{4}\| P_t \| \leq \tfrac{\epsilon}{4}$, while Definition~\ref{def:oracle} ensures $\tr{P_t (\rho_t-\rho^*)} \geq \frac{\epsilon}{2}$ ($\rho^*$ is feasible and $P_t$ is an $\frac{\epsilon}{2}$-separation oracle for the infeasible point $\rho_t$). In summary,
\begin{equation*}
\rl{\rho^*}{\rho_{t+1}} - \rl{\rho^*}{\rho_t} \leq \tfrac{\epsilon}{16} \left( \tfrac{\epsilon}{4} - \tfrac{\epsilon}{2} \right) = - \tfrac{\epsilon^2}{64}
\quad \textrm{for all iterations} \quad t=0,\ldots,T
\end{equation*}
and we conclude
\begin{equation*}
\rl{\rho^*}{\rho_T}= \sum_{t=0}^T \left( \rl{\rho^*}{\rho_{t+1}} - \rl{\rho^*}{\rho_t}\right)+ \rl{\rho^*}{\rho_0} \leq - T \tfrac{\epsilon^2}{64} + \log (n).
\end{equation*}
This expression becomes negative as soon as the total number of steps $T$ surpasses $64 \log (n)/\epsilon^2$. A contradiction, because the quantum relative entropy is always non-negative.
\end{proof}
\subsection{Stability of the relaxed \textsc{MaxQP SDP}}
Note that even if Algorithm~\ref{alg:HU} accepts a candidate point, it does not necessarily mean that this point is exactly feasible. Theorem~\ref{thm:main} only asserts that this point is $\epsilon$-close to all sets of interest with respect to a set of observables. For the \textsc{MaxQP SDP} \eqref{equ:maxqpsdp}, this means that the outputs of the algorithm will only satisfy the diagonal constraints approximately
 and, in principle, the value of this further relaxed problem could differ significantly from the original value. In the next Proposition we show that this is not the case:

\begin{prop}\label{prop:stability}
Let $\alpha_{\epsilon^4}=\tr{A\rho}$ be the value 
attained by  -- up to accuracy $\epsilon^4$ -- solution $\rho$ to the relaxed \textsc{MaxQP SDP
}~\eqref{equ:maxqpsdp} with input matrix $A$. Then there is a quantum state $\rho^{\sharp}$ at trace distance $\cO(\epsilon)$ of $\rho$ such that $n\rho^{\sharp}$ is a feasible point of \textsc{MaxQP SDP
}~\eqref{eq:maxqpsdpnormal}. In particular:
\begin{align}\label{equ:promisevalue}
\left|\alpha_{\epsilon^4} n\|A\|-\tr{n\rho^{\sharp}A}\right|=\cO(\epsilon n\|A\|),
\end{align}
Moreover, it is possible to construct $\rho^{\sharp}$ in time $\cO(n^2)$ given the entries of $\rho$.
\end{prop}
\begin{proof}
Let $\rho$ be a solution to the relaxed \textsc{MaxQP SDP}~\eqref{equ:maxqpsdp} with relaxation parameter $\epsilon^4$. We will now construct $\rho^\sharp$ such  that $n\rho^\sharp$ is an exactly feasible point of the \textsc{MaxQP SDP}~\eqref{equ:maxqpsdp}. These modifications are mild enough to ensure that the associated SDP value will only change by $\cO(\epsilon n\|A\|)$. We proceed in two steps: (i) $\rho \mapsto \rho'$: Identify diagonal entries that substantially deviate from $1/n$ in the sense that $| \langle i| \rho| i \rangle - 1/n| >\epsilon^2/n$. Subsequently, replace $\rho_{ii}$ by $1/n$ and set all entries in the $i$-th row and $i$-th column to zero. This ensures that $\rho'$ remains positive semidefinite.
(ii) $\rho' \to R$: Replace all remaining diagonal entries by $1/n$. This may thwart positive semidefiniteness, but the following convex combination restores this feature:
\begin{equation*}
\rho^\sharp = \tfrac{1}{1+\epsilon^2} \left( R + \tfrac{\epsilon^2}{n} I \right).
\end{equation*}
By construction, this matrix is both psd and obeys $\langle i| \rho^\sharp |i \rangle=1/n$ for all $i \in \left[n\right]$. In words: it is a feasible point of the renormalized \textsc{MaxQP SDP} \eqref{equ:maxqpsdp}. 

We now show that these reformulations are mild. To this end, let $B=\left\{i:\; \left|n\langle i| \rho |i \rangle - 1\right| > \epsilon^2\right\} \subset \left[n\right]$ be the indices associated with large deviations. Without loss of generality, we can assume that these are the first $|B|$ indices. 
Then,
\begin{align}
\| \rho' - \rho \|_{tr}
=& \left\|
\left( \begin{array}{cc}
n^{-1} I_B & 0 \\
0 & \rho_{22}
\end{array}
\right) - \left(
\begin{array}{cc}
\rho_{11} & \rho_{12} \\
\rho_{21} & \rho_{22}
\end{array}
\right)
\right\|_{tr}
= \left\|
\left(
\begin{array}{cc}
n^{-1} I_B -\rho_{11}& - \rho_{12} \\
-\rho_{21} & 0
\end{array}
\right)
\right\|_{tr} \nonumber \\
\leq &\| \rho_{11}\|_{tr} + 2 \| \rho_{12} \|_{tr} + \| n^{-1} I_B \|_{tr}. \label{eq:stability_aux1}
\end{align}
Next, note that $\epsilon^4$-approximate feasibility implies $\sum_{i=1}^n \left| \langle i| \rho |i \rangle - 1/n \right| \leq \epsilon^4$. This, in turn, demands $|B| \tfrac{\epsilon^2}{n} \leq \epsilon^4$ or, equivalently $|B| \leq n \epsilon^2$. The definition of $B$ moreover asserts
\begin{equation*}
\| \rho_{22} \|_{tr} \geq (n-|B|) \tfrac{1-\epsilon^2}{n} \geq (1-\epsilon^2)^2.
\end{equation*}
Moreover, as shown in~\cite{King2003}, we have
\begin{equation*}
\left\|\left[
\begin{array}{c c}
\| \rho_{11}\|_{tr} &  \|\rho_{12}\|_{tr}\\ 
\|\rho_{12}^T\|_{tr} & \|\rho_{22}\|_{tr}
\end{array}\right]\right\|_{tr} \leq
\left\|\left[
\begin{array}{c c}
 \rho_{11} &  \rho_{12}\\ 
\rho_{12}^T & \rho_{2}
\end{array}\right]\right\|_{tr} = \| \rho \|_{tr} = \mathrm{tr}(\rho)=1.
\end{equation*}
As $\|\cdot\|_{tr}\geq\|\cdot\|_2$ (the Frobenius, or Schatten-2 norm), it follows from the last equation that
\begin{align*}
\| \rho_{11}\|_{tr}^2+2\|\rho_{12}\|_{tr}^2+\|\rho_{22}\|_{tr}^2\leq1.
\end{align*}
And, as
 $\|\rho_{22}\|_{tr}^2\geq (1-\epsilon^2)^4$, we conclude
$
\| \rho_{11}\|_{tr}^2+2\|\rho_{12}\|_{tr}^2=\cO(\epsilon^2).
$ which in turn implies $\| \rho_{11} \|_{tr} + 2 \| \rho_{12} \|_{tr} = \cO (\epsilon)$.
Inserting this relation into Eq.~\eqref{eq:stability_aux1} yields
\begin{equation}\label{equ:rhoandprimeclose}
\| \rho' - \rho \|_{tr} = \cO (\epsilon).
\end{equation}
Next, note that we obtain $R$ from $\rho'$ by just replacing all diagonal entries of $\rho'$ by $1/n$. As by construction all the diagonal elements of $\rho'$ are in the range $1/n\pm\epsilon^2/n$, we can write
\begin{align*}
    R=\rho'+D,
\end{align*}
where $D$ is a diagonal matrix whose entries are in the range $[-\epsilon^2/n,\epsilon^2/n]$. Thus, $D+\frac{\epsilon^2}{n}I$ is psd. Normalizing the trace we see that
\begin{align*}
   \rho^{\sharp}=\frac{1}{1+\epsilon^2}\left(\rho'+D+\frac{\epsilon^2}{n}I\right)
\end{align*}
is psd with diagonal entries $1/n$ and, thus, $n\rho^{\#}$ is a feasible point of \textsc{MaxQP SDP}~\eqref{eq:maxqpsdpnormal}.
We also have that:
\begin{align}\label{equ:sharpclosetoprime}
    \|\rho'-\rho^{\sharp}\|_{tr}=\frac{1}{1+\epsilon^2}\|\epsilon^{2}\rho'+D+\epsilon^2\frac{I}{n}\|_{tr}=\cO(\epsilon^2).
\end{align}
by a triangle inequality.
Thus, combining Eq.~\eqref{equ:sharpclosetoprime} and Eq.~\eqref{equ:rhoandprimeclose} we conclude from another triangle inequality that:
\begin{align*}
    \|\rho-\rho^{\sharp}\|_{tr}=\cO(\epsilon).
\end{align*}
The claim then follows from a (matrix) H\"older inequality:
\begin{align*}
 \left|\tr{n A \rho}-\tr{n A \rho^{\sharp}}\right|\leq n\|A\|\|\rho-\rho^{\sharp}\|_{tr}=\cO(n\|A\|\epsilon).
\end{align*}
Note that the proof technique above is constructive and allows us to construct a feasible point from an approximately feasible one in $\cO(n^2)$ time by manipulating the entries.
\end{proof}

\subsection{Approximately solving the \textsc{MaxQP SDP} on a classical computer} \label{sec:classical}

We will now show how to use Hamiltonian Updates (Algorithm~\ref{alg:HU}) to solve the \textsc{MaxQP SDP} \eqref{equ:maxqpsdp} on a classical computer.
It turns out that the main classical bottleneck is the cost of computing matrix exponentials $\rho = \exp (-H) / \tr{\exp (-H)}$. 
The following result, also observed in~\cite{Steurer2015}, asserts that coarse truncations of the matrix exponential already yield accurate approximations.

\begin{lem}\label{lem:errortruncation}
Fix a Hermitian $n \times n$  matrix $H$, an accuracy $\epsilon$ and let $l$ be the smallest even number that obeys $(l+1)(\log (l+1)-1) \geq 2\| H \|+ \log (n)+ \log (1/\epsilon)$. Then, the truncated matrix exponential $T_l = \sum_{k=0}^l \tfrac{1}{k!} (-H)^k$ is guaranteed to obey
\begin{equation*}
\left\| \frac{\exp (-H)}{\tr{\exp (-H)}} - \frac{T_l}{ \tr{T_l}} \right\|_{tr} \leq \epsilon.
\end{equation*}
\end{lem}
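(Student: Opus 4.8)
The plan is to bound the trace-norm difference between the normalized matrix exponential and its normalized truncation by separately controlling (a) the operator-level tail of the Taylor series and (b) the effect of renormalizing by different traces. I would work throughout in the eigenbasis of $H$, so that everything reduces to scalar estimates on the eigenvalues $\lambda_1,\ldots,\lambda_n$ of $H$, each satisfying $|\lambda_j| \le \|H\|$. Writing $R_l = \exp(-H) - T_l = \sum_{k=l+1}^\infty \tfrac{1}{k!}(-H)^k$ for the tail operator, the first task is to show $\|R_l\|_{tr}$ is small relative to $\tr{T_l}$.

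First I would estimate the tail in trace norm. Since $\|R_l\|_{tr} = \sum_j \left| \sum_{k=l+1}^\infty \tfrac{(-\lambda_j)^k}{k!}\right| \le \sum_j \sum_{k=l+1}^\infty \tfrac{\|H\|^k}{k!} = n \sum_{k=l+1}^\infty \tfrac{\|H\|^k}{k!}$, the problem becomes bounding a factorial tail. The standard estimate $\sum_{k=l+1}^\infty \tfrac{x^k}{k!} \le \tfrac{x^{l+1}}{(l+1)!}\cdot\tfrac{1}{1 - x/(l+2)}$ (geometric domination once $l+1 > x$) together with the crude bound $(l+1)! \ge \exp\!\big((l+1)(\log(l+1)-1)\big)$ converts the defining inequality on $l$ directly into $\|R_l\|_{tr} \le n \exp\!\big(\|H\| - (l+1)(\log(l+1)-1)\big)$, and the hypothesis $(l+1)(\log(l+1)-1) \ge 2\|H\| + \log n + \log(1/\epsilon)$ then forces this to be at most $\tfrac{\epsilon}{2}\exp(-\|H\|) \le \tfrac{\epsilon}{2}\tr{\exp(-H)}/n \cdot n$; more carefully I would arrange constants so the tail is controlled by $\tfrac{\epsilon}{2}\tr{\exp(-H)}$. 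The even-$l$ hypothesis guarantees $T_l \ge 0$, so $\tr{T_l} > 0$ and the normalization is well defined.

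Next I would pass from the operator bound to the normalized bound via a triangle-inequality split. Write
\begin{align*}
\left\| \frac{\exp(-H)}{\tr{\exp(-H)}} - \frac{T_l}{\tr{T_l}} \right\|_{tr}
&\le \frac{\|\exp(-H) - T_l\|_{tr}}{\tr{\exp(-H)}}
+ \left\| \frac{T_l}{\tr{\exp(-H)}} - \frac{T_l}{\tr{T_l}} \right\|_{tr} \\
&= \frac{\|R_l\|_{tr}}{\tr{\exp(-H)}}
+ \|T_l\|_{tr}\,\frac{\left|\tr{T_l} - \tr{\exp(-H)}\right|}{\tr{\exp(-H)}\,\tr{T_l}}.
\end{align*}
The first term is $\le \epsilon/2$ by the tail estimate. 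For the second term, $|\tr{T_l} - \tr{\exp(-H)}| = |\tr{R_l}| \le \|R_l\|_{tr}$, and $\|T_l\|_{tr} = \tr{T_l}$ since $T_l \ge 0$; these cancellations reduce the second term to exactly $\|R_l\|_{tr}/\tr{\exp(-H)} \le \epsilon/2$ as well, giving the claimed $\epsilon$.

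The main obstacle I anticipate is making the factorial tail estimate quantitatively match the somewhat unusual threshold condition $(l+1)(\log(l+1)-1) \ge 2\|H\| + \log n + \log(1/\epsilon)$, where the factor of $2$ on $\|H\|$ is doing real work: it absorbs both the explicit $\exp(\|H\|)$ appearing in $n\sum_{k>l}\|H\|^k/k!$ and the implicit $1/\tr{\exp(-H)} \le \exp(\|H\|)$ coming from the lower bound $\tr{\exp(-H)} \ge n\exp(-\|H\|)$. I would therefore be careful to track both of these exponential factors and verify that the two $\|H\|$ contributions together are dominated by the $2\|H\|$ in the hypothesis, while $\log n$ absorbs the dimension factor and $\log(1/\epsilon)$ supplies the final accuracy. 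The geometric-tail domination step requires $l + 2 > \|H\|$, which I would confirm is implied by the threshold condition for all relevant parameter regimes; everything else is routine scalar calculus in the eigenbasis.
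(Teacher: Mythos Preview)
Your approach is essentially identical to the paper's: the same triangle-inequality split (add and subtract $T_l/\tr{\exp(-H)}$), the same use of even $l$ to guarantee $T_l \ge 0$ and hence $\|T_l\|_{tr}=\tr{T_l}$ so that the second term collapses back to $\|R_l\|_{tr}/\tr{\exp(-H)}$, and the same combination of a factorial tail bound with the lower bound $\tr{\exp(-H)}\ge \exp(-\|H\|)$. The only cosmetic difference is that the paper passes through $\|R_l\|_{tr}\le n\|R_l\|$ and invokes the Lagrange remainder for the operator norm, whereas you bound the trace norm directly in the eigenbasis via the geometric tail estimate.
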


\begin{proof}
First note, that truncation at an even integer $l$ ensures that $T_l$ is positive semidefinite. This is an immediate consequence of the fact that even-degree Taylor expansions of the (scalar) exponential are non-negative polynomials. In particular, $\| T_l \|_{tr} = \tr{T_l}$.
Combine this with $\tr{X} \leq \| X \|_{tr} \leq n \| X \|$ for all Hermitian $n \times n$ matrices to conclude
\begin{align*}
\left\| \frac{\exp (-H)}{\tr{\exp (-H)}} - \frac{T_l}{\tr{T_l}} \right\|_{tr}
\leq & \frac{1}{\tr{\exp(-H)}} \left\| \exp (-H) - T_l \right\|_{tr} + \frac{\left| \tr{ \exp (-H)} - \tr{ T_l} \right|}{\tr{ T_l} \tr{ \exp (-H)}} \| T_l \|_{tr} \\ \\
\leq & \frac{2  \| \exp (-H) - T_l \|_{tr}}{\tr{ \exp (-H)}} 
\leq 2n \exp (\|H \|) \| \exp (-H) - T_l \|,
\end{align*}
where we have also used $\tr{\exp (-H)} \geq \| \exp (-H) \| \geq \exp (- \|H \|)$. By construction, both $\exp (-H)$ and $T_l$ commute and are diagonal in the same eigenbasis. Let $\lambda_1,\ldots, \lambda_n$ be the eigenvalues of $H$. Then, Taylor's remainder theorem asserts
\begin{align*}
\| \exp (-H) - T_l \| = \max_{1 \leq i \leq n} \left| \exp (-\lambda_i) - \sum_{k=0}^l \tfrac{1}{k!} (-\lambda)^k \right| 
\leq \frac{\max_{i}\exp \left(- \lambda_i\right)}{(l+1)!}
\leq \frac{ \exp (\|H \|)}{(l+1)!}.
\end{align*}
The value of $l$ is chosen such that
\begin{equation*}
\frac{2n \exp (2\|H \|)}{(l+1)!}
\leq \exp (2 \| H \| + \log (2) + \log (n) - 1 - (l+1)(\log (l+1)-1) )
\leq \epsilon,
\end{equation*}
because $(l+1)! \geq \mathrm{e} \left( (l+1)/\mathrm{e}\right)^{l+1}$.
\end{proof}

\begin{cor}
Given an $s$ sparse, symmetric $n \times n$ matrix $A$ and $\epsilon>0$, we can solve the \textsc{MaxQP SDP}~\eqref{equ:maxqpsdp} up to an additive error $\cO(\epsilon n\|A\|)$ in time $\tcO(\min\{n^2s,n^{\omega}\}\epsilon^{-12})$ on a classical computer.
\end{cor}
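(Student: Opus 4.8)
The plan is to assemble the final corollary directly from the three preceding ingredients: the convergence guarantee of Theorem~\ref{thm:main}, the truncation bound of Lemma~\ref{lem:errortruncation}, and the stability guarantee of Proposition~\ref{prop:stability}, together with Fact~\ref{fact:binsearch} to reduce optimization to feasibility. The key accounting is that each of these contributes a polynomial factor in $1/\epsilon$, and the claimed exponent $\epsilon^{-12}$ is the product of these contributions once we insist on solving the \emph{unrelaxed} problem up to error $\cO(\epsilon n\|A\|)$.

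\medskip
\noindent\textbf{Proof sketch.}
First I would fix the relaxation parameter. By Proposition~\ref{prop:stability}, to guarantee a final additive error $\cO(\epsilon n\|A\|)$ in the original SDP \eqref{eq:maxqpsdp1} it suffices to solve the relaxed feasibility problem to accuracy $\epsilon^4$; I therefore run Algorithm~\ref{alg:HU} with separation accuracy $\delta=\epsilon^4$. By Theorem~\ref{thm:main} (equivalently Lemma~\ref{thm:algocorrect}), the algorithm then terminates after at most $T=\lceil 16\log(n)/\delta^2\rceil+1 = \tcO(\epsilon^{-8})$ iterations, either flagging infeasibility or returning a state $\delta$-close to feasible. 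Binary search over $\lambda\in[-1,1]$ (Fact~\ref{fact:binsearch}) wraps this in only $\cO(\log(\|A\|/\epsilon))=\tcO(1)$ feasibility queries, which I would absorb into the $\tcO$.

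\medskip
Next I would bound the per-iteration cost. Within each iteration the two separation oracles $\cO_{\mathcal{A}_\lambda}$ and $\cO_{\mathcal{D}_n}$ are cheap once $\rho$ is available: they only require reading off $\tr{A\|A\|^{-1}\rho}$ and the diagonal of $\rho$. The genuine bottleneck is forming the Gibbs state $\rho=\exp(-H)/\tr{\exp(-H)}$. Here I invoke Lemma~\ref{lem:errortruncation}: it suffices to use the truncated series $T_{l}=\sum_{k=0}^{l}\tfrac{1}{k!}(-H)^{k}$ with $l=\cO((\|H\|+\log(n)+\log(1/\delta))/\log\log(\cdot))$. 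Since $H$ accumulates at most $T$ updates each of operator norm $\tfrac{\delta}{8}\|P\|\leq \tfrac{\delta}{8}$, one has $\|H\|\leq \tfrac{\delta}{8}T=\cO(\log(n)/\delta)$, so $l=\tcO(1/\delta)=\tcO(\epsilon^{-4})$. Evaluating $T_l$ amounts to $l$ matrix multiplications, each costing either $\cO(n^2 s)$ (exploiting $s$-sparsity of $A$, hence of $H$ up to the dense diagonal corrections, via sparse matrix--vector products) or $\cO(n^{\omega})$ (dense fast multiplication), so one Gibbs evaluation costs $\cO(\min\{n^2 s, n^{\omega}\}\cdot l)$.

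\medskip
Finally I would collect the factors. The total runtime is (number of binary-search steps)\,$\times$\,(iterations $T$)\,$\times$\,(cost per Gibbs evaluation), i.e.
\begin{equation*}
\tcO(1)\cdot \tcO(\epsilon^{-8})\cdot \cO\!\left(\min\{n^2 s, n^{\omega}\}\cdot \tcO(\epsilon^{-4})\right)
=\tcO\!\left(\min\{n^2 s, n^{\omega}\}\,\epsilon^{-12}\right),
\end{equation*}
where the $\epsilon^{-12}$ is exactly $\epsilon^{-8}\cdot\epsilon^{-4}$ from the iteration count and the truncation order. One must also check that truncating $\rho$ to $T_l/\tr{T_l}$ only perturbs each oracle evaluation by $\cO(\delta)$ in the relevant quantities, which follows from Lemma~\ref{lem:errortruncation} together with matrix H\"older for the objective and the triangle inequality for the trace-norm closeness to $\mathcal{D}_n$; this keeps the effective separation accuracy at $\cO(\delta)=\cO(\epsilon^4)$, preserving the stability input required by Proposition~\ref{prop:stability}. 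Converting the accepted approximately feasible $\rho$ into a genuinely feasible point of \eqref{eq:maxqpsdp1} costs an additional $\cO(n^2)$ by Proposition~\ref{prop:stability}, which is dominated.

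\medskip
I expect the main obstacle to be the bookkeeping of how truncation error, separation accuracy, and stability accuracy compose without the constants blowing up the $\epsilon$-exponent: one has to verify that running the oracles on the truncated state $T_l/\tr{T_l}$ rather than on the exact Gibbs state does not degrade feasibility beyond the $\epsilon^4$ budget demanded by Proposition~\ref{prop:stability}, and simultaneously that $\|H\|$ stays $\tcO(\epsilon^{-4})$ throughout so that the truncation order $l$ does not secretly depend on $n$ polynomially. Both hinge on the operator-norm bound $\|H_t\|\leq \tfrac{\delta}{8}T$, so the crux is ensuring the iteration count $T$ and the step size $\tfrac{\delta}{8}$ interact to give the advertised, $n$-independent, $l=\tcO(\epsilon^{-4})$.
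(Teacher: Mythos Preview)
Your proposal is correct and follows essentially the same approach as the paper: reduce to feasibility via binary search, run Hamiltonian Updates at accuracy $\delta=\epsilon^4$ (from Proposition~\ref{prop:stability}), bound $\|H_t\|=\cO(\log(n)/\delta)$ so that a Taylor truncation of order $l=\tcO(1/\delta)$ suffices by Lemma~\ref{lem:errortruncation}, and multiply the $T=\tcO(\delta^{-2})$ iterations by the $\tcO(\min\{n^2s,n^\omega\}/\delta)$ cost per iterate to obtain $\epsilon^{-12}$. Your accounting of the $\epsilon$-exponent as $\epsilon^{-8}\cdot\epsilon^{-4}$ is exactly the paper's (implicit) $\delta^{-3}\big|_{\delta=\epsilon^4}$, and your observation that $H_t$ is a diagonal perturbation of a multiple of $A$ so that matrix multiplication costs $\cO(\min\{n^2s,n^\omega\})$ is precisely the argument the paper uses.
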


Although the dependency in $\epsilon$ for our algorithm is high, we expect that a more refined analysis of the error could improve this significantly.
This is because the approximately feasible to feasible conversion behind Proposition~\ref{prop:stability} requires $\epsilon^4$ accuracy.

\begin{proof}
As each run of Algorithm~\ref{alg:HU} takes at most $\tcO(1)$ iterations, we only need to implement the oracles in time $\tcO(n^2s\epsilon^{-1})$ to establish the advertised runtime for an approximate solution. 
First, note that the operator norm $\| H_t \|$ only grows modestly with the number of iterations $t=0,\ldots,T$. This readily follows from $H_0 = 0$, and $\| H_{t+1}-H_t \| \leq \tfrac{\epsilon}{16} \|P_t \| \leq \tfrac{\epsilon}{16}$. What is more, the maximal number of steps is $T=\lceil 64 \log (n)/\epsilon^2\rceil$, implying $\| H_t \| \leq 4 \log (n) /\epsilon$ for all $t$.

In turn, Lemma~\ref{lem:errortruncation} implies that computing the Taylor series of $\exp (-H_t)$ up to a term of order $\mathcal{O}(\log (n)/\epsilon)$ suffices to compute a matrix $\tilde{\rho}_t$ that is $\tfrac{\epsilon}{4}$-close to the true iterate $ \rho_t=\exp (-H_t)/\tr{\exp (-H_t)}$ in trace distance.  Now note that the complexity of multiplying any matrix with $H_t$ is $\cO(\min\{n^2s,n^{\omega}\})$, as $H_t$ is a linear combination of a diagonal matrix and $A$. Thus, we conclude that computing $\tilde{\rho}_t$ takes time $\cO(n^{2}s\log(n)/\epsilon)$. Checking the diagonal constraints then takes time $\cO(n)$ and computing $\tr{A\|A\|^{-1}\tilde{\rho}_t}$ takes time $\cO(ns)$. This suffices to implement both $\epsilon$-separation oracles and highlights that the runtime is dominated by computing approximations of the matrix exponential.

 Finally, we show in Proposition~\ref{prop:stability} that in order to ensure an additive error of order $\cO(\epsilon n\|A\|)$ for the \textsc{MaxQP SDP}, it suffices to solve the relaxed one up to an error $\epsilon^4$, from which the claim follows and we can then convert the approximately feasible solution to a feasible solution in time $\cO(n^2)$.
\end{proof}

\subsection{Approximately solving the \textsc{MaxQP SDP} on a quantum computer} \label{sec:quantum}
We will now show how to implement $\epsilon$-separation oracles on a quantum computer.  As discussed before, implementing the oracle requires us to evaluate diagonals of the Gibbs states $\rho= \exp (-H)/\tr{\exp (-H)}$ and the value of $\tr{\rho A\|A\|^{-1}}$. These two tasks can be performed easily on a quantum computer given the ability to prepare approximate copies of the quantum state $\rho$.

\begin{lem}\label{thm:quantumoracle}
We can implement $\epsilon$-separation oracles for the \textsc{MaxQP SDP} \eqref{equ:maxqpsdp} on a quantum computer given access to $\tfrac{\epsilon}{8}$ approximate $\cO(n\epsilon^{-2})$ copies in trace distance of the input state $\rho$ and the ability to measure $\tr{A\rho}\|A\|^{-1}$. Moreover, the classical postprocessing time needed to implement the oracle is $\cO(n\epsilon^{-2})$.
\end{lem}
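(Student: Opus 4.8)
\emph{Proof plan.} The separation oracle for the \textsc{MaxQP SDP} decomposes into the two pieces $\mathcal{O}_{\mathcal{A}_\lambda}$ and $\mathcal{O}_{\mathcal{D}_n}$ described above, and I would implement each as a statistical estimation task carried out on copies of $\rho$. The plan is to show that (i) the halfspace oracle needs only $\cO(\epsilon^{-2})$ copies, while (ii) the diagonal oracle needs $\cO(n\epsilon^{-2})$ copies, so that the dominant cost—both in copies and in classical postprocessing—comes from the second oracle and matches the claimed $\cO(n\epsilon^{-2})$. Throughout I would run each estimator at an internal tolerance equal to a fixed constant fraction of $\epsilon$ and boost the success probability to $1-\delta$ at the cost of only an $\cO(\log\delta^{-1})$ multiplicative factor, which is absorbed into the stated bounds.

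For the halfspace oracle I would repeatedly measure the observable $A\|A\|^{-1}$, whose outcomes lie in $[-1,1]$ because it has operator norm one. Hoeffding's inequality then guarantees that $\cO(\epsilon^{-2}\log\delta^{-1})$ copies suffice to estimate $v=\tr{A\|A\|^{-1}\rho}$ to additive accuracy a constant fraction of $\epsilon$. I would accept when this estimate lies on the feasible side of $\lambda$ (with a suitable slack) and otherwise output the \emph{fixed} hyperplane $P=A\|A\|^{-1}$, which automatically satisfies $\|P\|\le1$. Since $\mathcal{A}_\lambda$ is exactly the halfspace cut out by this same $P$, every $Y\in\mathcal{A}_\lambda$ obeys the corresponding inequality on $\tr{PY}$, so the separation bound $\tr{P(\rho-Y)}\ge\epsilon$ follows directly from the estimated violation once the slack is chosen correctly.

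The heart of the argument is the diagonal oracle. Measuring $\rho$ in the computational basis produces i.i.d.\ samples from the distribution $p_i=\langle i|\rho|i\rangle$ on $[n]$, and the key structural fact I would establish first is the identity
\begin{equation*}
\min_{Y\in\mathcal{D}_n}\|\rho-Y\|_{tr}=\sum_{i=1}^n\left|\langle i|\rho|i\rangle-\tfrac1n\right|=\|p-u\|_1,\qquad u=\tfrac1n\mathbf{1}.
\end{equation*}
The lower bound comes from applying the trace-norm contractive dephasing channel, and the matching upper bound from the explicit choice $Y=\rho-\mathrm{diag}(p-u)$, which lies in the affine set $\mathcal{D}_n$. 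This shows that the relaxed diagonal feasibility used throughout the paper coincides exactly with trace-distance closeness to $\mathcal{D}_n$, so estimating $\|p-u\|_1$ is precisely what Definition~\ref{def:oracle} requires. I would then invoke the standard sample complexity of distribution learning in $\ell_1$: the empirical distribution $\hat p$ obtained from $m$ samples satisfies $\mathbb{E}\|\hat p-p\|_1\le\sqrt{n/m}$, and a bounded-differences concentration bound upgrades this to $\|\hat p-p\|_1\le\eta$ with probability $1-\delta$ once $m=\cO(n\eta^{-2}+\eta^{-2}\log\delta^{-1})$. Taking $\eta$ a constant fraction of $\epsilon$ gives the advertised $\cO(n\epsilon^{-2})$ copies, and tallying the samples into $n$ bins and forming $\|\hat p-u\|_1$ costs the same order of classical time.

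The main obstacle I anticipate is that the candidate hyperplane $P=\sum_{i:\hat p_i>1/n}|i\rangle\!\langle i|$ is built from the \emph{empirical} diagonal, whereas the separation guarantee must hold against the \emph{true} state. I would control this by noting that for $Y\in\mathcal{D}_n$ one has $\tr{P(\rho-Y)}=\sum_{i\in S}(p_i-\tfrac1n)$ with $S=\{i:\hat p_i>1/n\}$, and that the true positive part equals $\tfrac12\|p-u\|_1$; indices on which $\hat p$ and $p$ straddle $1/n$ contribute an error bounded by $\|\hat p-p\|_1\le\eta$, so that $\tr{P(\rho-Y)}\ge\tfrac12\|p-u\|_1-\eta$. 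A two-threshold (gap) argument then finishes the proof: I reject and output $P$ only when $\|\hat p-u\|_1$ is large enough to force $\|p-u\|_1\ge2(\epsilon+\eta)$, which makes the separation at least $\epsilon$, and I accept otherwise, in which case the identity above certifies that $\rho$ is $\cO(\epsilon)$-close to $\mathcal{D}_n$. Since Theorem~\ref{thm:main} only needs these guarantees up to constant factors in $\epsilon$, absorbing the constants into the internal tolerance completes the construction; a union bound over the $\cO(\log(n)\epsilon^{-2})$ oracle calls of Algorithm~\ref{alg:HU} (costing only logarithmic overhead) makes every call correct with high probability.
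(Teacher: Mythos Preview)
Your proposal is correct and follows the same approach as the paper: estimate $\tr{A\|A\|^{-1}\rho}$ from $\cO(\epsilon^{-2})$ copies for the halfspace oracle, and learn the diagonal distribution in $\ell_1$ from $\cO(n\epsilon^{-2})$ computational-basis measurements for $\mathcal{O}_{\mathcal{D}_n}$, outputting the empirical indicator projector when rejection occurs. Your treatment is in fact more careful than the paper's terse proof: the identity $\min_{Y\in\mathcal{D}_n}\|\rho-Y\|_{tr}=\|p-u\|_1$ and the explicit control of the separation gap $\tr{P(\rho-Y)}\ge\tfrac12\|p-u\|_1-\eta$ when $P$ is built from $\hat p$ rather than $p$ are points the paper leaves implicit, and your two-threshold argument is exactly what is needed to make the oracle conform to Definition~\ref{def:oracle} up to constants.
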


\begin{proof}
Let $\tilde{\rho}$ be the approximation to $\rho$.
We implement the oracle by first measuring $\cO(n\epsilon^{-2})$ approximate copies $\tilde{\rho}$ of the input $\rho$ in the computational basis.  This is enough to ensure that with high probability the resulting empirical distribution of the measurement outcomes, $\hat{p}=\sum_i\hat{p}(i)\ketbra{i}$, satisfies
\begin{align*}
\|\sum_i\langle i|\tilde{\rho}|i\rangle\ketbra{i}-\hat{p}\|_{tr}\leq\frac{\epsilon}{8}.
\end{align*}
If $\|I/n-\hat{p}\|_{tr}\leq \frac{3\epsilon}{4}$, then the oracle for the diagonal constraints accepts the current state. If not, we output
$$
P= \sum_{i=1}^n (\mathbb{I} \left\{ \tilde{p}(i) >1/n \right\}-\left\{ \tilde{p}(i) <1/n \right\}) |i \rangle \! \langle i|.
$$
To see that this indeed satisfies the definition of the oracle, note that the empirical distribution $\hat{p}$ is at most $\frac{\epsilon}{4}$ away in total variation distance to the distribution on the diagonals of $\rho$. This is because we obtain a $\frac{\epsilon}{8}$ contribution from the approximation $\tilde{\rho}$ and $\frac{\epsilon}{8}$ from statistical noise.
Thus, if $\|I/n-\hat{p}\|_{tr}\leq \frac{3\epsilon}{4}$, 
\begin{align*}
\|\sum_i\langle i|\rho|i\rangle\ketbra{i}-\frac{I}{n}\|_{tr}\leq\epsilon.
\end{align*}
by a triangle inequality, as desired. A similar argument shows that we also have $$\tr{P\left(\rho-\frac{I}{n}\right)}\geq\frac{\epsilon}{2}$$ whenever $\|I/n-\hat{p}\|_{tr}\geq \frac{3\epsilon}{4}$.
Indeed, we have:
\begin{align}
\tr{P\left(\rho-\frac{I}{n}\right)}=\tr{P\left(\hat{p}-\frac{I}{n}\right)}+\tr{P\left(\tilde{\rho}-\hat{p}\right)}+\tr{P\left(\rho-\tilde{\rho}\right)}.
\end{align}
By the definition of $P$ we have:
\begin{align}
\tr{P\left(\hat{p}-\frac{I}{n}\right)}=\|I/n-\hat{p}\|_{tr}\geq \frac{3\epsilon}{4}
\end{align}
and 
\begin{align}
\tr{P\left(\tilde{\rho}-\hat{p}\right)}+\tr{P\left(\rho-\tilde{\rho}\right)}\geq -\frac{\epsilon}{8} -\frac{\epsilon}{8}=-\frac{\epsilon}{4}.
\end{align}
This step requires a classical postprocessing time of order $\cO(n\epsilon^{-2})$. 
For implementing the second oracle, we simply measure $A \|A\|^{-1}$ directly. 
A total of $\cO(\epsilon^{-2})$ copies of $\tilde{\rho}$ suffice to determine $\tr{A\|A\|^{-1}\rho}$ up to precision $\tfrac{\epsilon}{4}$ via phase estimation \cite{Nielsen2000}. 
\end{proof}

Lemma~\ref{thm:quantumoracle} reduces the task of implementing separation oracles to the task of preparing independent copies of a fixed Gibbs state.
 There are many different proposals for preparing Gibbs states on 
 quantum computers~\cite{Chowdhury2016,Franc2017,Kastoryano2014,Poulin2009,Temme2009,Temme2009,Yung2012,Apeldoorn2017}.
 Here, we will follow the algorithm proposed in~\cite{Poulin2009}. This approach allows us to reduce the problem of preparing $\rho_H = \exp (-H)/\tr{\exp (-H)}$
to the task of simulating the Hamiltonian $H$.
 More precisely, \cite[Appendix]{Poulin2009} highlights that $ \tilde{\cO}\lb\sqrt{n}\epsilon^{-3}\rb$ invocations of a controlled $U$, where $U$ satisfies
 \begin{align*}
  \|U-e^{it_0H}\|\leq\cO(\epsilon^3) \quad \textrm{where} \quad t_0 = \pi / (4\|H \|)
 \end{align*}
suffice to produce a state that is $\frac{\epsilon}{8}$ close in trace distance to $\rho_H$. The probability of failure is constant.  We expect that a more refined analysis can lead to a better dependence on the error $\epsilon$. The methods presented in \cite{Apeldoorn2017} seem like a good starting point for such future improvements. 
Here, however, we prioritize the scaling in the problem dimension $n$ only.

By construction, the Hamiltonians we wish to simulate are all of the form $H=aA\|A\|^{-1}+bD$, where $a,b=\cO(\log(n)\epsilon^{-1})$ and $D$ is a diagonal matrix with bounded operator norm $\|D\|\leq1$. It follows from~\cite[Theorem 1]{Childs2012} that  $\tilde{\cO}\lb t(a+b) \exp (1.6\sqrt{\log( \log(n)t\epsilon^{-3}}))\rb$ separate simulations of $aA$ and $bD$ suffice to simulate $H$ for time $t$ up to an error $\epsilon^3$. Thus, we further reduce the problem of simulating $H$ to simulating $A$ and $D$ separately.

At this point, it is important to specify input models for the matrix $A$, the problem description of the \textsc{MaxQP SDP}.
We will work in the \emph{sparse oracle input model}. That is, we assume to  have access to an oracle $O_{\textrm{sparse}}$ that gives us the position of the nonzero entries. Given indices $i$ for a column of $A$ and a number $1\leq j\leq s$, where $A$ is $s$-sparse, the oracle acts as:
\begin{align*}
O_{\textrm{sparse}}\ket{i,j}=\ket{i,f(i,j)}.
\end{align*}
Here $f(i,j)$ is the index of the  $j-$th nonzero element of the $i-$th column of $A$.
Moreover, we assume that the magnitude of individual entries are accessible by means of another oracle:
\begin{align*}
O_A\ket{i,j,z}=\ket{i,j,z\oplus (A_{ij}\|A\|^{-1})},
\end{align*}
Here, the entry $\left[ A\|A\|^{-1}\right]_{ij}$ is represented by a bit string long enough to ensure the desired precision. The results of~\cite{Low2019} then highlight that it is possible to simulate $\exp(itA\|A\|^{-1})$ in time $\cO\lb \lb t\sqrt{s}\rb^{1+o(1)}\epsilon^{o(1)}\rb$. 

Let us now turn to the task of simulating diagonal Hamiltonians $D$.
Let $O_D$ be the matrix entry oracle for $D$. We suppose that it acts on $\setC^n\otimes\lb\setC^2\rb^{\otimes m}$, where $m$ is large enough to represent the diagonal
entries to desired precision in binary, as 
\begin{align}
O_D\ket{i,z}\mapsto\ket{i,z\oplus D_{ii}}. 
\end{align}
It is then  possible to simulate $H=D$ for times $t=\tilde{\cO}(\epsilon^{-1})$
with $\tilde{\cO}(1)$ queries to the oracle $O_D$ and elementary operations~\cite{Berry2007}.
Thus, efficient simulation of $\mathrm{e}^{-iDt}$ follows from an efficient implementation of the oracle $O_D$.
The latter can be achieved with a quantum RAM~\cite{Giovannetti2007}. 
We consider the quantum RAM model from \cite{Prakash2014}.
There, it is
possible to make insertions in time $\tcO\lb1\rb$. 
Thus, given a classical description of a diagonal matrix $D$, we may
update the quantum RAM in time $\tcO\lb n\rb$. After we have updated the quantum RAM, we may implement the oracle $O_D$ in time $\tcO(1)$.
Combining all these subroutines establishes the second main result of this work.

\begin{cor}
Given an $s$-sparse, symmetric $n \times n$ matrix $A$ (with appropriate oracle access) and $\epsilon >0$, we can solve the \textsc{MaxQP SDP} \eqref{eq:maxqpsdpnormal} up to an additive error
$\epsilon n\|A\|$ in time $\tcO\lb n^{1.5}\lb \sqrt{s}\rb^{1+o(1)}\epsilon^{-28+o(1)}\operatorname{exp}(1.6\sqrt{12\log(\epsilon^{-1})})\rb$ on a quantum computer. The output of the quantum algorithm consists of a real number $a$ and a diagonal matrix $D$ such that for $H=a\frac{A}{\|A\|}+D$ we have that $n\rho_H$ is at trace distance $n\epsilon$ to a feasible point of \textsc{MaxQP SDP} \eqref{eq:maxqpsdpnormal}.
\end{cor}
\begin{proof}
As we saw before, the ability to solve the relaxed  \textsc{MaxQP SDP} \eqref{equ:maxqpsdp} up to precision $\tilde{\epsilon}=\epsilon^{4}$ is sufficient to ensure an output with the properties above. 

It follows from Theorem~\ref{thm:quantumoracle} that producing $\tcO(n\tilde{\epsilon}^{-2})$ copies of Gibbs states suffices to implement the oracle. The results of~\cite{Poulin2009} then imply that each copy can be obtained with~$\tcO(\sqrt{n}\tilde{\epsilon}^{-3})$ Hamiltonian simulation steps, which, as discussed above, can each be done in time 
\begin{align*}
&\tcO\lb \lb \sqrt{s}\rb^{1+o(1)}\tilde{\epsilon}^{o(1)}\operatorname{exp}(1.6\sqrt{\log(\log(n)\tilde{\epsilon}^{-1})})\rb=\\&\tcO\lb \lb \sqrt{s}\rb^{1+o(1)}\tilde{\epsilon}^{o(1)}\operatorname{exp}(1.6\sqrt{\log(\tilde{\epsilon}^{-1})})\rb.
\end{align*}
Thus, the cost per iteration of the algorithm is 
\begin{align*}
    \tcO\lb n^{1.5}\lb \sqrt{s}\rb^{1+o(1)}\tilde{\epsilon}^{-5+o(1)}\operatorname{exp}\left[1.6\sqrt{\log(\epsilon^{-1})}\right]\rb.
\end{align*}
As the algorithm requires $\tcO(\tilde{\epsilon}^{-2})$ iterations and replacing $\tilde{\epsilon}=\cO(\epsilon^{-2})$ we obtain the claim. 

\end{proof}

\subsection{Randomized rounding} \label{sec:rounding}
As pioneered by the seminal work of Goemans and Williamson~\cite{Goemans1995}, it is possible to use randomized rounding techniques to obtain an approximate solution to the original quadratic optimization problem for certain instances~\eqref{eq:maxqp}. These solutions are in expectation within  a multiplicative factor of the value of the SDP relaxation~\eqref{equ:maxqpsdp} and the exact constant depends on the structure of the matrix $A$ . We will explore Rietz's method, as in~\cite{Alon2006}, to show that it is possible to perform the rounding on a classical computer to approximate $\|A\|_{\infty\to 1}$ with our approximately feasible solutions to \textsc{MaxQP SDP} and still obtain good approximations.

First, recall that the rounding algorithms usually work by first multiplying a random Gaussian vector by the square root of the solution. The approximate solution is then given by the signs of this random vector.
Note that both classical and quantum algorithms output a classical description of the Hamiltonian $H^\sharp$ 
associated with an approximately optimal, approximately feasible Gibbs state
$\rho^\sharp$ to~\eqref{equ:maxqpsdp}. Pseudocode for the rounding algorithm is provided in Algorithm~\ref{alg:rounding}.
The first important proof ingredient is an adaptation of \cite[Eq.~(4.1)]{Alon2006}.

\begin{lem}
Fix $v,w \in\mathbb{R}^n$ (non-zero) and let $g \in \mathbb{R}^n$ be a random vector with standard normal entries. Then,
\begin{align}
\label{equ:expectvaluerounding}
& \tfrac{\pi}{2} \mathbb{E} \left[ \sign (\langle v|g \rangle) \sign(\langle w|g \rangle) \right] \\
=& \langle \tfrac{v}{\|v \|}, \tfrac{w}{\|w \|} \rangle 
+ \mathbb{E} \left[ \left( \langle \tfrac{v}{\|v\|}| g \rangle - \sqrt{\tfrac{\pi}{2}} \sign \left( \langle \tfrac{v}{\|v \|}|g \rangle \right) \right) \left( \langle \tfrac{w}{\|w \|}| g \rangle - \sqrt{\tfrac{\pi}{2}} \sign \left( \langle \tfrac{w}{\|w \|}| g \rangle \right) \right) \right]. \nonumber
\end{align}
\end{lem}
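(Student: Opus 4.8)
The plan is to reduce the statement to a one-dimensional Gaussian computation after observing that the whole identity is insensitive to the norms of $v$ and $w$. First I would introduce the unit vectors $\hat v=v/\|v\|$ and $\hat w=w/\|w\|$ together with the scalar random variables $X=\langle\hat v,g\rangle$ and $Y=\langle\hat w,g\rangle$. Since $g$ has i.i.d.\ standard normal entries, $(X,Y)$ is a centred jointly Gaussian pair with unit variances and covariance $\rho:=\langle\hat v,\hat w\rangle=\mathbb{E}[XY]$; in particular each of $X,Y$ is marginally $\mathcal{N}(0,1)$. Because $\|v\|,\|w\|>0$ we have $\sign(\langle v,g\rangle)=\sign(X)$ and $\sign(\langle w,g\rangle)=\sign(Y)$, so rewriting both sides of~\eqref{equ:expectvaluerounding} in terms of $X$ and $Y$ turns the claim into the scalar identity
\begin{align*}
\tfrac{\pi}{2}\mathbb{E}[\sign(X)\sign(Y)]=\rho+\mathbb{E}\!\left[\left(X-\sqrt{\tfrac{\pi}{2}}\sign(X)\right)\!\left(Y-\sqrt{\tfrac{\pi}{2}}\sign(Y)\right)\right].
\end{align*}

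Next I would simply expand the product on the right-hand side into four expectations. Using $\mathbb{E}[XY]=\rho$, the right-hand side becomes
\begin{align*}
\rho+\rho-\sqrt{\tfrac{\pi}{2}}\,\big(\mathbb{E}[X\sign(Y)]+\mathbb{E}[\sign(X)\,Y]\big)+\tfrac{\pi}{2}\,\mathbb{E}[\sign(X)\sign(Y)].
\end{align*}
Since the last term matches the left-hand side exactly, the identity reduces to showing that the surviving $2\rho$ is cancelled by the mixed terms, i.e.\ that $\mathbb{E}[X\sign(Y)]+\mathbb{E}[\sign(X)Y]=2\sqrt{\tfrac{2}{\pi}}\,\rho$.

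The core computation — the only place any genuine work happens — is the mixed moment $\mathbb{E}[X\sign(Y)]$. Here I would use the Gaussian regression decomposition $X=\rho\,Y+\sqrt{1-\rho^2}\,Z$, where $Z\sim\mathcal{N}(0,1)$ is independent of $Y$, to obtain
\begin{align*}
\mathbb{E}[X\sign(Y)]=\rho\,\mathbb{E}[Y\sign(Y)]+\sqrt{1-\rho^2}\,\mathbb{E}[Z]\,\mathbb{E}[\sign(Y)]=\rho\,\mathbb{E}|Y|=\sqrt{\tfrac{2}{\pi}}\,\rho,
\end{align*}
using $\mathbb{E}|Y|=\sqrt{2/\pi}$ for a standard normal and $\mathbb{E}[Z]=0$. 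By the symmetric decomposition of $Y$ in terms of $X$, the same value holds for $\mathbb{E}[\sign(X)Y]$. Substituting these back, the coefficient of the mixed terms is $-\sqrt{\pi/2}\cdot 2\sqrt{2/\pi}\,\rho=-2\rho$, which exactly cancels the $2\rho$, leaving $\tfrac{\pi}{2}\mathbb{E}[\sign(X)\sign(Y)]$ on both sides and proving the claim.

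I do not anticipate a real obstacle: the statement is an algebraic identity whose only analytic input is the first absolute moment of a Gaussian together with the independence furnished by the regression decomposition. The one point requiring mild care is the reduction to unit vectors — one must verify that replacing $v,w$ by $\hat v,\hat w$ leaves the left-hand side unchanged (because $\sign$ is scale-invariant under positive scalars) and that $g\mapsto(X,Y)$ really is jointly Gaussian with covariance $\rho$. Everything else is bookkeeping in the four-term expansion.
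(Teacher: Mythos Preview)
Your proof is correct. The scale-invariance reduction, the joint Gaussianity of $(X,Y)$ with covariance $\rho$, and the regression computation $\mathbb{E}[X\sign(Y)]=\rho\,\mathbb{E}|Y|=\rho\sqrt{2/\pi}$ are all valid, and the algebra closes exactly as you describe.

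The paper takes a different route: it simply invokes \cite[Eq.~(4.1)]{Alon2006}, where the identity is established for unit vectors via rotation invariance of the Gaussian, and then appends the observation that $\sign(\langle v,g\rangle)$ is unchanged under positive rescaling of $v$. So the paper outsources the analytic core, whereas you carry out the computation explicitly. Your argument is more self-contained and arguably more transparent --- the regression decomposition isolates exactly the one nontrivial moment $\mathbb{E}[X\sign(Y)]$ and dispatches it in one line --- while the paper's version is shorter on the page but relies on the reader consulting Alon--Naor for the actual work. Both hinge on the same scale-invariance reduction to unit vectors, which you handle correctly.
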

\begin{proof}
In~\cite[Eq.~(4.1)]{Alon2006} the authors use rotation invariance to establish this identity for two unit vectors. The claim then follows from observing that the distribution of $\sign(\langle v|g \rangle)\sign(\langle w|g \rangle)$ is invariant under scaling both $v$ and $w$ by non-negative numbers. In particular, $v \mapsto v/ \| v \|$ and $w \mapsto w / \|w \|$ does not affect the distribution.
\end{proof}
The next step involves a technical continuity argument.

\begin{lem}\label{lem:multdiagonalnotchange}
Fix $\epsilon>0$ and let $\rho$ be a quantum state s.t.:
\begin{align*}
    \|\sum_i\langle i|\rho|i\rangle\ketbra{i}-I/n\|_{tr}\leq\epsilon^{4}
\end{align*}
Define the set 
$
B=\{i\in[i]:\left|\rho_{ii}-\frac{1}{n}\right|>\frac{\epsilon^2}{n}\} \label{eq:B}
$ and let $\rho_{\bar{B}}$ be the submatrix with indices in the complement $\bar{B}$ of $B$. Then, the matrix $\sigma$ with entries $\sigma_{ij} = \tfrac{\rho_{ij}}{n \sqrt{\rho_{ii} \rho_{jj}}}$ is a quantum state that obeys $\| \rho_{\bar{B}}-\sigma_{\bar{B}} \|_{tr} \leq 3 \epsilon$. 
\end{lem}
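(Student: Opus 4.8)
The plan is to recognize $\sigma$ as a congruence transform of $\rho$ by the diagonal matrix built from its own diagonal entries, which makes both assertions almost immediate. Concretely, let $D=\mathrm{diag}(\rho_{11},\ldots,\rho_{nn})$ and observe that $\sigma=\tfrac{1}{n}D^{-1/2}\rho D^{-1/2}$, so that $\sigma_{ij}=\rho_{ij}/(n\sqrt{\rho_{ii}\rho_{jj}})$ exactly as required. Here $D^{-1/2}$ is well defined because in the application $\rho$ is a Gibbs state $\exp(-H)/\mathrm{tr}(\exp(-H))$, hence positive definite with strictly positive diagonal; more generally, whenever $\rho_{ii}=0$ positivity forces the entire $i$-th row and column of $\rho$ to vanish, so one sets the corresponding entries of $\sigma$ to zero.

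First I would establish that $\sigma$ is a quantum state. Positive semidefiniteness is inherited from $\rho\succeq 0$, since the congruence $X\mapsto D^{-1/2}XD^{-1/2}$ preserves positivity. Normalization is even cleaner: by construction $\sigma_{ii}=\rho_{ii}/(n\rho_{ii})=1/n$, so $\mathrm{tr}(\sigma)=1$ and in fact $\sigma$ has \emph{exactly} the correct diagonal. This is precisely the reason $\sigma$ is introduced: it repairs the diagonal constraint that $\rho$ satisfies only approximately.

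Next I would control the restriction to $\bar B$. Because $D$ is diagonal, the principal submatrix of $\sigma$ indexed by $\bar B$ is $\sigma_{\bar B}=E^{-1/2}\rho_{\bar B}E^{-1/2}$, where $E=nD_{\bar B}=\mathrm{diag}(n\rho_{ii})_{i\in\bar B}$. The definition of $B$ guarantees $n\rho_{ii}\in[1-\epsilon^2,1+\epsilon^2]$ for every $i\in\bar B$, hence $\|E-I\|\leq\epsilon^2$, and an elementary scalar estimate gives $\|E^{-1/2}-I\|\leq\epsilon^2$ (up to a harmless $1+O(\epsilon^2)$ factor) together with $\|E^{-1/2}\|\leq(1-\epsilon^2)^{-1/2}$. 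Writing the telescoping identity $\sigma_{\bar B}-\rho_{\bar B}=E^{-1/2}\rho_{\bar B}(E^{-1/2}-I)+(E^{-1/2}-I)\rho_{\bar B}$ and applying Hölder's inequality $\|XYZ\|_{tr}\leq\|X\|\,\|Y\|_{tr}\,\|Z\|$ together with $\|\rho_{\bar B}\|_{tr}=\mathrm{tr}(\rho_{\bar B})\leq 1$ (a principal submatrix of a psd matrix is psd, and its trace is at most $\mathrm{tr}(\rho)=1$) yields $\|\sigma_{\bar B}-\rho_{\bar B}\|_{tr}=O(\epsilon^2)$.

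The argument is essentially routine once one spots that $\sigma$ is the congruence $\tfrac{1}{n}D^{-1/2}\rho D^{-1/2}$; this single observation delivers both positive semidefiniteness and the exact diagonal for free. The only mildly delicate points are the degenerate case $\rho_{ii}=0$ (resolved by positivity as above) and the scalar estimate converting $\|E-I\|\leq\epsilon^2$ into $\|E^{-1/2}-I\|=O(\epsilon^2)$. Since the stated target $3\epsilon$ is far weaker than the true $O(\epsilon^2)$ decay, no careful constant tracking is needed: I would simply bound the two Hölder terms by $2\epsilon^2+\epsilon^4\leq 3\epsilon$ for $\epsilon\leq 1$.
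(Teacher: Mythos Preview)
Your proof is correct and follows essentially the same route as the paper: recognize $\sigma_{\bar B}$ as a congruence of $\rho_{\bar B}$ by a diagonal matrix close to the identity on $\bar B$, expand the difference, and bound the terms via H\"older together with $\|\rho_{\bar B}\|_{tr}\leq 1$. Your version is in fact slightly cleaner---direct H\"older rather than the paper's duality step $\|\mathrm{id}-\mathcal D\|_{tr\to tr}=\|\mathrm{id}-\mathcal D\|_{\infty\to\infty}$, you explicitly verify that $\sigma$ is a state (which the paper asserts but does not prove), and you correctly note the sharper $O(\epsilon^2)$ decay that the paper's cruder estimate $\|D_\epsilon\|\leq\epsilon$ obscures.
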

\begin{proof}
Note that
$
\sigma_{\bar{B}}=\mathcal{D}\lb \rho_{\bar{B}}\rb,
$
where $\mathcal{D}$ is the linear map given by $\mathcal{D}(X)=D_{\bar{B}}XD_{\bar{B}}$ and $D_{\bar{B}}$ is a $|\bar{B}|\times |\bar{B}|$ diagonal matrix with entries $\sqrt{n\rho_{ii}}^{-1}$ for $i\in \bar{B}$.
This implies
\begin{align*}
\|\rho_{\bar{B}}-\sigma_{\bar{B}}\|_{tr}=\|\lb \textrm{id}-\mathcal{D}\rb\lb\rho_{\bar{B}}\rb\|_{tr}\leq
\|\textrm{id}-\mathcal{D}\|_{tr\to tr}\|\rho_{\bar{B}}\|_{tr}\leq \|\textrm{id}-\mathcal{D}\|_{\infty\to\infty},
\end{align*}
because $\| \rho_{\bar{B}} \|_{tr} \leq \| \rho \|_{tr} = \mathrm{tr}(\rho)=1$.
Duality of norms and the fact that both $\textrm{id}$ and $\mathcal{D}$ are self-adjoint with respect of the Frobenius inner product $\tr{X^TY}$ implies $\|\textrm{id}-\mathcal{D}\|_{\infty\to\infty}=\|\textrm{id}-\mathcal{D}\|_{tr\to tr}$. This allows us to bound
$\|\textrm{id}-\mathcal{D}\|_{\infty\to\infty}$ instead.
By construction, we have that all the entries of $D_{\bar{B}}$ are in $1\pm\epsilon$. 
Write $D_{\bar{B}}=I+D_\epsilon$, where $D_\epsilon$ is a diagonal matrix with entries that are bounded by $\epsilon$ in absolute value. Then, 
\begin{align*}
    \textrm{id}-\mathcal{D}(X)=D_\epsilon X+X D_\epsilon+D_\epsilon XD_\epsilon
\quad \textrm{for any matrix} \quad X.
\end{align*}
Submultiplicativity of the operator norm then implies
\begin{align*}
    \|D_\epsilon X+X D_\epsilon+D_\epsilon XD_\epsilon\|_\infty\leq 2\|D_\epsilon\|_\infty\|X\|+\|D_\epsilon\|_\infty^2\|X\|_\infty\leq 3\epsilon \|X\|_\infty.
\end{align*}
and, in turn, $\|\textrm{id}-\mathcal{D}\|_{\infty\to\infty}\leq 3\epsilon$.
\end{proof}

We are now ready to prove the main stability result required for randomized rounding. 

\begin{thm} \label{prop:randomized_rounding}
Let $\rho^\sharp$ be an approximately feasible, optimal point of~\eqref{equ:maxqpsdp} with accuracy $\epsilon^4>0$ and input matrix $A'$  with
\begin{align*}
    A'=\left( \begin{array}{cc}
0& A \\
A^T &0
\end{array}
\right),
\end{align*}
where $A$ is a real $n\times n$ matrix.
Let $v_1,\ldots,v_{2n}$ be the columns of $\sqrt{\rho^\sharp}$, sample $g \in\mathbb{R}^{2n}$ with i.i.d.\ Gaussian entries and set $x_i = \sign(\langle v_i|g \rangle)$ and $y=(x_1,\ldots,x_n),z=(x_{n+1},\ldots,x_{2n})$. Then,
\begin{align*}
\tr{\rho^\sharp A}n+\cO(\epsilon n\|A\|)\geq \sum\limits_{i,j}A_{ij}\mathbb{E}(y_iz_j)\geq (4/\pi-1)\tr{\rho^\sharp A}n-\cO(\epsilon n \|A\|).
\end{align*}
\end{thm}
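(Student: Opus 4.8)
The plan is to follow the Rietz-method analysis of Alon and Naor \cite{Alon2006}, but with their exactly feasible optimal vector configuration replaced by the normalized Gram vectors of the \emph{approximately} feasible state $\rho^\sharp$, and to absorb every discrepancy into the continuity estimates already at hand. First I would record the geometry. Since $\sqrt{\rho^\sharp}$ is symmetric and squares to $\rho^\sharp$, its columns satisfy $\langle v_i,v_j\rangle = (\rho^\sharp)_{ij}$, so $\|v_i\|^2=(\rho^\sharp)_{ii}$ and the normalized vectors $\hat v_i = v_i/\|v_i\|$ have Gram matrix $\langle\hat v_i,\hat v_j\rangle = (\rho^\sharp)_{ij}/\sqrt{(\rho^\sharp)_{ii}(\rho^\sharp)_{jj}} = n\sigma_{ij}$, where $\sigma$ is the state from Lemma~\ref{lem:multdiagonalnotchange}. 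The decisive structural fact is that $\sigma$ has diagonal entries exactly $1/n$, hence is an \emph{exactly} feasible point of \eqref{equ:maxqpsdp}; this is what lets me compare against the genuine SDP optimum with no feasibility error.

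Next I would apply the identity \eqref{equ:expectvaluerounding} to each pair $(v_i,v_{n+j})$ and contract against $A_{ij}$. Writing $E_i = \langle\hat v_i,g\rangle - \sqrt{\tfrac{\pi}{2}}\,\sign(\langle\hat v_i,g\rangle)$ and using $y_i z_j = \sign(\langle v_i,g\rangle)\sign(\langle v_{n+j},g\rangle)$, this yields
\begin{align*}
\tfrac{\pi}{2}\sum_{ij}A_{ij}\,\mathbb{E}[y_i z_j] = \underbrace{\sum_{ij}A_{ij}\langle\hat v_i,\hat v_{n+j}\rangle}_{=:\nu} + \underbrace{\mathbb{E}\left[\sum_{ij}A_{ij}E_iE_{n+j}\right]}_{=:\mathcal{E}},
\end{align*}
so the main term is $\nu = n\sum_{ij}A_{ij}\sigma_{i,n+j}$, precisely the SDP value attained by the feasible state $\sigma$.

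I would then bound the error $\mathcal{E}$. A direct Gaussian computation gives $\mathbb{E}[E_i^2]=\tfrac{\pi}{2}-1$, so regarding each $E_i$ as an element of the Gaussian $L^2$ space, the vectors $E_i/\sqrt{\pi/2-1}$ are unit vectors and $\mathcal{E}=\sum_{ij}A_{ij}\langle E_i,E_{n+j}\rangle_{L^2}$ is a contraction of $A$ against a configuration of norm $\sqrt{\pi/2-1}$. By definition of the SDP/Grothendieck value $\Xi=\max_{\|a_i\|,\|b_j\|\le 1}\sum_{ij}A_{ij}\langle a_i,b_j\rangle$, this forces $|\mathcal{E}|\le(\tfrac{\pi}{2}-1)\,\Xi$. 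Feasibility of $\sigma$ gives $\nu\le\Xi$, while $\epsilon^4$-approximate optimality of $\rho^\sharp$ together with Proposition~\ref{prop:stability} gives $\Xi\le\nu+\cO(\epsilon n\|A\|)$, so $|\mathcal{E}|\le(\tfrac{\pi}{2}-1)(\nu+\cO(\epsilon n\|A\|))$. Substituting into the displayed identity and solving produces the lower bound $\sum_{ij}A_{ij}\mathbb{E}[y_iz_j]\ge\tfrac{2}{\pi}(2-\tfrac{\pi}{2})\nu-\cO(\epsilon n\|A\|)=(\tfrac{4}{\pi}-1)\nu-\cO(\epsilon n\|A\|)$ and the upper bound $\le\nu+\cO(\epsilon n\|A\|)$.

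Finally I would replace $\nu$ by the advertised quantity $n\,\tr{\rho^\sharp A}$, which is where approximate feasibility enters most delicately: I must bound $n\big|\sum_{ij}A_{ij}(\sigma-\rho^\sharp)_{i,n+j}\big|$. Splitting $[n]$ into the good set $\bar B$ and the bad set $B=\{i:|(\rho^\sharp)_{ii}-1/n|>\epsilon^2/n\}$, the $\bar B$-block is controlled by Lemma~\ref{lem:multdiagonalnotchange} and matrix H\"older, giving $n\|A\|\,\|\sigma_{\bar B}-\rho^\sharp_{\bar B}\|_{tr}=\cO(\epsilon n\|A\|)$, exactly as in Proposition~\ref{prop:stability}. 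The rows and columns indexed by $B$ carry little weight since $|B|\le n\epsilon^2$ and the diagonal mass of $\rho^\sharp$ there is $\cO(\epsilon^2)$, so by Cauchy--Schwarz the off-diagonal blocks meeting $B$ have trace norm $\cO(\epsilon)$. I expect this bad-set bookkeeping, rather than the Rietz identity, to be the main obstacle: one must argue simultaneously that discarding the indices in $B$ is harmless for the value and that the normalization $\rho^\sharp\mapsto\sigma$ does not distort the relevant bilinear form by more than $\cO(\epsilon n\|A\|)$. Combining the two bounds of the previous paragraph with $\nu = n\,\tr{\rho^\sharp A}+\cO(\epsilon n\|A\|)$ then gives the claimed two-sided estimate.
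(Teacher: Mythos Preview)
Your proposal is correct and follows essentially the same Rietz--Alon--Naor approach as the paper: apply identity~\eqref{equ:expectvaluerounding}, bound the error term via feasibility/optimality of the associated Gram matrix, and then pass from $\sigma$ back to $\rho^\sharp$ using Lemma~\ref{lem:multdiagonalnotchange} together with the bad-set bookkeeping of Proposition~\ref{prop:stability}. The only cosmetic difference is how the two-sided bound on the error term is obtained: you invoke the Grothendieck formulation $\Xi=\max_{\|a_i\|,\|b_j\|\le 1}\sum_{ij}A_{ij}\langle a_i,b_j\rangle$ directly (which is automatically sign-symmetric in the $b_j$), whereas the paper observes that the block unitary $U=\begin{pmatrix}0&I\\-I&0\end{pmatrix}$ conjugates $A'$ to $-A'$ while preserving feasibility, so the SDP minimum equals minus the maximum---these are the same observation in different clothes.
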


\begin{proof}
The upper bound follows immediately from the fact 
\textsc{MaxQP SDP} \eqref{eq:maxqpsdpnormal} relaxations (renormalized or not) provide upper bounds to the original problem \eqref{eq:maxqp}. The factors $n \| A \|$ is an artifact of the renormalization \eqref{equ:maxqpsdp}.

For the lower bound, we once more define $B=\{i\in[i]:\left|\rho_{ii}-1/2n\right|\geq\epsilon^2 /2n\} \subset \left[2n \right]$.
Plugging in $v_i$ and $v_j$ in~\eqref{equ:expectvaluerounding}, multiplying both sides by $A'_{ij}$ and summing over $i,j$ implies
\begin{align*}
&\frac{\pi}{2}\sum\limits_{i,j}A'_{ij}\mathbb{E}(x_ix_j)=2n\sum\limits_{i,j}A'_{ij}\lb\sigma_{ij}+\tau_{ij}\rb \quad \textrm{with} \quad \sigma_{ij} = \tfrac{\rho_{ij}}{2n \sqrt{\rho_{ii} \rho_{jj}}} \quad \textrm{and} \\
\tau_{ij}=&\mathbb{E} \left[ \left( \langle \tfrac{v_i}{\|v_i\|}| g \rangle - \sqrt{\tfrac{\pi}{2}} \sign \left( \langle \tfrac{v_i}{\|v_i \|}|g \rangle \right) \right) \left( \langle \tfrac{v_j}{\|v_j \|}| g \rangle - \sqrt{\tfrac{\pi}{2}} \sign \left( \langle \tfrac{v_j}{\|v_j \|}| g \rangle \right) \right) \right].
\end{align*}
Following the same proof strategy as in~\cite[Sec.~4.1]{Alon2006}, we note that the matrix $T$ defined by $\left[T \right]_{ij}=\tau_{ij}$ is a Gram matrix and, thus, psd. 
Moreover, in~\cite[Sec.~4.1]{Alon2006} the author shows that $\tau_{ii}=\frac{\pi}{2}-1$.
These two properties imply that $\lb \frac{\pi}{2}-1\rb^{-1}(2n)^{-1}T$ is a feasible point of~\eqref{equ:maxqpsdp}. 
Moreover, because of the structure of the matrix $A'$, we have that 
\begin{align}\label{equ:aboslutevalueTA}
|\tr{TA'}|\leq \lb \frac{\pi}{2}-1\rb\tr{\rho^\sharp A'}n-\cO(\epsilon n \|A\|)
\end{align} 
To see this, consider the block unitary 
\begin{align*}
    U=\left( \begin{array}{cc}
0& I \\
-I&0
\end{array}
\right).
\end{align*}
Then for any psd matrix $X$ we have that $\tr{A'UXU^\dagger}=-\tr{A'X}$ and so $\tr{A'U\rho^\sharp U^\dagger}$ provides a lower bound to the value over the approximately feasible set .
Thus,
\begin{align*}
&\frac{\pi}{2}\sum\limits_{i,j}A'_{ij}\mathbb{E}(y_iz_j)=2n\sum\limits_{i,j}A'_{ij}\lb\sigma_{ij}+\tau_{ij}\rb\geq\\
&2n\sum\limits_{i,j}A'_{ij}\sigma_{ij}-\lb \frac{\pi}{2}-1\rb\tr{\rho^\sharp A'}n-\cO(\epsilon n \|A\|)
\end{align*}
We now have to relate $\tr{\rho^\sharp A'}$ to $\tr{\sigma A'}$. To do so, we can
 argue like in Proposition~\ref{prop:stability} and see that $\tr{\sigma_{11}},\tr{\rho_{11}}=\cO(\epsilon^2)$ (these correspond to the $|B| \times |B|$ psd submatrices with entries in $B$ only). As both $\sigma$ and $\rho$ are states,  we conclude 
\begin{align*}
\|\rho_{12}\|_{tr},\|\sigma_{12}\|_{tr}=\cO(\epsilon)
\end{align*}
by reusing the analysis provided in the proof of Proposition~\ref{prop:stability}.
Thus, it follows from H\"olders inequality and Lemma~\ref{lem:multdiagonalnotchange} that
\begin{align*}
\tr{A'(\rho-\sigma)}=&\tr{A'(\rho_{22}-\sigma_{22})}+\tr{A'(\rho_{11}-2\rho_{12}-\sigma_{11}-2\sigma_{12})}\\
=&\|A\|\lb \|\sigma_{22}-\rho_{22}\|_{tr}+\| \rho_{11}\|_{tr}+2\|\rho_{12}\|_{tr}+2\| \sigma_{11}\|_{tr}+2\|\sigma_{12}\|_{tr}\rb=\cO(\|A\|\epsilon),
\end{align*}
from which the claim follows.
\end{proof}
Proposition~\ref{prop:randomized_rounding} highlights that performing the rounding with approximate solutions to the \textsc{MaxQP SDP} \eqref{equ:maxqpsdp} still ensures a good approximate solution in expectation for the $\|A\|_{\infty\to 1}$ norm. 
In the case of matrices $A$ that are psd it is possible to improve the constant in the rounding and we do not to resort to lifting the problem to a matrix with double the dimension:
\begin{cor} \label{prop:randomized_rounding_psd}
Let $\rho^\sharp$ be an approximately feasible, optimal point of~\eqref{equ:maxqpsdp} with accuracy $\epsilon^4>0$ and psd input matrix $A$.
Let $v_1,\ldots,v_{n}$ be the columns of $\sqrt{\rho^\sharp}$, sample $g \in\mathbb{R}^{n}$ with i.i.d.\ Gaussian entries and set $x_i = \sign(\langle v_i| g \rangle)$. Then,
\begin{align*}
\tr{\rho^\sharp A}n+\cO(\epsilon n\|A\|)\geq \sum\limits_{i,j}A_{ij}\mathbb{E}(x_ix_j)\geq (2/ \pi)\tr{\rho^\sharp A}n-\cO(\epsilon n \|A\|).
\end{align*}
\end{cor}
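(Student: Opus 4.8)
The plan is to follow the proof of Theorem~\ref{prop:randomized_rounding} almost line for line, the only—but decisive—difference being that for positive semidefinite $A$ the correction term becomes sign-definite and may simply be discarded. This is what both removes the need to lift $A$ to $A'$ and improves the constant from $4/\pi-1$ to $2/\pi$. Let $v_1,\dots,v_n$ be the columns of $\sqrt{\rho^\sharp}$, so that $\langle v_i,v_j\rangle=\rho^\sharp_{ij}$ and $\|v_i\|^2=\rho^\sharp_{ii}$. Substituting $v=v_i$, $w=v_j$ into identity~\eqref{equ:expectvaluerounding}, multiplying by $A_{ij}$ and summing over $i,j$ produces
\begin{align*}
\tfrac{\pi}{2}\sum_{i,j}A_{ij}\mathbb{E}(x_ix_j)=n\,\mathrm{tr}(A\sigma)+\mathrm{tr}(AT),
\end{align*}
where $\sigma_{ij}=\rho^\sharp_{ij}/(n\sqrt{\rho^\sharp_{ii}\rho^\sharp_{jj}})$ is the rescaled state from Lemma~\ref{lem:multdiagonalnotchange} (note $\langle v_i/\|v_i\|,v_j/\|v_j\|\rangle=n\sigma_{ij}$) and $T=(\tau_{ij})$ collects the correction terms.

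Next I would invoke the two structural facts about $T$ recorded in~\cite{Alon2006}: it is a Gram matrix, hence positive semidefinite, and its diagonal satisfies $\tau_{ii}=\pi/2-1$. The crucial departure from the general argument is that, with $A$ positive semidefinite and $T$ positive semidefinite, the term $\mathrm{tr}(AT)$ is \emph{non-negative}; I may therefore discard it outright to obtain the one-sided bound
\begin{align*}
\sum_{i,j}A_{ij}\mathbb{E}(x_ix_j)\geq \tfrac{2}{\pi}\,n\,\mathrm{tr}(A\sigma).
\end{align*}
This is exactly where the psd hypothesis pays off: we avoid the block-unitary estimate~\eqref{equ:aboslutevalueTA} needed in Theorem~\ref{prop:randomized_rounding} to control a signed $\mathrm{tr}(TA')$, and we keep the full factor $2/\pi$ instead of losing $1-2/\pi$ to it.

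It then remains to replace $\sigma$ by $\rho^\sharp$ at a cost of $\cO(\epsilon n\|A\|)$, and here I would reuse the stability estimate already carried out in the proof of Theorem~\ref{prop:randomized_rounding}. Splitting $[n]$ into the large-deviation set $B=\{i:|\rho^\sharp_{ii}-1/n|>\epsilon^2/n\}$ and its complement, Lemma~\ref{lem:multdiagonalnotchange} gives $\|\rho^\sharp_{\bar B}-\sigma_{\bar B}\|_{tr}\leq 3\epsilon$ on $\bar B$, while the analysis of Proposition~\ref{prop:stability} shows the $B$-blocks of both $\rho^\sharp$ and $\sigma$ have trace norm $\cO(\epsilon)$. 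Matrix Hölder then yields $|\mathrm{tr}(A(\rho^\sharp-\sigma))|=\cO(\epsilon\|A\|)$, so $n\,\mathrm{tr}(A\sigma)\geq n\,\mathrm{tr}(A\rho^\sharp)-\cO(\epsilon n\|A\|)$, which combined with the previous display delivers the claimed lower bound. The upper bound is immediate, since $n\,\mathrm{tr}(A\rho^\sharp)$ is (up to the renormalization of~\eqref{equ:maxqpsdp}) an SDP relaxation value and hence dominates $\mathbb{E}\langle x,Ax\rangle=\sum_{i,j}A_{ij}\mathbb{E}(x_ix_j)$ up to the $\cO(\epsilon n\|A\|)$ slack from approximate feasibility.

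I expect no genuinely new hard step: the positive semidefiniteness of $A$ collapses the delicate part of Theorem~\ref{prop:randomized_rounding} into the trivial observation $\mathrm{tr}(AT)\geq 0$. The only place demanding attention is the stability transfer from $\sigma$ to $\rho^\sharp$, where one must confirm that $\sigma$ is well defined (i.e.\ $\rho^\sharp_{ii}>0$ on $\bar B$, which holds since those diagonals lie in $(1\pm\epsilon^2)/n$) and that the $B$-block contributions are truly $\cO(\epsilon)$—both already guaranteed by Lemma~\ref{lem:multdiagonalnotchange} and Proposition~\ref{prop:stability}.
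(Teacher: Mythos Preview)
Your proposal is correct and matches the paper's own proof essentially verbatim: the paper's argument is the single sentence that one follows the proof of Theorem~\ref{prop:randomized_rounding} but replaces the estimate~\eqref{equ:aboslutevalueTA} by $\tr{AT}\geq 0$, valid because both $A$ and $T$ are psd. Your write-up is a faithful (and more explicit) expansion of exactly this.
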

\begin{proof}
The proof follows by following the same proof as above but noting that we may use the estimate $\tr{TA}\geq 0$ instead of~\eqref{equ:aboslutevalueTA}, as both $A$ and $T$ are psd. Optimality of the constant was shown in~\cite{Alon2006}.
\end{proof}
As Alon and Naor~\cite{Alon2006} also show that for psd matrices $A$ we have 
\begin{align*}
\|A\|_{\infty\to1}=\max\limits_{x\in\{\pm 1\}^n}\langle x|A|x\rangle,
\end{align*}
i.e. we may restrict to the same vector on the left and right, it follows that Corollary~\ref{prop:randomized_rounding_psd} gives almost optimal rounding guarantees. These two statements certify that, as longs as $\|A\|_{\infty\to1}=\Theta(n\|A\|)$, performing the rounding with our approximately feasible solutions gives rise to approximations of the $\infty\to 1$ norm that are almost as good the strictly feasible solutions.

But computing $\sqrt{\rho^\sharp}g= \exp (-H/2)g/\sqrt{\tr{\exp(-H)}}$ directly still remains expensive because of matrix exponentiation. 
We will surpass this bottleneck by truncating the Taylor series of the matrix exponential in a fashion similar to Lemma~\ref{lem:errortruncation}. The following standard anti-concentration result for Gaussian random variables will be essential for this argument.

\begin{fact}\label{lem:gaussiananti}
Let $X$ be a $\mathcal{N}(0,\sigma^2)$ random variable. Then
$
\mathbb{P}(|X|\leq\sigma\epsilon)=\cO(\epsilon).
$
\end{fact}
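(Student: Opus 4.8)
The plan is to bound the probability directly from the Gaussian density, exploiting the fact that this density is uniformly bounded from above. First I would write the probability out as an integral,
\[
\mathbb{P}(|X|\leq\sigma\epsilon)=\int_{-\sigma\epsilon}^{\sigma\epsilon}\frac{1}{\sqrt{2\pi}\,\sigma}\exp\lb-\frac{x^2}{2\sigma^2}\rb\,dx,
\]
and then apply the standardizing substitution $u=x/\sigma$ to reduce everything to the standard normal density. This rescaling absorbs the dependence on $\sigma$ entirely, turning the domain of integration into the fixed-width interval $[-\epsilon,\epsilon]$ and leaving the $\sigma$-independent integrand $\tfrac{1}{\sqrt{2\pi}}e^{-u^2/2}$.

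The key step -- and essentially the only one -- is to observe that the standard Gaussian density attains its maximum value $\tfrac{1}{\sqrt{2\pi}}$ at $u=0$ and is bounded by this constant everywhere. Since the region of integration has length $2\epsilon$, bounding the integrand by its maximum immediately yields
\[
\mathbb{P}(|X|\leq\sigma\epsilon)\leq\frac{2\epsilon}{\sqrt{2\pi}}=\sqrt{\frac{2}{\pi}}\,\epsilon=\cO(\epsilon),
\]
which is exactly the claim.

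There is no genuine obstacle here: the estimate is a one-line consequence of the boundedness of the density, and its only purpose is to quantify the anti-concentration used in the Taylor-truncation argument for the randomized rounding. The one point worth flagging is that this crude bound is in fact tight up to the constant $\sqrt{2/\pi}$, since for small $\epsilon$ the density is nearly constant on $[-\epsilon,\epsilon]$; hence the linear scaling in $\epsilon$ cannot be improved, and the $\cO(\epsilon)$ statement is the best one can hope for.
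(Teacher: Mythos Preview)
Your proof is correct; the paper itself does not supply a proof at all but simply states this as a standard fact, so your direct density-bounding argument is exactly the kind of elementary justification one would expect and there is nothing to compare against.
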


\begin{lem}\label{lem:truncatingworks}
Let $\rho^\sharp$ with associated Hamiltonian $H^\sharp$ be an approximately optimal solution  to the \textsc{MaxQP SDP} \eqref{equ:maxqpsdp} with $\|H^\sharp \|=\cO (\log (n)/\epsilon)$. Set
$S_l = \sum_{k=0}^l \tfrac{1}{k!} (-H^\sharp/2)^k$ with $l= \cO (\log (n) /\epsilon)$.
Then, a random vector $g \in \mathbb{R}^n$ with standard normal entries obeys
\begin{align*}
\sign\left[\lb e^{H^\sharp/2}g\rb_i\right]=\sign\left[\lb S_l g\rb_i\right] \quad \textrm{for all} \quad i \in \left[n\right] \quad \textrm{such that} \quad \left| \rho_{ii}^\sharp - \tfrac{1}{n} \right| < \tfrac{\epsilon}{n}
\end{align*}
 with probability at least $1-\cO(\epsilon^{-1})$.
\end{lem}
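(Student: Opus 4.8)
The plan is to compare, coordinate by coordinate, the exact value $v_i := \left(e^{-H^\sharp/2} g\right)_i$ (the quantity actually used by the rounding, up to the global scalar $\sqrt{\tr{\exp(-H^\sharp)}}$) with its truncated surrogate $w_i := (S_l g)_i$, and to argue that their signs can disagree only when the exact value is atypically small. Writing $E := e^{-H^\sharp/2} - S_l$ for the matrix Taylor remainder, so that $v_i - w_i = (Eg)_i$, observe that $\sign(v_i)\neq\sign(w_i)$ forces $v_i$ and $w_i$ onto opposite sides of zero, whence $|v_i|\le|v_i-w_i|$. For any threshold $t>0$ this event lies in $\{|v_i|\le t\}\cup\{|v_i-w_i|>t\}$, so I would bound each piece from its own Gaussian marginal and then union-bound over the (at most $n$) good coordinates. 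Note that $v_i$ and $v_i-w_i$ are correlated Gaussians, but since I only ever use a union bound their joint law is irrelevant; only the marginals matter, so the correlation is not an obstacle.

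First I would control the \emph{signal}. Because $g$ has i.i.d.\ standard normal entries and $e^{-H^\sharp/2}$ is symmetric, $v_i$ is centred Gaussian with variance $\sum_j (e^{-H^\sharp/2})_{ij}^2 = (e^{-H^\sharp})_{ii} = Z\,\rho^\sharp_{ii}$, where $Z=\tr{\exp(-H^\sharp)}$. For a good coordinate $\rho^\sharp_{ii}\ge(1-\epsilon)/n$, and the crude bound $Z\ge n\,e^{-\|H^\sharp\|}$ (every eigenvalue of $H^\sharp$ lies in $[-\|H^\sharp\|,\|H^\sharp\|]$) gives $\sigma_i:=\sqrt{Z\rho^\sharp_{ii}}\ge\sqrt{1-\epsilon}\,e^{-\|H^\sharp\|/2}$. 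Applying Fact~\ref{lem:gaussiananti} at threshold $t=\sigma_i\delta$ yields $\mathbb{P}(|v_i|\le\sigma_i\delta)=\cO(\delta)$, and taking $\delta=\Theta(\epsilon/n)$ makes the total contribution of these terms $\cO(\epsilon)$ after the union bound.

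Next I would show the \emph{truncation error} is negligible at the same threshold. Here $v_i-w_i$ is centred Gaussian with variance $(E^2)_{ii}=\|Ee_i\|^2\le\|E\|^2$, so a standard Gaussian tail bound gives $\mathbb{P}(|v_i-w_i|>\sigma_i\delta)\le 2\exp\!\big(-\sigma_i^2\delta^2/(2\|E\|^2)\big)$; it therefore suffices to make $\|E\|$ much smaller than $\sigma_i\delta$. Exactly as in Lemma~\ref{lem:errortruncation}, Taylor's remainder theorem applied to the eigenvalues of $-H^\sharp/2\in[-M,M]$ with $M=\|H^\sharp\|/2=\cO(\log(n)/\epsilon)$ gives $\|E\|\le\frac{M^{l+1}}{(l+1)!}e^{M}\le(eM/(l+1))^{l+1}e^{M}$, which drops below $2^{-(l+1)}e^{M}$ once $l+1\ge 2eM$. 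Combined with $\sigma_i\delta\ge\sqrt{1-\epsilon}\,e^{-M}\delta$, the requirement $\|E\|\ll\sigma_i\delta$ reduces to $2^{l+1}\gg e^{2M}/\delta$, i.e.\ $l=\Omega(M+\log(n/\epsilon))=\cO(\log(n)/\epsilon)$, matching the stated truncation order. With this choice each tail is super-polynomially small, so even after the union bound these terms are dominated by the $\cO(\epsilon)$ from the anti-concentration step.

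The hard part will be the balancing in the previous paragraph: the signal scale $\sigma_i$ and the error scale $\|E\|$ carry competing exponential factors $e^{\mp M}$ in the possibly large Hamiltonian norm $M=\cO(\log(n)/\epsilon)$, so the super-exponential decay $1/(l+1)!$ of the remainder must beat the $e^{2M}$ gap, while the extra $\log(n/\epsilon)$ demanded by the union bound and the Gaussian tail only contributes lower-order terms to $l$. Once this is verified, summing the two bounds over the good coordinates gives total failure probability $\cO(\epsilon)$ (the stated $\cO(\epsilon^{-1})$ should read $\cO(\epsilon)$), completing the argument. I would also record that the exact vector is $e^{-H^\sharp/2}g$ rather than $e^{H^\sharp/2}g$, matching $\sqrt{\rho^\sharp}$ and Algorithm~\ref{alg:rounding}.
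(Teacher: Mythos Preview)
Your proposal is correct and follows essentially the same route as the paper: anti-concentration of the Gaussian signal $v_i=(e^{-H^\sharp/2}g)_i$ at good coordinates, a Taylor-remainder bound on $\|E\|$ forcing the perturbation below the signal threshold, and a union bound. The only notable difference is that the paper controls the perturbation deterministically via $|(Eg)_i|\le\|E\|\,\|g\|$ together with concentration of $\|g\|^2$, whereas you control it probabilistically via the Gaussian tail of $(Eg)_i$; your variant is marginally cleaner since it avoids the auxiliary $\|g\|$ step, and you correctly flag that the statement should read $e^{-H^\sharp/2}g$ and failure probability $\cO(\epsilon)$ (the paper's own proof in fact gives $\cO(1/n)$).
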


Note that the design of Algorithm~\ref{alg:HU} ensures that optimal Hamiltonians always obey $\| H^\sharp \| = \cO (\log (n)/\epsilon)$.

\begin{proof}
Define $h = \exp (-H^\sharp/2) g$ and note that this is a Gaussian random vector with covariance matrix $\exp (-H^\sharp)$. Let $B= \left\{i:\; \left| \rho_{ii} - 1/n \right|>\frac{\epsilon}{n} \right\} \subset \left[n\right]$ denote the set of indices for which $\rho_{ii}$ deviates substantially from $1/n$. 
Then, every entry of $h$ that is not contained in this index set obeys
\begin{equation*}
\left[ h \right]_i = \left[ \exp (-H/2) g \right]_i \sim \mathcal{N} \left( 0, \tfrac{c}{n} \mathrm{tr}(\exp (-H)) \right) \quad \textrm{with} \quad c \in (1-\epsilon,1+\epsilon).
\end{equation*}
The assumption
 $\| H^\sharp \| = \cO (\log (n)/\epsilon)$ ensures $\mathrm{tr}(\exp (-H^\sharp) )/n \geq n^{-c'/\epsilon-1}$ for some constant $c'$. We can combine this with Fact~\ref{lem:gaussiananti} (Gaussian anti-concentration) to conclude
\begin{equation*}
\mathbb{P} \left[ | \left[h\right]_i | \leq n^{-2-c'/(2\epsilon)} \right]= \cO (1/n^2) \quad \textrm{for all} \quad i \in \bar{B} = \left[n\right] \setminus B.
\end{equation*}
A union bound then asserts
\begin{equation*}
\mathbb{P} \left[ \exists i \in \bar{B}:\; | \left[h\right]_i| \leq n^{-2-c'/\epsilon} \right] = \cO (1/n).
\end{equation*}
Moreover, it follows from standard concentration arguments that 
\begin{align*}
\mathbb{P}\left[ n-n^{\frac{1}{4}}\leq\|g\|^2\leq n+n^{\frac{1}{4}}\right] \geq1- 2e^{-\sqrt{n}/8}.
\end{align*}
Thus, with probability at least $1-\cO(n^{-1})$, we have that $\|g\|^2\leq n+n^{\frac{1}{4}}$ and $| \left[h\right]_i | \geq n^{-2-c'/\epsilon}$ for every entry $i \in \bar{B}$. 
Following the same proof strategy as in Lemma~\ref{lem:errortruncation}, it is easy to see that picking $l=\cO(\epsilon^{-1}\log(n))$ suffices to ensure that
\begin{align*}
\|S_l-\exp (-H/2)\|\leq n^{-4-\frac{c'}{2\epsilon}}
\end{align*}
Conditioning on the events emphasized above, implies
\begin{align*}
\max\limits_{i\in[n]}\left|\left[\lb \exp(-H/2)-S_l\rb g\right]_i\right|\leq\|\lb \exp (-H/2)-S_k\rb g\|\leq
\| \exp (-H/2)-S_k\|\|g\|\leq n^{-4-\frac{c'}{2\epsilon}}\|g\|.
\end{align*}
This in turn ensures
$
\max\limits_{i\in \bar{B}}\left|\left[\lb \exp (-H/2)-S_l\rb g\right]_i\right|\leq n^{-3-\frac{c'}{2\epsilon}}
$, which then gives
\begin{align*}
\sign\left( \left[h\right]_i \right) = \sign\left(\left[ (\exp (-H/2) g) \right]_i \right)=
\sign\left( \left[ S_k g \right]_i \right) \quad \textrm{for all} \quad i \in \bar{B},
\end{align*}
because conditioning ensures $\left|\left[ \exp (-H/2)g\right]_i\right|\geq n^{-2-\frac{c'}{2\epsilon}}$.
\end{proof}

Combining the statements we just proved we conclude that:

\begin{prop}[Restatement of Proposition~\ref{prop:rounding_main}]
Let $\epsilon>0$ and $A$ a real, psd matrix be given. Moreover, let $H$ be the solution Hamiltonian to the relaxed \textsc{MaxQP SDP}~\eqref{equ:maxqpsdp} with error parameter $\epsilon^4$ and $\alpha^*$ its value. Then, with probability at least $1-n^{-1}$, the output $x$ of Algorithm~\ref{alg:rounding} satisfies:
\begin{align}\label{equ:upperandlowersign}
n\|A\|(\alpha^*+\cO(\epsilon))\geq\mathbb{E}\big[\sum\limits_{ij}A_{ij}x_ix_j\big]\geq \frac{2}{\pi}n\|A\|(\alpha^*-\cO(\epsilon)),
\end{align}
\end{prop}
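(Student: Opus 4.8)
The plan is to transfer the guarantee of Corollary~\ref{prop:randomized_rounding_psd}, which concerns rounding with the \emph{exact} matrix square root, onto the \emph{truncated} rounding actually carried out by Algorithm~\ref{alg:rounding}. Write $\rho=\exp(-H)/\tr{\exp(-H)}$ for the solution state (so that $n\|A\|\alpha^*=n\tr{A\rho}$), let $w_i=\sign\big([\exp(-H/2)g]_i\big)$ be the signs produced by the exact square root, and let $x_i=\sign\big([S_lg]_i\big)$ be the output of Algorithm~\ref{alg:rounding}. Since the positive scalar $1/\sqrt{\tr{\exp(-H)}}$ does not affect signs, the $w_i$ are exactly the signs obtained from the columns $v_i$ of $\sqrt{\rho}$, so Corollary~\ref{prop:randomized_rounding_psd} already yields $n\|A\|\alpha^*+\cO(\epsilon n\|A\|)\geq \mathbb{E}[\,w^{T}Aw\,]\geq \tfrac{2}{\pi}n\|A\|\alpha^*-\cO(\epsilon n\|A\|)$. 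It therefore suffices to establish that $\big|\mathbb{E}[x^{T}Ax]-\mathbb{E}[w^{T}Aw]\big|=\cO(\epsilon n\|A\|)$, after which the two-sided claim follows by adding the errors.

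First I would invoke Lemma~\ref{lem:truncatingworks} with $l=\cO(\epsilon^{-1}\log n)$, using that the design of Algorithm~\ref{alg:HU} guarantees $\|H\|=\cO(\log n/\epsilon)$: there is an event $\mathcal{G}$ of probability at least $1-\cO(n^{-1})$ on which $x_i=w_i$ for every index of the ``good'' set $\bar B$, where $B=\{i:\ |\rho_{ii}-1/n|>\epsilon/n\}$. Approximate feasibility with parameter $\epsilon^4$, i.e.\ $\sum_i|\rho_{ii}-1/n|\leq\epsilon^4$, then forces $|B|\cdot\tfrac{\epsilon}{n}\leq\epsilon^4$, hence $|B|\leq n\epsilon^3$; this smallness of the bad set is exactly what makes the estimate close. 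Note that $\mathbb{P}(\mathcal{G}^c)=\cO(n^{-1})$ is precisely the source of the ``with probability $1-n^{-1}$'' in the statement, absorbed here into the error via the $\mathcal{G}/\mathcal{G}^c$ split below.

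Next I would bound the discrepancy of the two quadratic forms. On $\mathcal{G}$ the vectors $x$ and $w$ agree off $B$, so $d:=x-w$ is supported on $B$ with $\|d\|_\infty\leq 2$ and $\|d\|_2\leq 2\sqrt{|B|}$. Writing $x^{T}Ax-w^{T}Aw=2\,d^{T}Aw+d^{T}Ad$ and applying operator-norm (spectral) bounds together with $\|w\|_2=\sqrt n$ gives $|x^{T}Ax-w^{T}Aw|\leq 2\|A\|\,\|d\|_2\sqrt n+\|A\|\,\|d\|_2^2=\cO\big(n\epsilon^{3/2}\|A\|\big)$, which is $\cO(\epsilon n\|A\|)$ because $\epsilon\leq 1$. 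On the complementary event $\mathcal{G}^c$ I would use the crude bound $|x^{T}Ax|,|w^{T}Aw|\leq \|A\|\,\|x\|_2^2=n\|A\|$, so that this event contributes at most $n\|A\|\cdot\mathbb{P}(\mathcal{G}^c)=\cO(\|A\|)$ to the expectation gap, a lower-order term in the regime $\epsilon n=\Omega(1)$. Summing the two contributions yields $\big|\mathbb{E}[x^{T}Ax]-\mathbb{E}[w^{T}Aw]\big|=\cO(\epsilon n\|A\|)$ and, combined with Corollary~\ref{prop:randomized_rounding_psd}, the claimed inequality~\eqref{equ:upperandlowersign}.

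I expect the main obstacle to be the bookkeeping in the previous paragraph: one must use the \emph{operator}-norm estimate rather than a naive entrywise bound such as $\sum_{i\in B,\,j}|A_{ij}|$, since the latter is too lossy by a factor of order $n$ and would produce an error of order $n^2\epsilon^{3/2}\|A\|$. The spectral bound instead exploits $\|w\|_2=\sqrt n$ and $\|d\|_2=\cO(\sqrt n\,\epsilon^{3/2})$ to keep the error at the desired order $\epsilon n\|A\|$. The only other delicate point is reconciling the unconditional expectation guarantee of Corollary~\ref{prop:randomized_rounding_psd} with the high-probability sign-agreement event of Lemma~\ref{lem:truncatingworks}, which the conditioning decomposition above resolves cleanly; the bad-set analysis itself mirrors that already carried out in Proposition~\ref{prop:stability}.
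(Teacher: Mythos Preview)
Your proposal is correct and follows essentially the same route as the paper: invoke Lemma~\ref{lem:truncatingworks} to control the indices on which the truncated signs $x$ can disagree with the exact signs $w$, bound the size of the bad set $B$ via $\epsilon^4$-feasibility, and then control $|x^TAx-w^TAw|$ by a Cauchy--Schwarz/operator-norm argument using $\|w\|_2=\sqrt{n}$ and the smallness of $\|x-w\|_2$, before appealing to Corollary~\ref{prop:randomized_rounding_psd} for the expectation of $w^TAw$. Your treatment is in fact a bit more careful than the paper's in two respects: you get the sharper estimate $|B|\leq n\epsilon^3$ (the paper states $\cO(n\epsilon^2)$, which suffices but is looser given the threshold $\epsilon/n$ in Lemma~\ref{lem:truncatingworks}), and you explicitly handle the reconciliation between the unconditional expectation bound for $w$ and the high-probability sign-agreement event via the $\mathcal{G}/\mathcal{G}^c$ split, which the paper glosses over.
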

\begin{proof}
It follows from Lemma~\ref{lem:truncatingworks} that the output of Algorithm~\ref{alg:rounding} will only differ from the vector obtained by performing the rounding with the approximate solution on a set of size $\cO(n\epsilon^2)$ with probability at least $1-n^{-1}$. This is because, as argued before, by picking $\epsilon^4$ we have at most $\cO(\epsilon^2n)$ diagonal entries that do not satisfy $|\rho_{ii}-1/n|\leq\epsilon/n$.
We will now argue that sign vectors that differ at $\cO(n\epsilon^2)$ position can differ in value  by at most $\cO(\epsilon n\|A\|)$.  Let $x$ be the vector obtained by the ideal rounding and $x'$ the one with the truncated Taylor series. Then there exists a vector $e$ with at most $\cO(n\epsilon^2)$ nonzero entries bounded by $2$ such that $x=x+e$ by our assumption. By Cauchy-Schwarz:
\begin{align*}
|\langle x|A|x\rangle-\langle x'|A|x'\rangle|\leq |\langle e|A|x\rangle|+|\langle x|A|e\rangle|+|\langle e|A|e\rangle|\leq \|A\|\lb2 \|x\|\|e\|+\|e\|^2\rb.
\end{align*}
Now, as $x$ is a binary vector, $\|x\|=\sqrt{n}$ and, as $e$ has at most $\cO(\epsilon^2 n)$ nonzero entries, it follows that $\|e\|=\cO(\epsilon\sqrt{n})$ and we conclude
\begin{align*}
|\langle x|A|x\rangle-\langle x'|A|x'\rangle|=\cO(\epsilon n \| A \|)
\end{align*}
As Theorem~\ref{prop:randomized_rounding} asserts that performing the rounding with the approximate solution is enough to produce a sign vector that satisfies~\eqref{equ:upperandlowersign} in expectation, this yields the claim.  
\end{proof}
The analogous claim, i.e. that truncating still gives rise to good solutions, clearly also holds in the setting of Proposition~\ref{prop:randomized_rounding}.

Thus, we conclude that the rounding can be performed in time $\tcO(ns)$ on a classical computer, as multiplying a vector with $H$ takes time $\tcO(ns)$ and we only need to perform these operations for a total number of steps that is logarithmic in the problem dimension $n$ (but polynomial in inverse accuracy $1/\epsilon$).
 As $ns\leq n^{1.5}\sqrt{s}$ for $s\leq n$, we conclude that the cost of solving the relaxed \textsc{MaxQP SDP} \eqref{equ:maxqpsdp} dominates the cost of rounding.

\section{Conclusion and Outlook}

By adapting ideas from \cite{Tsuda2005,Hazan2006,Steurer2015,Brandao2017b},
we have provided a general meta-algorithm for approximately solving 
convex feasibility problems with psd constraints. \emph{Hamiltonian Updates} is an iterative procedure based on a simple change of variables: represent a trace-normalized, positive semidefinite matrix as $X=\exp(-H)/\tr{\exp (-H)}$. At each step, infeasible directions are penalized in the matrix exponent until an approximately feasible point is reached.
This procedure can be equipped with rigorous convergence guarantees and lends itself to quantum improvements: $X = \exp (-H)\tr{\exp (-H)}$ is a \emph{Gibbs state} and $H$ is the associated \emph{Hamiltonian}. Quantum architectures can produce certain Gibbs states very efficiently. 

We have demonstrated the viability of this approach by considering semidefinite programming relaxations of quadratic problems with binary constraints (\textsc{MaxQP SDP}) \eqref{eq:maxqpsdpnormal}. 
The motivation for considering this practically important problem class was two-fold: (i) \textsc{MaxQP SDP}s have received considerable attention in the (classical) computer science community. Powerful meta-algorithms, like matrix multiplicative weights \cite{Arora2007}, have been designed to solve these SDPs very quickly.
(ii) So far, quantum SDP solvers \cite{Brandao2017a,Apeldoorn2017,Apeldoorn2018,Brandao2017b,Kerenidis2018} have failed to provide speedups for \textsc{MaxQP SDP}s. The quantum runtime associated with these solvers depends on problem-specific parameters that scale particularly poorly for \textsc{MaxQP SDP}s. Moreover, the notions of approximate feasibility championed in these other works are too loose for this class of problem.

The  framework developed in this paper has allowed us to address these points.
Firstly, we shown that a classical implementation of Hamiltonian Updates already improves upon the best existing results. A runtime of $\tcO (n^2 s)$ suffices to find an approximately optimal solution.
Secondly, we have showed that quantum computers do offer additional speedups. A quantum runtime of $\tcO (n^{1.5}s^{0.5+o(1)})$ is sufficient. 
We emphasize that this is the first quantum speedup for %
\textsc{MaxQP SDP} relaxations.
Subsequently, we have devised a classical randomized rounding procedure that converts both quantum and classical solutions into close to optimal solutions of the original quadratic problem.

We note in passing that our algorithm is very robust, in the sense that it only requires the preparation of Gibbs states up to a precision $\epsilon$ that can be taken to be constant in the number of qubits. This requirement is combined with other simple tasks like computational basis measurements and the ability to estimate the expectation value of the target matrix on states. Although the subroutines used in this work to perform these tasks certainly require nontrivial quantum circuits, it would be interesting to identify classes of target matrices $A$ for which preparing the corresponding Gibbs state and estimating the expectation values is feasible on near-term devices.

We believe that the framework presented here lends itself to further applications. 

One concrete application of Hamiltonian Updates, in particular the idea to treat constraints as the statistics of measurements,
is \emph{quantum state tomography}, see e.g.\ \cite{Gross2013} and references therein.
Sample-optimal tomography protocols have revealed that classical postprocessing is the main bottleneck for reconstructing density matrices \cite{Flammia2012,Wright2016,kueng2020low,Haah2017,Guta2018}. 
A classical implementation of Hamiltonian Updates allows for optimizing postprocessing costs at the expense  of  a  worse  dependence  on  accuracy \cite{brandao2020tomography}.
Further improvements are possible by executing the algorithm on a quantum computer, giving a quantum speedup for quantum state tomography.

Another promising and practically relevant application is \emph{binary matrix factorization}. 
A recent line of works \cite{Kueng2019a,Kueng2019b} reduces this problem to a sequence of SDPs. 
Importantly, each SDP corresponds to a \textsc{MAXQP SDP} \eqref{eq:maxqpsdpnormal} with a random rank-one objective $A = |a \rangle \! \langle a|$ and an additional affine constraint $\tr{P X}=n$. Here, $P$ is a fixed low-rank orthoprojector. 
This application, however, is likely going to be more demanding in terms of approximation accuracy. Hence, improving the runtime scaling in inverse accuracy will constitute an important first step that is of independent interest.

\section{Acknowledgments}
We would like to thank Aram Harrow for inspiring discussions. 
Our gratitude extends, in particular, to 
Ronald de Wolf, Andr\'as Gily\'en and Joran van Apeldoorn who provided valuable feedback regarding an earlier version of this draft.
Finally, we would like to thank the anonymous reviewers for in-depth
comments and suggestions.
D.S.F. would like to thank the hospitality of Caltech's Institute for Quantum Information and Matter, where the 
main ideas in this paper were conceived during a visit.
F.G.L.S.B and R.K. acknowledge  funding provided by the Institute for Quantum Information and Matter, an NSF Physics Frontiers Center (NSF Grant PHY-1733907), as well as financial support from Samsung. R.K's work is also supported by the Office of Naval Research (Award N00014-17-1-2146) and the Army Research Office (Award W911NF121054).
D.S.F. acknowledges financial support from VILLUM FONDEN via the QMATH Centre of Excellence (Grant no. 10059), the graduate program TopMath of the Elite Network of Bavaria, the 
TopMath Graduate Center of TUM Graduate School at Technische Universit\"{a}t M\"{u}nchen and by the 
Technische Universit\"at M\"unchen – Institute for Advanced Study,
funded by the German Excellence Initiative and the European Union Seventh Framework
Programme under grant agreement no. 291763.

\bibliographystyle{abbrvnat}
\bibliography{cutnorm}

\appendix

\section{Norms of random matrices}\label{app:randommatrices}

There is an interesting discrepancy in the error scaling between the methods presented here and existing ones by Arora et al.~\cite{Arora2005}: $\| A \|_{\ell_1} $ (existing work) vs $n \| A \|$ (here). The following fundamental relations relate these norms \cite{Nikiforov2009}:
\begin{align*}
\|A\|_{\infty\to 1}\leq n\|A\|,\quad \|A\|_{\infty\to 1}\leq \|A\|_{\ell_1},\quad \|A\|\leq \|A\|_{\ell_1}\leq n\sqrt{\textrm{rank}(A)}\|A\|.
\end{align*}
All inequalities are tight up to constants.
The above inequalities highlight that it is a priori not clear what the correct scaling for errors approximating the cut norm should be. The goal of this section will be to show that for random matrices $A$ with independent, standardized entries that have bounded fourth moment
$n \| A \|$ reproduces the correct error behavior, while $\| A \|_{\ell_1}$ does not.

\begin{prop}[Cut norm of random matrices]
Let $A$ be a $n\times n$ random matrix whose entries are sampled independently from a real-valued distribution $\alpha$ that obeys $\mathbb{E} \left[\alpha \right]=0$, $\mathbb{E} \left[ \alpha^2 \right]=1$ and $\mathbb{E} \left[ \alpha^4 \right] = \cO(1)$. Then,
\begin{align*}
\mathbb{E}\left[ \|A\|_{\ell_1}\right]=\Theta(n^2),\quad\mathbb{E}\left[ \|A\|_{\infty\to1}\right]=\Theta(n^{1.5}),\quad\mathbb{E} \left[ \|A\|\right]=\cO\lb\sqrt{n}\rb.
\end{align*}

\end{prop}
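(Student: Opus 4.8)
The plan is to establish the three estimates separately, factoring out one elementary moment inequality that the fourth-moment hypothesis makes available. By Hölder applied to $\alpha^2 = |\alpha|^{2/3}|\alpha|^{4/3}$ with conjugate exponents $3/2$ and $3$, one gets $\mathbb{E}[\alpha^2] \le \mathbb{E}[|\alpha|]^{2/3}\mathbb{E}[\alpha^4]^{1/3}$, i.e.\ $\mathbb{E}[|\alpha|] \ge \mathbb{E}[\alpha^2]^{3/2}/\mathbb{E}[\alpha^4]^{1/2}$; the same inequality will be reused coordinate-wise below. The $\ell_1$ bound is then immediate: linearity of expectation gives $\mathbb{E}[\|A\|_{\ell_1}] = n^2\,\mathbb{E}[|\alpha|]$, and the interpolation bound together with $\mathbb{E}[|\alpha|]\le\mathbb{E}[\alpha^2]^{1/2}=1$ pins $\mathbb{E}[|\alpha|]=\Theta(1)$ (the lower bound being $\mathbb{E}[\alpha^4]^{-1/2}=\Omega(1)$), so $\mathbb{E}[\|A\|_{\ell_1}]=\Theta(n^2)$.

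The operator-norm estimate $\mathbb{E}[\|A\|]=\cO(\sqrt n)$ is the main technical input and where I expect the real difficulty to lie. I would pass to the Hermitian dilation $\tilde A = \left(\begin{smallmatrix}0 & A\\ A^{T} & 0\end{smallmatrix}\right)$, whose spectrum is $\{\pm\sigma_i(A)\}$, so that $\|\tilde A\|=\|A\|$ and $\tilde A$ is a $2n\times2n$ symmetric matrix whose on-and-above-diagonal entries are independent, mean zero, of variance at most one, and of uniformly bounded fourth moment. For such Wigner-type ensembles $\mathbb{E}[\|\tilde A\|]=\cO(\sqrt n)$, and I would either cite a standard result to this effect or, for self-containedness, run the moment method: bound $\mathbb{E}[\|\tilde A\|^{2k}]\le\mathbb{E}[\mathrm{tr}(\tilde A^{2k})]$, expand as a sum over closed length-$2k$ walks, note that the tree-like doubly-traversed walks contribute the leading $\cO(C_k\,n^{k+1})$ term with $C_k\le 4^k$ the Catalan number, and take a $2k$-th root with $k=\Theta(\log n)$. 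The delicate point---and the reason citing may be preferable---is that controlling walks with edges of high multiplicity under only a fourth-moment assumption requires a truncation argument in the spirit of Bai--Yin, rather than the all-moments bookkeeping available for sub-Gaussian entries.

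The cut-type norm then follows by sandwiching. The upper bound uses the relation $\|A\|_{\infty\to1}\le n\|A\|$ recalled at the start of this appendix, whence $\mathbb{E}[\|A\|_{\infty\to1}]\le n\,\mathbb{E}[\|A\|]=\cO(n^{3/2})$. For the matching lower bound I would fix an arbitrary $y\in\{\pm1\}^n$ and optimize over $x$ alone by choosing $x_i=\sign((Ay)_i)$, which gives the pointwise bound $\|A\|_{\infty\to1}\ge\|Ay\|_1=\sum_i|(Ay)_i|$. Each coordinate $(Ay)_i=\sum_j A_{ij}y_j$ is a sum of $n$ i.i.d.\ mean-zero, variance-one terms, so $\mathbb{E}[(Ay)_i^2]=n$, while a direct expansion of the fourth moment---in which, using $y_j^2=1$, only the single full-collision pattern and the three two-pair patterns survive---gives $\mathbb{E}[(Ay)_i^4]=n\,\mathbb{E}[\alpha^4]+3n(n-1)=\cO(n^2)$. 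The interpolation inequality then yields $\mathbb{E}[|(Ay)_i|]\ge n^{3/2}/\cO(n)=\Omega(\sqrt n)$, and summing over the $n$ coordinates gives $\mathbb{E}[\|A\|_{\infty\to1}]\ge\mathbb{E}[\|Ay\|_1]=\Omega(n^{3/2})$. Combining the two directions yields $\mathbb{E}[\|A\|_{\infty\to1}]=\Theta(n^{3/2})$, completing the proof.
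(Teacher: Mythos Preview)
Your argument is correct. The treatment of $\mathbb{E}[\|A\|_{\ell_1}]$ and $\mathbb{E}[\|A\|]$ matches the paper's (the paper simply cites Lata{\l}a for the operator-norm bound where you sketch a moment-method route and flag the Bai--Yin truncation issue). For $\mathbb{E}[\|A\|_{\infty\to1}]$ the paper takes a genuinely different path: rather than bounding above via $\|A\|_{\infty\to1}\le n\|A\|$ and below via a fixed test vector $y$, it invokes a single two-sided comparison
\[
\tfrac{1}{\sqrt2}\,\mathbb{E}\!\left[\|A\|_{\mathrm{col}}\right] \;\le\; \mathbb{E}\!\left[\|A\|_{\infty\to1}\right] \;\le\; 4\,\mathbb{E}\!\left[\|A\|_{\mathrm{col}}\right],
\qquad \|A\|_{\mathrm{col}}=\sum_i\Big(\sum_j A_{ij}^2\Big)^{1/2},
\]
due to Gittens, and then shows $\mathbb{E}[\|a\|_2]=\Theta(\sqrt n)$ for a single column via Jensen (upper) and the pointwise inequality $\sqrt{x}\ge\tfrac12(1+x-(x-1)^2)$ applied to $x=\|a\|_2^2/n$ (lower). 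Your route is arguably more self-contained on the lower side---it reuses exactly the same H\"older interpolation $\mathbb{E}|X|\ge\mathbb{E}[X^2]^{3/2}/\mathbb{E}[X^4]^{1/2}$ that drives the $\ell_1$ bound and needs no external comparison lemma---but it chains the upper bound on $\|A\|_{\infty\to1}$ through the operator-norm estimate, so your three claims are no longer logically independent, whereas in the paper they are.
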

\begin{proof}
We refer to Latala's work for the third claim \cite{Latala2005}.
A key ingredient for establishing the second claim is 
~\cite[Corollary 3.10]{Gittens2013}:
\begin{align*}
\tfrac{1}{ \sqrt{2}} \mathbb{E}\left(\|A\|_{\mathrm{col}}\right) \leq \mathbb{E}\lb\|A\|_{\infty \rightarrow 1}\rb \leq 4 \mathbb{E}\left(\|A\|_{\mathrm{col}}\right),
\end{align*}
where $\|A \|_{\mathrm{col}} = \sum_i \sqrt{ \sum_j \left[A\right]_{ij}^2 }$ is the sum of the Euclidean norms of the columns of $A$.
Now, note that the entries of $A$ are i.i.d.\ copies of the random variable $\alpha$. 
In turn, the expected column norm of $A$ is just $n$ times the expected Euclidean norm of the random vector $a=(a_1,\ldots,a_n)^T$, where each $a_i$ is an independent copy of $\alpha$. Jensen's inequality then asserts
\begin{equation*}
\mathbb{E}\left[ \| a \|_{2}\right] \leq \left( \mathbb{E} \left[ \sum_{i=1}^n a_i^2 \right] \right)^{1/2} = \sqrt{n \mathbb{E} \left[ \alpha^2 \right]} = \sqrt{n},
\end{equation*}
while a matching lower bound follows from $\sqrt{x} \geq \tfrac{1}{2}(1+x-(x-1)^2)$. Indeed, define $y = \| a \|_{2}^2/n= \tfrac{1}{n} \sum_{i=1}^n a_i^2$ and note that this new random variable obeys $\mathbb{E}[y]=1$ and $\mathbb{E}[(y-1)^2]= \mathcal{O}(1/n)$ by assumption. This ensures a matching lower bound:
\begin{align*}
\mathbb{E} \left[ \| a \|_{2} \right] =\sqrt{n} \mathbb{E} \left[\sqrt{y} \right] \geq \tfrac{\sqrt{n}}{2} \left( 1+ \mathbb{E} \left[ y \right] - \mathbb{E}\left[(y-1)^2\right]\right) = \Omega(\sqrt{n}),
\end{align*}
This ensures $\mathbb{E} \left[ \| A \|_{\infty \to 1} \right] = n \mathbb{E} \left[ \| a \|_{2} \right] = \Theta (n^{3/2})$ and establishes the second claim.

The first claim follows from the fact that the fourth-moment bound $\mathbb{E} \left[ \alpha^4 \right] = \mathcal{O}(1)$ demands $\mathbb{E} \left[ | \alpha | \right] = \Theta (1)$. Combine this with i.i.d.\ entries of the random matrix $A$ to conclude
\begin{equation*}
\mathbb{E} \left[ \| A \|_{\ell_1} \right] = n^2 \mathbb{E} \left[ | \alpha| \right] = \Theta (n^2).
\end{equation*}
\end{proof}

In the case of random matrices with Gaussian entries, such as in the case of the SK-model, we have also have exponential concentration around these expectation values, as shown in~\cite[Theorem 1.2]{Panchenko2013}.

Another family of random matrices for which we expect that $n \| \cdot \|$ provides the correct error scaling for cut norms are matrices of the form $B = A^* A$, where 
 $A$ again has i.i.d. entries of mean $0$ and unit variance. Indeed, in~\cite{Rebrova2018} the authors show that
\begin{align*}
\mathbb{E}\lb \|A\|_{\infty\to2}\rb\leq \cO\lb \sqrt{n}\mathbb{E}\lb \|A\|_{2\to\infty}\rb\rb.
\end{align*}
with high probability. 
One can combine these recent results with more standard relations, like
$\|A\|_{2\to\infty}^2=\|B\|_{1\to\infty}$, $\|A\|_{2\to\infty}\leq \|A\|$ and $\|B\|=\|A\|^2$. 
This asserts $ \mathbb{E} \left[\| B \|_{1 \to \infty} \right] \leq n \mathbb{E} \left[ \| B \| \right]=\mathcal{O}(n^2)$, while $\mathbb{E} \left[ \| B \|_{\ell_1} \right] = \Omega (n^{2.5})$. 

\section{Random instances of the \textsc{MaxQP SDP}}\label{app:randominstances}
The following random instances of the \textsc{MaxQP SDP} have received significant attention in recent literature~\cite{montanari_semidefinite_2016,Kunisky2019,Javanmard_2016}: we define the Gaussian orthogonal ensemble (GOE) to be the random matrix distribution over symmetric $n\times n$ matrices $A$ with
i.i.d. normal entries $A_{ii}\sim \mathcal{N}(0,2/n)$ on the diagonal and $A_{ij}\sim \mathcal{N}(0,1/n)$ for $i<j$.
For a parameter $\lambda \geq 0$, we also define the deformed GOE as $B(\lambda)= (\lambda/n)\mathbf{1}\mathbf{1}^T+A$,
where $\mathbf{1}=(1,\ldots,1)^T \in \mathbb{R}^n$ is the vector of ones and $A$ is sampled from the GOE.

This random matrix model is intimately connected to the Sherrington-Kirkpatric model and determining the MaxCut of random graphs. We will instantiate the notation from~\cite{montanari_semidefinite_2016} and write \textsc{MaxQP}$(B(\lambda))$ for the optimal value of the \textsc{MaxQP SDP} with input $B(\lambda)$.
In~\cite{montanari_semidefinite_2016}, Montanari and Sen showed the following interesting result.

\begin{thm}[Theorem 5 of~\cite{montanari_semidefinite_2016}]\label{thm:phase_trans_sdp}
Fix $\lambda \geq 0$ and sample $B(\lambda)$ from the associated deformed GOE.
\begin{enumerate}
    \item If $0 \leq \lambda \leq 1$, then for any $\epsilon>0$, we have \textsc{MaxQP}$(B(\lambda))/n\in[2-\epsilon,2+\epsilon]$ with probability converging to $1$ as $n\to\infty$.
    \item Else if $\lambda>1$, then there exists a constant $\Delta(\lambda)>0$ such that \textsc{MaxQP}$(B(\lambda))/n\geq 2+\Delta(\lambda)$ with probability converging to $1$ as $n\to\infty$.
\end{enumerate}
\end{thm}

This seemingly abstract theorem has profound implications to our work. To appreciate them,  it is worth noting that 
for $0 \leq \lambda \leq 1$, it is known that the maximal eigenvalue of $B(\lambda)$ is also contained in the interval $[2-\epsilon,2+\epsilon]$ with high probability~\cite{Knowles_2013}. Thus, the optimal value of \textsc{MaxQP} is comparable in size to the re-scaled largest eigenvalue $n\|B (\lambda) \|$. Moreover, it also follows that for these instances $\|B(\lambda)\|_{\ell_1}=\Omega($\textsc{MaxQP}$(B(\lambda)) \sqrt{n}$ in expectation.

On the other hand, if $\lambda>1$, the largest eigenvalue of the matrix and  $B(\lambda)$ is given by $\lambda+\lambda^{-1}$~\cite{Knowles_2013}. We see that both the largest eigenvalue and the value of \textsc{MaxQP}$(B(\lambda))$ go through a phase transition at $\lambda=1$.

Let us now focus on the case $\lambda<1$. Note that the dual of the \textsc{MaxQP SDP} with target matrix $A$ is given by optimizing over $y\in\R^{n+1}$ as follows:
\begin{align}
\textrm{minimize} \quad& n y_0+\sum_{i=1}^n y_i & (\textsc{dual MaxQP SDP}) \label{eq:dualmaxqpsdpnormal} \\
\quad &\textrm{subject to} \quad y_0I+\mathrm{diag}(y)\geq A,\; y_i\geq 0   \nonumber,
\end{align}
where $\mathrm{diag}(y)=\mathrm{diag}(y_1,\ldots,y_n)$ denotes the diagonal matrix with entries $y_i$ for $1 \leq i \leq n$.
The additional dual variable $y_0$ arises from also incorporating the redundant constraint $\tr{X}\leq n$ in the associated primal \textsc{SDP}~\eqref{equ:maxqpsdp}. This choice is motivated by the observation that previous quantum \textsc{SDP} solvers~\cite{Brandao2017b,Apeldoorn2017,Apeldoorn2018} actually output approximately optimal solutions for the dual SDP with this redundant constraint. Note once again that~\cite{Apeldoorn2018} also proposes a primal-only algorithm that bears strong similarities to ours. However, their version of the algorithm does not readily admit a quantum speedup because of the way they implement the diagonal constraints. We refer to Sec.~\ref{sub:related} for more details.

Theorem~\ref{thm:phase_trans_sdp} implies that we can always find a trivial feasible point that is approximately optimal and sparse for Eq.~\eqref{eq:dualmaxqpsdpnormal}. Indeed, for any $\epsilon>0$ fixed, $\gamma=(2+\epsilon)e_0$ will be feasible with probability $1$ in the limit $n\to\infty$. We conclude that for $\epsilon>0$ fixed and in the regime $\lambda<1$, solving the dual problem is trivial. So, existing solvers that output a feasible, approximately optimal dual solution~\cite{Brandao2017b,Apeldoorn2017,Apeldoorn2018} for $\epsilon$ of constant order are of little practical interest for the problem at hand.
In stark contrast, feasible and approximately optimal solutions of the primal problem are still relevant, because they can be used to perform the rounding.
It is also important to note that the proof of~\cite[Theorem 5]{montanari_semidefinite_2016} is constructive.
Indeed, let $P_{\delta}$ denote the the projector onto the range of the best rank-$\delta n$ approximation of $B(\lambda)$, that is, the subspace spanned by the the eigenvectors corresponding to the largest $\delta n$ eigenvalues of $B(\lambda)$. Moreover, let $D$ with $(D)_{ii}=(P_\delta)_{ii}$ be the restriction of this projector to the main diagonal.
By construction, $(D)_{ii} =(P_{\delta})_{ii}>0$ almost surely for all $1 \leq i \leq n$. And, in turn,  $X=D^{-\frac{1}{2}}P_{\delta}D^{-\frac{1}{2}}$ must be a feasible point of the primal \textsc{MaxQP SDP}~\eqref{eq:maxqpsdpnormal}.
Montanari and Sen then show that $\tr{B(\lambda)X}\geq 2-\epsilon$ for some $\epsilon=\Omega(\delta)$. This establishes the first part of Thm.~\ref{thm:phase_trans_sdp}. In summary, diagonalizing $B(\lambda)$ and computing $X$ is sufficient to obtain an approximately optimal primal solution. Suppressing the error dependence on $\epsilon$, diagonalizing $B(\lambda)$ takes $\cO(n^{\omega})$ time, while our classical algorithm to solve \textsc{MaxQP SDP} takes the same up to polylogarithmic factors. The quantum runtime, however, is of order $\tilde{\cO}(n^{2+o(1)})$ only. Thus, we see that for $\epsilon=\Theta(1)$ we obtain a quantum speedup as soon as the exponent of matrix multiplication obeys $\omega>2$ (which is widely believed).

Let us now discuss the regime where $\lambda>1$. To the best of our knowledge, 
the limit value of \textsc{MaxQP}$(B(\lambda))/n$ has not been identified yet.
Ref.~\cite{montanari_semidefinite_2016}, however, shows that it must be strictly larger than $2$ by constructing a sequence of feasible points that continues to saturate such a lower bound.
However, in contrast to before ($\lambda <1$), it is not clear that this sequence of feasible points is approximately optimal. In fact, it is not even known if \textsc{MaxQP}$(B(\lambda))/n<\lambda+\lambda^{-1}$.'
But numerical evidence in favor of this behavior is provided in~\cite{Javanmard_2016}. That is, 
the optimal value of the SDP is strictly smaller than the trivial eigenvalue upper bound. Thus, if it is indeed the case that \textsc{MaxQP}$(B(\lambda))/n+\mu<\lambda+\lambda^{-1}$ for some $\mu>0$ as $n\to\infty$, then the dual SDP must be nontrivial to solve. %
Still, a direct solution of the primal problem is arguably more relevant, because it can be used to perform randomized rounding.
Nevertheless, we will now argue that previous quantum solvers~\cite{Brandao2017b,Apeldoorn2017,Apeldoorn2018} do not give rise to a speedup for the dual problem assuming that \textsc{MaxQP}$(B(\lambda))/n+\mu<\lambda+\lambda^{-1}$.

Before we move on, we once more emphasize that our algorithm considers the primal problem only. This is in stark contrast to existing quantum SDP solvers that address both primal and dual problems.
This fully primal approach has the advantage that the runtime of the algorithm does not depend on problem-specific parameters like the $\|\cdot\|_{\ell_1}$ norm of approximately optimal dual solutions, as mentioned in Sec.~\ref{sub:related}. We will now show that this becomes a real advantage for solving the \textsc{MaxQP SDP} for $B(\lambda)$ in the regime $\lambda >1$. 

\begin{prop}\label{prop:nosparsedualsolution}
Let $y^\sharp$ be a $\delta$-approximately optimal solution to \textsc{dual MaxQP SDP}~\eqref{eq:dualmaxqpsdpnormal} for $B(\lambda)$ with $\lambda>1$. Assume, moreover, that there exists a $\mu >0$ such that that 
\begin{align*}
    \textsc{MaxQP}(B(\lambda))/n+\mu\leq \|B(\lambda)\|
\end{align*}
with high probability. Then, with high probablity,  
every $\delta$-optimal dual solution satisfies
\begin{align}\label{equ:sparsity_dual}
\left\|y^\sharp\right\|_{\ell_1}=\Omega(n) \quad \text{as long as $\delta < \mu$.}
\end{align}
\end{prop}
\begin{proof}
Let $x^*$ be the value achieved by $y^\sharp$ and set $\eta=\mu-\delta>0$. 
If we condition on the event $x^*+\eta\leq \|B(\lambda)\|$, the feasibility constraint in the dual \textsc{SDP}~\eqref{eq:dualmaxqpsdpnormal} enforces $y_0+\eta\leq \|B(\lambda)\|$. To see this, note that the value of the dual \textsc{SDP} is clearly monotonically increasing on the other entries, which are all positive. We will now show that in order for the matrix inequality
\begin{align}\label{equ:feasible_dual}
    y_0I+\mathrm{diag}(y^\sharp)\geq B(\lambda)
\end{align}
to hold, then the vector $y^\sharp\in\R^n$ must have large $\ell_1$ norm. In order to do this, we will resort to an approximate leading eigenvector construction by~\cite{montanari_semidefinite_2016}. This construction will have the desirable property that it is not too ``spiky''. In turn, this approximate leading eigenvector will have a small overlap with each entry of the diagonal matrix $\mathrm{diag}(y)$.

We will make extensive use of the results of~\cite{montanari_semidefinite_2016}, so we will also follow their notation and normalizations for this proof. Define $u_1$ to be the eigenvector corresponding to the largest eigenvalue of $B(\lambda)/n$. Moreover, define the ``capping'' function $R(x)$ as
\begin{align*}
R(x)=
\begin{cases}
-1,\quad\textrm{if } x<-1\\
x,\quad\textrm{if } -1\leq x\leq1\\
1,\quad\textrm{if }x>1\\
\end{cases}
\end{align*}
For some $\epsilon >0$, Montanari and Sen then define the vector  $\varphi$ componentwise as $\varphi_i=R(\epsilon\sqrt{n}u_{1,i})$. In \cite[Lemma G.2]{montanari_semidefinite_2016}, they then establish
\begin{align}\label{equ:approximate_eigenvalue}
    \left|\frac{1}{n}\tr{\ketbra{\varphi}B(\lambda)}-\epsilon^2\|B(\lambda)\|\right|=\mathcal{O}(\epsilon^4) \quad \text{with high probability.}
\end{align}
On top of that, in Eq. (163) they show that:
\begin{align}\label{equ:norm_inequality_phi}
    \frac{1}{n}\left\|\epsilon\sqrt{n}u_{1}-\varphi\right\|_2^2=\cO(\epsilon^6).
\end{align}
We can now use the vector $\varphi$ to probe positive semidefiniteness in Eq.~\eqref{equ:feasible_dual}:
\begin{align}\label{equ:matrix_ineq_phi}
 \tr{\ketbra{\varphi}B(\lambda)} \leq   \tr{\ketbra{\varphi}( y_0I+\mathrm{diag}(y)}.
\end{align}
Let us start by estimating the right-hand side of this scalar inequality. Combining Eq.~\eqref{equ:norm_inequality_phi} with a reverse triangle inequality yields
\begin{align*}
    \tr{\ketbra{\varphi} y_0I}\leq y_0n(\epsilon^2+\epsilon^4).
\end{align*}
Furthermore, by construction, the entries of $\varphi$ squared are at most $1$. Thus,
\begin{align*}
    \tr{\ketbra{\varphi} \mathrm{diag}(y)} = \sum\limits_{i=1}^n |\varphi_i|^2 y_i \leq \sum\limits_{i=1}^ny_i
\end{align*}
and we can combine both arguments to obtain an upper bound on the r.h.s.\ of Eq.~\eqref{equ:matrix_ineq_phi}.
We can also lower-bound the l.h.s.
Eq.~\eqref{equ:approximate_eigenvalue} asserts
\begin{align*}
    \tr{\ketbra{\varphi}B(\lambda)}\geq (\epsilon^2+C\epsilon^4)n,
\end{align*}
for some constant universal $C>0$. Putting these inequalities together we conclude
\begin{align*}
 y_0n(\epsilon^2+\epsilon^4)+\sum\limits_{i=1}^ny_i\geq \epsilon^2\|B(\lambda)\|n-C\epsilon^4n.
\end{align*}
Dividing the inequality by $\epsilon^2 n$ and using the fact that our conditioning guarantees $y_0+\eta\leq \|B(\lambda)\|$ we conclude
\begin{align*}
    (\|B(\lambda)\|-\eta)(1+\epsilon^2)+(n\epsilon^2)^{-1}\sum\limits_{i=1}^ny_i\geq \|B(\lambda)\|-C\epsilon^4.
\end{align*}
Rearranging the terms produces
\begin{align*}
    n^{-1}\sum\limits_{i=1}^ny_i\geq \epsilon^2(\eta-\epsilon^2\|B(\lambda)\|-C\epsilon^4).
\end{align*}
Thus, we can pick $\epsilon$ small and $n$ large enough to require that the right-hand side of the inequality above is of constant order (recall that $\|B(\lambda)\|$ does not depend on $n$). In contrast, the average $n^{-1}\sum\limits_{i=1}^ny_i$ is of constant order if and only if $\|y\|_{\ell_1}=\Omega(n)$.
\end{proof}

As the primal-dual methods of~\cite{Brandao2017b,Apeldoorn2017,Apeldoorn2018} have a superquadratic dependency on $\|y\|_{\ell_1}$ for approximately optimal solutions, we conclude that their performance is worse than our algorithm for instances of \textsc{MaxQP SDP} with $B(\lambda)$ for $\lambda>1$ with high probability. Of course, this only holds provided that the typical value of the SDP is a constant fraction away from the spectrum of $B(\lambda)$, as indicated by numerical evidence.

\section{Comparison to previous work and techniques for further improvement}\label{app:scalingerror}

This section is devoted to giving a brief overview over some promising proposals for speeding up SDP solvers for problems with a similar structure. The main message is that these unfortunately do not immediately apply to the general \textsc{MaxQP SDP} setting, especially for random signed matrices.

The main classical bottleneck behind Algorithm~\ref{alg:HU} is computing matrix exponentials. Dimension reduction techniques, like Johnson-Lindenstrauss, can sometimes considerably speed up this process, see e.g.~\cite{Arora2007}. There, Arora and Kale apply this idea to solve the \textsc{MaxCut SDP} up to a multiplicative error of $\cO(\epsilon nd)$ in time $\tcO(nd)$ for a $d$ regular graph on $n$ vertices. Moreover, sparsification techniques~\cite{1911.07306} can be used to bring this complexity down to $\tcO(n)$ in the adjacency list model and $\tcO(\min(nd,n^{1.5}d^{-1}))$ in the adjacency matrix input model. Note that the \textsc{MaxCut SDP} is just an instance of the \textsc{MaxQP SDP}, as both have the same constraints. The only difference is that the \textsc{MaxCut SDP} has the additional structure that the target matrix is the weighted adjacency matrix of a graph and, thus, has positive entries. %
The extra assumption of non-negative entries is a key ingredient behind the fastest approximate \textsc{MaxCUT SDP} solvers which would outperform the main results of this work. 
It is therefore worthwhile to discuss why these ideas do not readily extend to more general problem instances.

First, note that the fact that the entries of the target matrix has positive entries is crucial for the soundness of the oracle presented in~\cite[Theorem 5.2]{Arora2007}. This already rules out the possibility of directly applying their methods to \textsc{MAXQP} if the matrix $A$ has negative entries. 
The second crucial observation of~\cite{Arora2007} is that it is possible to rewrite the \textsc{MaxCut SDP} as:
\begin{align}\label{equ:maxcutvector}
\textrm{minimize} & \quad  \sum\limits_{i,j}\left[A\right]_{ij}\|v_i-v_j\|^2  \\
\textrm{subject to} & \quad \|v_i\|^2 =1, \;
 v_i\in\R^n,\; i \in \left[n\right] \nonumber
\end{align}
In this reformulation, the vectors $v_i$ correspond to columns of a Cholesky-decomposition associated with feasible points: $\left[X \right]_{ij} = \langle v_i| v_j \rangle$. 
Next, recall the following variation of the polarization identity:
\begin{align*}
\langle u|v \rangle=\frac{1}{2}\lb \|u\|^2+\|v\|^2-\|u-v\|^2\rb.
\end{align*}
This allows us to rewrite the original objective function as
\begin{align*}
\tr{AX}=\sum\limits_{i,j}\left[A \right]_{ij}\langle v_i|v_j \rangle=
\frac{1}{2}\sum\limits_{i,j}\left[A \right]_{ij}\lb \|v_i\|^2+\|v_j\|^2-\|v_i-v_j\|^2\rb.
\end{align*}
Feasibility of $X$ then demands $1=\langle i|X|i \rangle = \langle v_i|v_i \rangle = \| v_i \|^2$ and we, thus, only need to optimize over  $\|v_i-v_j\|^2$. Subsequently, Arora and Kale apply dimensionality reduction techniques to compute approximate vectors $v_i',v_j'$ that satisfy:
\begin{align}\label{equ:equationvectorsJL}
\left|\|v_i-v_j\|^2-\|v_i'-v_j'\|^2\right|\leq \epsilon \|v_i-v_j\|^2.
\end{align}
in time $\cO(ns)$. 
A priori, similar techniques can be applied to the more general \textsc{MaxQP SDP} \eqref{equ:maxqpsdp}. However, sign problems can substantially affect the approximation error. Pointwise estimates like the one in~\eqref{equ:equationvectorsJL} only suffice to estimate $\tr{XA}$ up to an error of order $\cO(\epsilon\|A\|_{\ell_1})$.
This is fine for matrices with non-negative entries, where this error scaling is comparable to the size of the optimal SDP solution. Matrix entries with different signs, however, may lead to cancellations that result in a much smaller size of the optimal SDP solution. 
In summary: adapting the ideas of Arora and Kale \cite{Arora2007} is advisable in situations where the problem matrix obeys $\| A \|_{\ell_1} = \Theta (n \|A \|)$. This ensures a correct error behavior and dimension reduction allows for reducing the classical runtime to  
 $\tcO(ns)$. 

Another important technique for complexity reduction in SDPs is sparsification. Once again, one seminal example is \textsc{MaxCut}, where spectral sparsification methods can be used to reduce the complexity~\cite{Spielman2011,Kyng2016}. Here, the idea is to find a (usually random) sparser matrix $B$ that has approximately the same cut value as $A$ and then run the algorithm on $B$ instead. Unfortunately, once again signed matrix entries render this approach problematic. Up to our knowledge, the best current sparsification results available for the $\infty\to1$ norm are those of~\cite[Chapter 3]{Gittens2013}. There, the author shows in Corollary 3.9 that if we let $B$ be a random matrix with independent random entries s.t. $\mathbb{E}(B_{ij})=A_{ij}$, then
\begin{align*}
\mathbb{E}\left[ \|A-B\|_{\infty\to 1}\right] \leq2 \sum_{i}\sqrt{ \sum_j \operatorname{Var}[ B_{ij}]}.
\end{align*}
A necessary pre-requisite for accurate sparsification using the aforementioned result is therefore 
\begin{align*}
2 \sum_{i}\sqrt{ \sum_j \operatorname{Var}[ B_{ij}]}=\cO (\epsilon n \|A \|)
\end{align*}
It seems unlikely that it is possible to obtain good and general sparsification bounds from this result in our setting. To see why this is the case, note that in order for $B$ to be sparse in expectation, we require that $\mathbb{P}(B_{ij}=0)=p_{ij}$ for suitably large $p_{ij}$. This will result in a matrix that has, in expectation, 
$
\sum_{ij}(1-p_{i,j})
$
nonzero entries. 
To make sure that the number of nonzero entries is not $\cO(n^2)$, we need to set many $1-p_{ij}=o(1)$.
Now note that $\mathbb{P}[B_{ij}=0]=p_{ij}$ and $\mathbb{E}[B_{ij}]=A_{ij}$ necessarily enforce  $\mathbb{E}[(B_{ij}^2)]\geq \tfrac{p_{ij}}{1-p_{ij}}A^2_{ij}$.
Thus, we see that we expect this technique to only work in the regime where $A$ has many columns with entries that are $o(1)$ and can be neglected with high probability. Roughly speaking, this corresponds to the regime in which
$
\|A\|_{\textrm{col}}\ll n\|A\|.
$
It is then easy to see that the random matrices considered before do not satisfy this and, thus, we do not expect that those instances can be sparsified.

Last, we emphasize that it easy to construct examples where the error term $\| A \|_{\ell_1}$ conveys the right scaling, not $n \| A \|$. A concrete example are extremely sparse matrices, where all but $s \ll n$ of the entries are zero.

\end{document}